\newcommand{\blind}{1}
\newcommand{\cH}{\mathcal{H}}
\newcommand{\cL}{\mathcal{L}}
\newcommand{\indep}{\perp \!\!\! \perp}
\newcommand{\bE}{\mathbb{E}}
\newcommand{\bP}{\mathbb{P}}
\newcommand{\cN}{\mathcal{N}}
\theoremstyle{plain}
\newtheorem{theorem}{Theorem}[section]
\newtheorem{lemma}[theorem]{Lemma}
\newtheorem{proposition}{Proposition}
\theoremstyle{definition}
\newtheorem{definition}[theorem]{Definition}
\newtheorem{assumption}{Assumption}
\newtheorem*{example}{Example}
\newtheorem{remark}{Remark}
\theoremstyle{remark}
\begin{document}
\date{}

\def\spacingset#1{\renewcommand{\baselinestretch}%
{#1}\small\normalsize} \spacingset{1}

%%%%%%%%%%%%%%%%%%%%%%%%%%%%%%%%%%%%%%%%%%%%%%%%%%%%%%%%%%%%%%%%%%%%%%%%%%%%%%

\if1\blind
{
  \title{ Resampling-free Inference for Time Series via RKHS Embedding}
  \author{Deep Ghoshal \hspace{.2cm}\\
    Department of Statistics, University of Illinois at Urbana-Champaign\\
    and \\
    Xiaofeng Shao \\
   Department of Statistics and Data Science,
 Washington University in St Louis}
  \maketitle
} \fi

\if0\blind
{
  \bigskip
  \bigskip
  \bigskip
  \begin{center}
    {\LARGE\bf Resampling-free Inference for Time Series }
\end{center}
  \medskip
} \fi

\bigskip
\begin{abstract}
In this article, we study nonparametric inference problems in the context of multivariate or functional time series, including testing for goodness-of-fit, the presence of a change point in the marginal distribution, and the independence of two time series, among others. Most methodologies available in the existing literature address these problems by employing a bandwidth-dependent bootstrap or subsampling approach, which can be computationally expensive and/or sensitive to the choice of bandwidth.  To address these limitations, we propose a novel class of  kernel-based tests by embedding the data into a reproducing kernel Hilbert space, and construct test statistics using sample splitting, projection, and self-normalization (SN) techniques. Through a new conditioning technique, we demonstrate that our test statistics have pivotal limiting null distributions under strong mixing and mild moment assumptions. We also analyze the limiting power of our tests under local alternatives. Finally, we showcase the superior size accuracy and computational efficiency of our methods as compared to some existing ones.
\end{abstract}

\noindent%
{\it Keywords:} Change-point detection, Functional data,  Reproducing kernel Hilbert space, Sample splitting, Self-normalization, Object-valued data.
%\vfill

\spacingset{1} 
\section{Introduction}
\label{sec:intro}
Hypothesis testing is a ubiquitous problem in time series analysis. It helps us validate intuitions that have been used to model a particular dataset. Examples of such commonly faced inferential problems include testing whether a given series has a pre-specified mean, there is a change point in the mean of the series, it has Gaussian marginal distribution, a change point is present in the marginal distribution, and it is independent of another observed time series etc. While the first two problems can be categorized into testing for a finite dimensional parameter, all others are concerned with inference for an infinite-dimensional parameter. There has been a vast amount of work on inference in time series and some traditional nonparametric
approaches include block bootstrapping (\cite{kunsch1989jackknife}, \cite{liu1992moving}), subsampling (\cite{politis1994large}), and blockwise empirical likelihood
(\cite{kitamura1997empirical}) among others. However, these methods share a couple of common drawbacks. Firstly, the resampling methods most often turn out to be computationally expensive in large scale applications. Secondly, the finite sample performance of these methods is quite sensitive to the choice of the block-length involved and the optimal choice of the block size is typically a difficult problem. The existing block size selection methods may involve another ad hoc user-chosen number and their performance may be suboptimal, see Chapter 7 in  \cite{lahiri2003resampling}.\\
\indent In this article, we develop a novel SN-based approach towards inference for infinite dimensional parameters. 
Self-normalization (SN, hereafter) was developed by \cite{shao2010self} for the inference of a finite dimensional parameter of a stationary time series, extending the bandwidth-free inference first introduced by \cite{kiefer2000simple} and \cite{lobato2001testing} in the context of time series regression and white
noise testing, respectively. For a detailed review of the use of self-normalization for the inference of a low-dimensional parameter in time series, we refer the reader to \cite{shao2015self}. In the recent works of \cite{wang2020hypothesis}, \cite{wang2022inference} etc., the authors have extended the SN idea to the inference for high dimensional parameters. However, it seems that the use of SN in testing for an infinite dimensional parameter is less explored. \\

\indent To illustrate the use of our new test, we consider the nonparametric problem of testing for goodness-of-fit, the presence of a change point in the marginal distribution,  and independence of two time series. The key methodological innovation is to embed the original series into a reproducing kernel Hilbert space (RKHS) and convert the aforementioned problems into the inference for a parameter of this RKHS-valued series. We then split the sample into training and testing parts. Based on the training sample, we estimate a projection direction in the RKHS, followed by projecting the data points in the testing sample along this estimated function to form a univariate sequence.  Finally, we use this one-dimensional process to form our self-normalized statistics. On the theory front, we first demonstrate how the original testing problem is equivalent to testing for some finite-dimensional parameter of this one-dimensional process. To obtain the limiting distributions, we adopt a new conditioning technique to establish that our test statistics have pivotal asymptotic null distributions and derive their limiting power under a sequence of local alternatives. Thus, we achieve the goal of doing resampling-free inference for infinite-dimensional parameters in time series, which are computationally much more efficient than the resampling-based counterparts. In addition, our method is applicable to vector time series,   functional time series in Hilbert space, and object-valued time series, as we demonstrate in methodology description and simulation studies. \\

\indent Our work is partially inspired by \cite{kim2024dimension}, which used a sample splitting approach to do dimension-agnostic inference for iid (independent and identically distributed) data.  In comparison, their work is not tailored to incorporate the underlying temporal dependence in a time series. Moreover, for the inference of an infinite-dimensional parameter, their work is only limited to a class of degenerate U-statistics, which includes goodness-of-fit testing, whereas we propose hypothesis tests for a significantly broader class of problems,  including change point testing for marginal distribution. It is worthwhile to mention that \cite{gao2023dimension} and \cite{zhang2024another} have recently adopted the sample splitting and self-normalization approach to inference in time series data. However, both of those works focus on potentially high-dimensional but still finite-dimensional  parameters. Very recently, \cite{Zhang27032025} have proposed a methodology based on sample-splitting and self-normalization to test for functional parameters in time series data. While there are some similarities between their methodology and ours, a main advantage of our approach is that due to the use of kernels, we are able to handle Euclidean data as well as functional or non-Euclidean data in a unified fashion, whereas the method of \cite{Zhang27032025} seems tailored to Euclidean time series. 
However, it should be noted that they are able to do inference for infinite-dimensional parameters like the marginal quantile function, the spectral distribution function etc., whereas our RKHS-based approach helps us to tackle problem like testing independence between two time series. So the scope of applicabilities for these two papers are different.\\

\indent The remainder of the paper is organized as follows. We present a brief review of RKHS in  Section~\ref{subsection:review_rkhs}. Then we present our methodology for goodness-of-fit testing in Section~\ref{sec:method_GOF}, for change point testing in Section~\ref{sec:method_CPD} and for the independence testing between two time series in Section~\ref{sec:method_extensions}. We develop the relevant theory in Section~\ref{sec:theory} and defer all the proofs to the supplementary material. 
%We then briefly discuss the applicability of our methods and their theoretical guarantees in the context of functional time series in Section~\ref{sec:FDA}. 
Section~\ref{sec:simulations} examines the finite sample performance of our methods in terms of size, power and computational time  in comparison to some existing counterparts. Section~\ref{sec:conclusion} concludes. 
\\

\indent Now, we set up a few notations that will be used throughout this article. We denote the set of real numbers and integers by $\mathbb{R}$ and $\mathbb{Z}$, respectively. For any real number $x$, let us denote the greatest integer not exceeding $x$ by $\lfloor x \rfloor$ and the smallest integer greater than or equal to $x$ by $\lceil x \rceil$. For any random element $Z$, we use $Z\sim P_Z$ to denote the distribution of $Z$ as $P_Z$. We denote the space of all real-valued functions on $[a,b]$ which are right continuous with left limits (commonly known as "cadlag" functions) as $D[a,b]$. For any sequence of random variables, we use $\overset{D}{\to}$ and $\overset{\mathbb{P}}{\to}$ to denote convergence in distribution and convergence in probability, respectively. The symbol $\leadsto$ denotes the weak convergence of an associated stochastic process in an appropriate space. Finally, we use $\overset{D}{=}$ to denote equality in distribution.
\section{Methodology}
\label{sec:mathematical setups}
\subsection{Review of RKHS}\label{subsection:review_rkhs}
Consider a symmetric, continuous, and positive-definite kernel $K$ on $\mathbb{R}^p$. Let us denote the reproducing kernel Hilbert space (RKHS) generated by $K$ and the corresponding inner product and norm by $\mathcal{H}(K)$, $\langle .,.\rangle$ and $\left\|.\right\|$, respectively. For a given probability distribution $Q$ on $\mathbb{R}^p$, the mean embedding of $Q$ into the RKHS $\mathcal{H}(K)$ is given by an element $\mu_Q\in \mathcal{H}(K)$, where $\mu_Q :\mathbb{R}^p\mapsto \mathbb{R}$ satisfies the following:
\begin{equation}\label{eqn:mean_embedding}
    \langle f, \mu_Q\rangle=\bE_{X \sim Q} f(X),\;\;\forall f \in \mathcal{H}(K).
\end{equation}
Mean embedding of probability distributions into RKHS is a very well-studied topic and the literature is vast; we refer the reader to \cite{fukumizu2004dimensionality}, \cite{sriperumbudur2010hilbert} and \cite{berlinet2011reproducing} for more details. The following lemma (Theorem 1 of \cite{sriperumbudur2010hilbert}) provides us with some sufficient conditions for the existence of the mean embedding $\mu_Q$.
\begin{lemma}[\cite{JMLR:v13:gretton12a}]\label{lemma:existence_uniqueness_mean_embedding}
    If $K(.,.)$ is measurable and $\bE_{X \sim Q}\sqrt{K(X,X)}$ is finite, then mean embedding of $Q$ into $\mathcal{H}(K)$ exists.
\end{lemma}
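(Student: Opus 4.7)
The plan is to exhibit $\mu_Q$ as the Riesz representer of the bounded linear evaluation-in-expectation functional on $\mathcal{H}(K)$. The central tool is the reproducing property together with the Cauchy–Schwarz inequality, which reduces everything to the finiteness of $\bE\sqrt{K(X,X)}$.

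First I would set up measurability. For every $x\in\R^p$, the section $K(\cdot,x)$ lies in $\mathcal{H}(K)$, and by the reproducing property $f(x)=\langle f,K(\cdot,x)\rangle$ for every $f\in\mathcal{H}(K)$. Using the closure of $\mathcal{H}(K)$ as the completion of $\mathrm{span}\{K(\cdot,x):x\in\R^p\}$, one shows that every $f\in\mathcal{H}(K)$ is a pointwise limit of finite linear combinations of sections of the (measurable) kernel; since $K$ is measurable in both arguments, each such $f$ is measurable. Hence $f(X)$ is a bona fide real-valued random variable.

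Next I would define the linear functional
\[
T:\mathcal{H}(K)\to\R,\qquad T(f)=\bE_{X\sim Q}f(X),
\]
and verify that it is well defined and bounded. By the reproducing property and Cauchy–Schwarz,
\[
|f(x)|=|\langle f,K(\cdot,x)\rangle|\le \|f\|\,\|K(\cdot,x)\|=\|f\|\sqrt{K(x,x)}.
\]
Taking expectations gives $\bE|f(X)|\le \|f\|\,\bE\sqrt{K(X,X)}$, which is finite by hypothesis; this simultaneously shows that $T(f)$ exists and that $|T(f)|\le C\,\|f\|$ with $C=\bE\sqrt{K(X,X)}<\infty$. Linearity of $T$ is immediate from linearity of expectation.

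Finally I would invoke the Riesz representation theorem on the Hilbert space $\mathcal{H}(K)$: there exists a unique $\mu_Q\in\mathcal{H}(K)$ such that $T(f)=\langle f,\mu_Q\rangle$ for every $f\in\mathcal{H}(K)$, which is exactly the defining identity \eqref{eqn:mean_embedding}. The only place any subtlety enters is step one, where measurability of $f(X)$ has to be argued from measurability of $K$ together with the RKHS construction; this is the step I would expect to be the most delicate, since one needs that the pointwise-limit procedure defining elements of $\mathcal{H}(K)$ preserves joint measurability. Everything else is a direct application of the reproducing property, Cauchy–Schwarz, and Riesz.
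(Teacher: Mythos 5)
Your proposal is correct and is precisely the standard argument behind this result: bound the functional $f\mapsto\bE_{X\sim Q}f(X)$ via the reproducing property and Cauchy--Schwarz, using $\|K(\cdot,x)\|=\sqrt{K(x,x)}$, and then apply the Riesz representation theorem, with the measurability of each $f\in\mathcal{H}(K)$ handled by the fact that norm convergence in an RKHS implies pointwise convergence. The paper itself gives no proof of this lemma (it is quoted from the cited references), and the proof in those references is the same Riesz-representation argument you give, so there is nothing to flag.
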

For a certain class of kernels, defined as \say{characteristic} kernels, the mean embedding of the kernel of each probability distribution $Q$ that satisfies the conditions of Lemma~\ref{lemma:existence_uniqueness_mean_embedding} exists uniquely. \cite{sriperumbudur2010hilbert} provide the following clean characterization of such kernels.
\begin{lemma}[\cite{sriperumbudur2010hilbert}]
    Let $M$ be a topological space and let $K$ be an integrally strictly positive definite kernel on $M$, i.e.,
    \[\int_M \,\int_M K(x,y)d\mu(x)\,d\mu(y)>0,\]
    for all finite non-zero signed Borel measures $\mu$ defined on $M$. Then, $K$ is characteristic on the space of all Borel probability measures on $M$.
\end{lemma}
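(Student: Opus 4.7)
The plan is to show that the injectivity of the mean-embedding map $P \mapsto \mu_P$ on Borel probability measures follows directly from integral strict positive definiteness, via the standard maximum mean discrepancy identity. Concretely, take two Borel probability measures $P \neq Q$ on $M$, each admitting a well-defined mean embedding into $\cH(K)$, and set $\nu = P - Q$. Then $\nu$ is a finite signed Borel measure on $M$, and by assumption $\nu \neq 0$. The goal is to show $\mu_P \neq \mu_Q$; the contrapositive then gives the characteristic property.

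The main computation is to express the squared RKHS distance between the embeddings as a double integral against $\nu$. Expanding,
\[
\|\mu_P - \mu_Q\|^2 = \langle \mu_P, \mu_P\rangle - 2\langle \mu_P, \mu_Q\rangle + \langle \mu_Q, \mu_Q\rangle.
\]
Each cross term is handled by combining the mean-embedding identity~\eqref{eqn:mean_embedding} with the reproducing property $\mu_R(x) = \int K(x,y)\,dR(y)$ for $R \in \{P, Q\}$: for example, applying \eqref{eqn:mean_embedding} with $f = \mu_Q$ gives $\langle \mu_P, \mu_Q\rangle = \bE_{X \sim P}\mu_Q(X) = \int\!\int K(x,y)\,dQ(y)\,dP(x)$, where the second equality uses Fubini. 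Assembling the three terms produces the clean identity
\[
\|\mu_P - \mu_Q\|^2_{\cH(K)} = \int_M \int_M K(x,y)\,d\nu(x)\,d\nu(y).
\]

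Finally, since $\nu$ is a nonzero finite signed Borel measure on $M$, the integral strict positive definiteness assumption on $K$ forces the right-hand side to be strictly positive, so $\mu_P \neq \mu_Q$. Thus $P \mapsto \mu_P$ is injective on the class of Borel probability measures satisfying the conditions of Lemma~\ref{lemma:existence_uniqueness_mean_embedding}, which is precisely the definition of $K$ being characteristic.

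The main obstacle is the careful justification of Fubini's theorem and of the interchange of integration with the RKHS inner product in computing the cross terms. This requires a joint integrability bound such as $\int\!\int |K(x,y)|\,d|\nu|(x)\,d|\nu|(y) < \infty$, which is typically obtained from the Cauchy--Schwarz type bound $|K(x,y)| \leq \sqrt{K(x,x)K(y,y)}$ available in any RKHS, combined with the finiteness assumption $\bE_{X \sim R}\sqrt{K(X,X)} < \infty$ that underpins the very existence of $\mu_R$. Beyond this measure-theoretic bookkeeping, the argument is essentially a direct application of the hypothesis to the specific signed measure $P - Q$.
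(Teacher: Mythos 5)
Your argument is correct: writing $\nu=P-Q$, the identity $\|\mu_P-\mu_Q\|^2_{\mathcal{H}(K)}=\int_M\int_M K(x,y)\,d\nu(x)\,d\nu(y)$ together with integral strict positive definiteness gives injectivity of the mean embedding, and your use of $|K(x,y)|\le\sqrt{K(x,x)K(y,y)}$ plus $\bE\sqrt{K(X,X)}<\infty$ correctly justifies Fubini and the interchange with the inner product. The paper itself offers no proof of this lemma (it is quoted from \cite{sriperumbudur2010hilbert}), and your proof is essentially the standard argument given in that reference, so there is nothing further to reconcile.
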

Examples of some characteristic kernels on the Euclidean space include the followings:
\begin{itemize}
    \item Gaussian kernel: \(K(x,y):=\exp{\left(-\sigma \left\|x-y\right\|_2^2\right)},\;\sigma>0;\)
    \item Laplacian kernel: \(K(x,y):=\exp{\left(-\sigma \left\|x-y\right\|_1\right)},\;\sigma>0;\)
    \item Exponential kernel: \(K(x,y):=\exp{(-\sigma\,x^{\top}y)}\);
    \item Inverse Multiquadratic kernel: \(K(x,y):=(\sigma^2+\left\|x-y\right\|_2^2)^{-c},\;\sigma,c>0.\)
\end{itemize}
For more details on characteristic kernels, see \cite{fukumizu2007kernel,sriperumbudur2008injective,sriperumbudur2010hilbert}. For the sake of simplicity, we assume throughout this article that any kernel we consider is a characteristic kernel and hence the RKHS mean embedding map is injective.  
Note that the reproducing property of the RKHS implies that if $\mu_Q$ exists, then it holds that
\begin{equation}\label{eqn:mean_embedding_reproduce}
    \langle f, \mu_Q\rangle=\bE_{X \sim Q} f(X)=\bE_{X \sim Q} \langle K(X,.),f\rangle,\;\;\forall f \in \mathcal{H}(K).
\end{equation}
This leads us to the concept of mean and covariance operator of a random element in a separable Hilbert space $H$.
\begin{definition}\label{defn:mean_cov_Hilbert_space}
    Let $X$ be a random element in a separable Hilbert space $H$, associated with an inner product $\langle .,. \rangle _{H}$. We say that $X$ has mean $\bE X$ if $\bE \langle X, h \rangle=\langle \bE X, h\rangle$ for every $h \in H.$ The covariance operator of $X$ (if it exists) is a function $S:H\mapsto H$, defined as
    \begin{equation}
        \langle S h_1, h_2 \rangle_H= \bE\left[\langle X-\bE X, h_1\rangle_H \,\cdot\,\langle X-\bE X, h_2\rangle_H\right],\;\;\forall\;h_1,h_2\in H.
    \end{equation}   
\end{definition}
It is well known that the separability of the underlying space implies the separability of a RKHS for a continuous kernel (see \cite{berlinet2011reproducing}). Thus, the separability of the $p$-dimensional Euclidean space implies that $\mathcal{H}(K)$ is also separable. Therefore, according to the definition of the mean of a $H$-valued random element and \eqref{eqn:mean_embedding_reproduce}, we can conclude that $\mu_Q$ is not only the mean embedding of $Q$ into the RKHS, but also the mean of the random element $K(X,.)$ where $X\sim Q$. This observation would be crucial in our theoretical analysis later. 

\subsection{Goodness-of-Fit Testing}\label{sec:method_GOF}
Suppose we observe a $p$-dimensional stationary time series $\mathcal{Y}:=\{Y_1,Y_2,\cdots,Y_n\}$ with marginal distribution $P$. For a given distribution $P_0$ on $\mathbb{R}^p$, we would like to test whether $P_0$ is the correct marginal distribution, i.e.,
$$H_0: P=P_0\text{   versus   }H_1:P\neq P_0.$$
In terms of mean embeddings, we can reformulate the problem as testing 
$$H_0:\mu_P-\mu_{P_0}=\mathbf{0}\in\mathcal{H}(K)\text{   versus  }H_1:\mu_P- \mu_{P_0}\neq \mathbf{0},$$
i.e.,
$$H_0:\left\|\mu_P-\mu_{P_0}\right\|^2=0\text{   versus  }H_1:\left\|\mu_P-\mu_{P_0}\right\|^2\neq 0.$$
The quantity $\left\|\mu_P-\mu_{P_0}\right\|$ is precisely the Maximum Mean Discrepancy (MMD) between the distributions $P$ and $P_0$, denoted by $\operatorname{MMD}(P,P_0).$  Use of MMD is very popular in the context of two-sample testing (see \cite{NIPS2006_e9fb2eda,gretton2009fast,JMLR:v13:gretton12a}, \cite{gaoshao23}) and it has also been used in the context of goodness-of-fit testing (see \cite{kellner2019one}). However, these works are developed for independent data only.\\

Now, we introduce our sample splitting approach and provide some intuition behind this approach. To begin with, consider the following stationary functional time series on the separable Hilbert space $\mathcal{H}(K)$: $\mathcal{X}:=\{X_1,X_2,\cdots,X_n\}$, where $X_i(.):=K(Y_i,.)-\mu_{P_0}(.)$, $i=1,2,\cdots,n.$ Note that our discussion in Section~\ref{subsection:review_rkhs} indicates that testing  $H_0$ is equivalent to testing whether the Hilbert space mean of $X_i$ is $\mathbf{0}$ or not. Now, we fix the splitting ratio $\eta\in (0,1)$ and let $m_1=\lfloor n \eta \rfloor$ and $m_2=n-m_1$. We split the sample $\mathcal{X}$ into two parts: $\mathcal{X}_1:=\{X_1,X_2,\cdots,X_{m_1}\}$ and $\mathcal{X}_2:=\{X_{m_1+1},\cdots,X_{n}\}$. Based on the first subsample $\mathcal{X}_1$, we estimate $\bE X_i=\mu_P-\mu_{P_0}$ by the function $\hat{\mu}_1$, where \
\begin{equation}\label{eqn:mu_1_hat_GOF}
   \hat{\mu}_1(.):=\frac{1}{m_1}\sum_{i=1}^{m_1} K(Y_i,.)-\mu_{P_0}(.), 
\end{equation}
 i.e. the sample mean over $\mathcal{X}_1$. Then, we project the data points in $\mathcal{X}_2$ along $\hat{\mu}_1$ using the RKHS inner product. This leads us to the following one-dimensional process: for $i=1,2,\cdots,m_2$, 
\begin{align*}%\label{eqn: defn_of_Zi}
    Z_i:=\langle \hat{\mu}_1, X_{m_1+i}\rangle&=\frac{1}{m_1}\sum_{j=1}^{m_1}\langle K(Y_j,.)-\mu_{P_0},K(Y_{m_1+i},.)-\mu_{P_0}\rangle\\
    &=\frac{1}{m_1}\sum_{j=1}^{m_1} \left(K(Y_j,Y_{m_1+i})-\mu_{P_0}(Y_j)-\mu_{P_0}(Y_{m_1+i})+\left\|\mu_{P_0}\right\|^2\right),
\end{align*}
where the last equality follows from the reproducing property of RKHS, i.e., $\langle f, K(Y,.)\rangle=f(Y)$ for any element $f$ in the RKHS. When the original series is iid, the mean of $Z_i$ is exactly equal to $\operatorname{MMD}^2(P,P_0)$ (see Lemma 6 of \cite{JMLR:v13:gretton12a}). Under suitable weak dependence, the mean of $Z_i$ would be very close to $\operatorname{MMD}^2(P,P_0)$ for large enough $i$ (see Lemma 3.1. of \cite{dehling1983limit}). Therefore, we estimate $\operatorname{MMD}^2(P,P_0)$ by the sample mean of $Z_1,Z_2,\cdots Z_{m_2}$, i.e., $T_n:=\frac{1}{m_2}\sum_{i=1}^{m_2}Z_i.$ In order to normalize $T_n$, we have to take the complex temporal dependence of the $Z_i$s into account. To this end,  we use the self-normalization (SN) technique proposed by \cite{shao2010self}. In particular, we define our self-normalizer $W_n$ and the self-normalized test statistic $U_n$ by
\begin{equation}
\label{eqn:Un_GOF}
    W_n^2:=\frac{1}{m_2^2}\sum_{t=1}^{m_2}\left(\sum_{j=1}^t Z_j-t\, T_n\right)^2~\mbox{and}~ U_n:=\frac{T_n}{W_n},
\end{equation}
respectively. 
Under the null hypothesis of $\operatorname{MMD}^2(P,P_0)=0$, and assuming some appropriate moment and mixing conditions on the process $\{Y_t\}_{t \in \mathbb{Z}}$, we establish in Theorem~\ref{thm:asymp_null_GOF} that as $n \to \infty$,
\begin{equation}\label{eqn:defn_of_U}
    \sqrt{m_2}\cdot U_n\overset{D}{\to}U:=\frac{B(1)}{\sqrt{ \int_{0}^{1} \left[B(s)-sB(1)\right]^2 ds}},
\end{equation}
where $\{B(r):r\in [0,1]\}$ is a standard Brownian motion on $[0,1].$ Therefore, for a given significance level $\alpha$, we perform a one-sided test by rejecting $H_0$ if $U_n>U_{1-\alpha}/\sqrt{m_2}$, where $U_{1-\alpha}$ is the $(1-\alpha)$-th quantile of $U$. The distribution of $U$ is pivotal and and it does not depend on the sample-splitting ratio $\eta$. The quantiles of $U^2$ have been tabulated in \cite{lobato2001testing} via Monte Carlo simulations. This yields a resampling-free test.

\begin{remark}
    Since sample-splitting usually leads to a loss in efficiency, one may wish to use different sample-splitting ratios and use an aggregated test statistic (for example, a cross-fitted statistic). However, as \cite{kim2024dimension} have showed, the asymptotic null distribution of a cross-fitted statistic is non-pivotal even for iid data; see Proposition A.1 of Appendix A of the supplementary material therein.  
    As we will empirically demonstrate in Section~\ref{sec:simulations}, the sample splitting ratio has no significant impact on efficiency as long as it is chosen in a reasonable range.
\end{remark}
\begin{remark}
    Note that the implementation of our method requires the computation of the kernel mean embedding under the null, which is typical of the one-sample problems. Notably, \cite{kellner2019one} establish the explicit form of the mean embedding of a Gaussian distribution for Gaussian and exponential kernels. However, we emphasize that the  \noindent computation of a mean embedding in general is not straightforward. As we will see in the subsequent sections, our methodologies developed for change point and independence testing do not suffer from this limitation.
\end{remark}

\subsection{Change Point Testing}\label{sec:method_CPD}
Testing for structural stability in a time series is a problem of immense practical interest. In the presence of a change point, doing inference while assuming stationarity can lead to misleading results. Given a $p$-dimensional time series $\{Y_1,Y_2,\cdots,Y_n\}$, where the marginal distribution of $Y_i$ is $P_i$, we aim to test 
$$H_0:P_1=P_2=\cdots=P_n\text{   versus   } H_1:P_1=P_2=\cdots=P_{k_0}\neq P_{k_0+1}=P_{k_0+2}=\cdots=P_n,$$
where $k_0=\lfloor n \eta_0 \rfloor$ is the unknown change point location and $\eta_0\in (0,1).$ In terms of mean embeddings, it is equivalent to test
$$H_0:\mu_{P_1}=\mu_{P_2}=\cdots=\mu_{P_n}\text{   versus   } H_1:\mu_{P_1}=\mu_{P_2}=\cdots=\mu_{P_{k_0}}\neq \mu_{P_{k_0+1}}=\mu_{P_{k_0+2}}=\cdots=\mu_{P_n}.$$
For inference of a change point in marginal distribution in the context of temporally dependent data, we are only aware of \cite{inoue2001testing}, \cite{sharipov2016sequential} and \cite{bucchia2017change}. However, all the methods  in the above references are computationally quite expensive and numerical studies in those papers suggest sensitivity of the methods with respect to the choice of the bandwidth or block length involved in their resampling schemes. Here, we propose a novel sample-splitting and kernel based approach in this context which is quite fast and also has improved size accuracy and reduced sensitivity to tuning parameters. \\

\indent First, we transform the original data into the following functional time series on $\mathcal{H}(K)$: $\mathcal{S}:=\{F_i\}_{i=1}^n$, where $F_i(.)=K(Y_i,.)$. Note that the above change point testing problem could be regarded  as a two-sample mean testing problem in terms of $\mathcal{S}$ if the change point location was known apriori. 
To begin with, we fix the splitting ratio $\eta \in (0,1/2)$ and let $m_1=\lfloor n \eta \rfloor$ and $N=n-2m_1$. We split the sample  $\mathcal{S}$ into the following three parts: $\mathcal{S}_1:=\{F_1,F_2,\cdots F_{m_1}\}$, $\mathcal{S}_2:=\{F_{m_1+1},F_{m_1+2},\cdots,F_{n-m_1}\}$ and $\mathcal{S}_3:=\{F_{n-m_1+1}, F_{n-m_1+2},\cdots,F_n\}$. Based on the first and third subsample, we estimate $\mu_{P_1}-\mu_{P_n}$ by $\hat{\mu}_1-\hat{\mu}_n$, where
\begin{equation}\label{eqn:CPD_projection}
  \hat{\mu}_1(.)-\hat{\mu}_n(.):=\frac{1}{m_1}\sum_{i=1}^{m_1}F_i(.)-\frac{1}{m_1}\sum_{j=n-m_1+1}^{n}F_j(.)=\frac{1}{m_1}\sum_{i=1}^{m_1}K(Y_i,.)-\frac{1}{m_1}\sum_{j=n-m_1+1}^{n}K(Y_j,.).  
\end{equation}
We now project the data points in $\mathcal{S}_2$ along $\hat{\mu}_1-\hat{\mu}_n$ to form the following one-dimensional process:
\begin{align*}
    Z_k:=\langle \hat{\mu}_1-\hat{\mu}_n, F_{m_1+k}\rangle = \frac{1}{m_1}\sum_{i=1}^{m_1}K(Y_i,Y_{m_1+k})-\frac{1}{m_1}\sum_{j=n-m_1+1}^{n}K(Y_j,Y_{m_1+k}),
\end{align*}
where $k=1,2,\cdots,N.$ For independent data, the difference between the means of $Z_{k_0-m_1}$ and $Z_{k_0-m_1+1}$ would be exactly $\left\|\mu_{P_1}-\mu_{P_n}\right\|^2$. Under suitable weak dependence, we expect the amount of mean shift at location $k_0-m_1$ for $\{Z_k\}_{k=1}^{N}$ to be approximately the same.  Therefore, the original testing problem reduces to testing for a change in the mean of the univariate sequence $\{Z_k\}_{k=1}^{N}$, and we opt to use \cite{shao2010testing}'s SN-based method. To this end, define the cumulative sum $Z_{i:j}$  as $Z_{i:j}:=\sum_{k=i}^jZ_k$. For $k=1,2,\cdots,N$, we define the CUSUM statistic as
\begin{align*}
    T_n(k):=N^{-1/2}\sum_{t=1}^k \left(Z_t-\frac{1}{N}Z_{1:N}\right),
\end{align*}
and define the self-normalizer as
\begin{align*}
    V_n(k):=N^{-2}\left(\sum_{t=1}^k\left(Z_{1:t}-\frac{t}{k}Z_{1:k}\right)^2+\sum_{t=k+1}^{N}\left(Z_{t:N}-\frac{N-t+1}{N-k}Z_{k+1:N}\right)^2\right).
\end{align*}
Finally, our self-normalized test statistic is given by
\begin{equation}\label{eqn:Gn_CPD}
    G_n=\sup_{k=1,\cdots,N-1}T_n(k)/V^{1/2}_n(k).
\end{equation}
Under suitable  conditions on the process $\{Y_t\}_{t \in \mathbb{Z}}$, we establish in Theorem~\ref{thm:asymp_null_CPD} that under the null hypothesis of no change point, it holds that $G_n \overset{D}{\to}G,$ where
\begin{equation}\label{defn:Defn_of_G}
    G:=\sup_{r\in[0,1]}\;\frac{B(r)-rB(1)}{\left(\int_{0}^{r}(B(s)-\frac{s}{r}B(r))^2 ds+\int_{r}^{1}(B(1)-B(s)-\frac{1-s}{1-r}(B(1)-B(r)))^2 ds\right)^{1/2}}.
\end{equation}
Therefore, for a given significance level $\alpha$, we perform a one-sided test by rejecting $H_0$ if $G_n>G_{1-\alpha}$, where $G_{1-\alpha}$ is the $(1-\alpha)$-th quantile of $G$. The distribution of $G$ is pivotal and its quantiles have been tabulated in \cite{gao2023dimension} via Monte Carlo simulations.
\begin{example}
   In this example, we illustrate the application of our change point testing method in a real life dataset. We test for the presence of a change point in marginal distribution of the GNP (Gross National Product) dataset, as
analyzed in page 138 of \cite{shumway2006time}. The data
are U.S. quarterly U.S. GNP in billions of chained 1996 dollars from 1947(1) to 2002(3). The data has been seasonally
adjusted and we look at
the difference of the logarithm of the GNP, which is naturally
interpreted as the growth rate of GNP (see \cite{shumway2006time} for more details). The authors believe the growth rate to be stable and model it with various stationary time series models, such as AR(1) and MA(2). We apply our change point testing method on the difference of the logarithm of the GNP with sample splitting ratio $\eta=0.1$ and the Gaussian kernel with bandwidth selected via median heuristics,i.e.,
\[\sigma=Median\left\{\frac{1}{2|Y_{i_1}-Y_{i_2}|^2}, 1\leq i_1\neq i_2 \leq n\right\}.\]
The value of the SS-SN test statistic and its simulated p-value came out to be $8.98$ and $0.01$, respectively. This strongly indicates that there is a change point in the marginal distribution of the series, therefore  modelling the data with a stationary process is not appropriate. The location of the change point, estimated by the point of maximum of the SS-SN test statistic, came out to be $127$, which corresponds to the first quarter of 1979. This is consistent with the findings of \cite{shao2010testing}, who detected the presence of a change point for the 75$\%$ quantile.
\end{example}
\subsection{Testing Independence between Two Time Series }\label{sec:method_extensions}

As we have already hinted in our discussion in Section~\ref{sec:method_GOF}, the goodness-of-fit testing problem is equivalent to testing whether the mean of a set of stationary functional observations is the $\mathbf{0}$ element of an appropriate Hilbert space. Based on this observation, we extend our sample splitting methodology to develop a novel kernel-based test for testing the independence between two stationary processes. Checking independence between two time series is another inferential task of paramount importance. If the two series are dependent, one can further investigate the relationship between them via some causal analysis techniques, and that might lead the user to interesting discoveries or better predictive models. However, if they are indeed independent, one can proceed to model them independently using univariate time series models. Some notable works in this context include \cite{hong1996testing, hong2001testing, shao2009generalized} for two univariate time series and \cite{horvath2015testing} for two functional time series, among others.\\

\indent Since the seminal work of \cite{gretton2005measuring}, use of the Hilbert-Schmidt independence Criterion (HSIC), which is a kernel-based measure of independence capable of detecting both linear and non-linear dependence, has been very popular in statistics and machine learning. 
Some notable works for independent data include \cite{gretton2007kernel} and \cite{gretton2010consistent}. In the context of temporally dependent data, see \cite{zhang2008kernel}, \cite{zhou2012measuring}, \cite{wang2021new}, \cite{chu2023distance} etc. Note that \cite{zhang2008kernel} did not deal with inference. \cite{chu2023distance} and \cite{betken2021bootstrap} developed  distance covariance based methods to test the independence between two multivariate time series, whereas 
 the method in \cite{wang2021new} is based on HSIC. According to \cite{sejdinovic2013equivalence}, distance covariance can be viewed as a special case of HSIC. Below we point out a couple of drawbacks in these works. Firstly, the methodologies proposed by \cite{betken2021bootstrap} and \cite{wang2021new} suffer from large computational costs due to the use of block bootstrap. Moreover, the finite sample performance of the methods seem to be sensitive to the block length involved in the block bootstrap or the bandwidth involved in the kernel smoothing approach adopted by \cite{chu2023distance}. In contrast, we develop a fully nonparametric test based on HSIC that has very low computational cost because of the sample splitting and self-normalization approach and also has reduced sensitivity to tuning parameters in finite samples. From the theoretical angle, we only assume mild moment and strong mixing conditions on the data generating process to develop our asymptotic theory. \\

\indent Suppose we observe two stationary time series $\{X_t\}_{t=1}^{n}$ and $\{Y_t\}_{t=1}^{n}$ on $\mathbb{R}^p$ and $\mathbb{R}^s$, respectively. Our aim is to test the following hypothesis:
$$H_0: \{X_t\}_{t\in \mathbb{Z}}\indep \{Y_t\}_{t\in \mathbb{Z}}\text{ vs }H_1: H_0\text{ is not true }.$$
We fix a lag $m$, where $m$ is a non-negative integer such that $2m<n$. Under $H_0$, it holds that $H_0'(m)$: 
$X_0 \indep (Y_{-m},Y_{-m+1},\cdots,Y_0,\cdots, Y_{m-1}, Y_m)\text{ and } Y_0\indep (X_{-m},X_{-m+1},\cdots,X_0,\cdots, X_{m-1}, X_m).$
Now consider some continuous, positive-definite kernels $K_1, L_1, K_{2m+1}, L_{2m+1}$ on $\mathbb{R}^p$,  $\mathbb{R}^s$, $\mathbb{R}^{(2m+1)p}$ and $\mathbb{R}^{(2m+1)s}$, respectively.
In order to rigorously formulate the null hypothesis in terms of HSIC, we first consider two continuous, positive-definite kernels $K$ and $L$ on $\mathbb{R}^p$ and $ \mathbb{R}^{mp}$, respectively. We define $C_{X_0,Y_{-m:m}}$ as the associated cross-covariance operator between $X_0$ and $(Y_{-m},\cdots,Y_{m})$, i.e.,
\begin{equation}\label{eqn:cross_cov_indep_upo_lag_m}
C_{X_0,Y_{-m:m}}=\bE \left[\left(K_1(X_0,.)-\mu_1\right)\otimes \left(L_{2m+1}((Y_{-m},\cdots,Y_{m}),.)-\nu_2\right)\right],
\end{equation}
where $\mu_1:=\bE K_1(X_0,.)$, $\nu_2:=\bE L_{2m+1}((Y_{-m},\cdots,Y_{m}),.)$, and $\otimes $ denotes the tensor product of two Hilbert-valued elements. We can similarly define $C_{Y_0, X_{-m:m}}$, i.e.,
\begin{equation}\label{eqn:cross_cov_indep_upo_lag_m_2}
C_{Y_0,X_{-m:m}}=\bE \left[\left(L_1(Y_0,.)-\nu_1\right)\otimes \left(K_{2m+1}((X_{-m},\cdots,X_{m}),.)-\mu_2\right)\right],
\end{equation}
where $\nu_1:=\bE L_1(Y_0,.)$, $\mu_2:=\bE K_{2m+1}((X_{-m},\cdots,X_{m}),.)$
For the basic details on the tensor product of two Hilbert spaces and the HSIC, we refer the reader to Appendix \ref{appnb}.  Now, in light of Lemma B.1. in the supplementary material, the weak null hypothesis of our interest $H_0'(m)$ is equivalent to
$H_0'(m):\begin{pmatrix}
    C_{X_0,Y_{-m:m}}\\
    C_{Y_0,X_{-m:m}}
\end{pmatrix}=\mathbf{0}$, 
where the associated Hilbert space is $\mathcal{H}(K,L;m):=\begin{pmatrix}
    \mathcal{H}(K_1)\otimes \mathcal{H}(L_{2m+1})\\
    \mathcal{H}(L_1)\otimes \mathcal{H}(K_{2m+1})
\end{pmatrix}$, equipped with the inner product
$$\left\langle \begin{pmatrix}
    f_1\otimes g_1\\
    f_2 \otimes g_2
\end{pmatrix},\,\begin{pmatrix}
    f_3\otimes g_3\\
    f_4 \otimes g_4
\end{pmatrix}\right\rangle=\langle f_1,f_3\rangle\cdot \langle g_1, g_3\rangle+\langle f_2,f_4\rangle\cdot \langle g_2, g_4\rangle.$$

\noindent It is easy to check that this Hilbert space is also separable. Therefore, the null hypothesis is equivalent to $\left\|C_{X_0,Y_{-m:m}}\right\|_{HS}= \left\|C_{Y_0,X_{-m:m}}\right\|_{HS}=0$. Once again, we emphasize the similarity of the above testing framework to the one in Section~\ref{sec:method_GOF}. In both scenarios, we are essentially interested in testing whether the mean of a set of temporally dependent functional sequence is the zero element of an appropriate Hilbert space or not. However, in contrast to the goodness-of-fit setting, we do not need to assume the marginal distributions to be known under $H_0$. As we will see in this section and also in Section~\ref{sec:theory_indep_twoTS}, it suffices to estimate $\mu_1,\mu_2,\nu_1\text{ and }\nu_2$ with their corresponding sample means and the test does not involve explicit computation of the mean embeddings which can be pretty difficult in practice. We now describe our  methodology.
Let $n'=n-2m$ and for $k=1,2,\cdots n'$, we define
\begin{equation}\label{eqn:Ei_twoTS}
    E_k=\begin{pmatrix}
    (K_1(X_{k+m},.)-\hat{\mu}_1)\otimes (L_{2m+1}((Y_{k},Y_{k+1},\cdots,Y_{k+m},\cdots, Y_{k+2m}),.)-\hat{\nu}_{2})\\
    (L_1(Y_{k+m},.)-\hat{\nu}_1)\otimes (K_{2m+1}((X_{k},X_{k+1},\cdots,X_{k+m},\cdots, X_{k+2m}),.)-\hat{\mu}_{2})
\end{pmatrix},
\end{equation}
where 
$\hat{\mu}_1=\frac{1}{n'}\sum_{i=m+1}^{n-m}K_1(X_i,.),\;\;\hat{\nu}_1=\frac{1}{n'}\sum_{i=m+1}^{n-m}L_1(Y_i,.),$
$$\hat{\mu}_2=\frac{1}{n'}\sum_{i=m+1}^{n-m}K_{2m+1}((X_{i-m},X_{i-m+1},\cdots,X_{i},\cdots, X_{i+m-1}, X_{i+m}),.),$$
and
$$\hat{\nu}_2=\frac{1}{n'}\sum_{i=m+1}^{n-m}L_{2m+1}((Y_{i-m},Y_{i-m+1},\cdots,Y_{i},\cdots, Y_{i+m-1}, Y_{i+m}),.).$$  
\noindent  Next let $m_1'=\lfloor n' \eta\rfloor$ and $m_2':=n'-m_1'.$ We form a direction of projection as $\hat{\lambda}:=\frac{1}{m'_1}\sum_{i=1}^{m'_1}E_i$ and then project the points $\{E_k\}_{k=m'_1+1}^{n'}$ along $\hat{\lambda}$ to form the following univariate sequence: $F_j^{(1)}:=\left\langle \hat{\lambda}, E_{j+m_1'}\right\rangle,$ where $j=1,2,\cdots,m'_2$. Now, we let $T^{(1)}_n:=\frac{1}{m_2'}\sum_{k=1}^{m_2'}F^{(1)}_k$ and use the self-normalizer ${W_n^{(1)}}=\sqrt{\frac{1}{m_2'^2}\sum_{t=1}^{m_2'}\left(\sum_{j=1}^t F^{(1)}_j-t\, T^{(1)}_n\right)^2}$. Finally, we define the test statistic given by \begin{equation}\label{eqn:test_stat_indep_twoTS}
    U^{(1)}_n:=\frac{T^{(1)}_n}{W^{(1)}_n}.
\end{equation}
In Theorem~\ref{thm:asymp_indep_lag_m}, we show that under some mild regularity conditions, the asymptotic null distribution of the test statistic $U^{(1)}_n$ is given by $U$, where $U$ is pivotal as defined in~\eqref{eqn:defn_of_U}. This facilitates us to calibrate our test statistic for a one-sided test via the quantiles of $U.$
%Note that in \textcolor{red}{Lemma}, we derive an explicit expression for $F_k$ completely in terms of the kernels. 
\begin{remark}
    It is worthwhile to note that unlike \cite{betken2021bootstrap}, our proposed test targets only a particular sublcass of the original null hypothesis $H_0$. However, we would like to mention that it is quite straightforward to extend our methodology to a null hypothesis such as
    $H''_0(m) : (X_{-s},X_{-s+1}, \cdots ,X_0, \cdots ,X_s) \indep (Y_{-m}, Y_{-m+1}, \cdots , Y_0, \cdots , Y_m)$. In particular, one can just work with 
    $$\{(K_{2s+1}((X_{k},X_{k+1}, \cdots , \cdots ,X_{k+2s}),\cdot)-\hat{\mu}_s)\otimes (L_{2m+1}((Y_{k}, Y_{k+1}, \cdots , Y_{k+2m}),\cdot)-\hat{\nu}_m)\}$$ and define the test statistic similarly, where $\hat{\mu}_s$ and $\hat{\nu}_m$ are corresponding sample averages. Moreover, in a similar fashion, our methodology can be extended to the framework of serial independence testing of a time series up to a fixed lag. Fore the sake of conciseness, we do not get into more details of such extension in the paper.
\end{remark}
\section{Asymptotic Theory}\label{sec:theory}
The key technical assumption in our theoretical analysis is a functional central limit theorem (FCLT) in the associated RKHS. We refer the reader to Appendix \ref{appna} for the definitions of a Gaussian random element, a Brownian motion in a Hilbert space and the space of $\mathcal{H}(K)$-valued cadlag (right continuous function which has left limit at all points) functions on $[0,1]$, denoted by $D_{\mathcal{H}(K)}[0,1]$. Let us recall that in the goodness-of-fit and change point testing problems, $\{Y_1,\cdots,Y_n\}$ comes from a stationary process.
\begin{assumption}\label{ass:FCLT_GOF_NULL}
    It holds under $H_0$ that
    \begin{equation}\label{eqn:FCLT_RKHS}
    \left(\frac{1}{\sqrt{n}} \sum_{i=1}^{\lfloor n t\rfloor}\left(K(Y_i,.)-\mu_P\right)\right)_{t \in[0,1]} \leadsto (W(t))_{t \in[0,1]}\text{ in } D_{\mathcal{H}(K)}[0,1],
\end{equation}
where $P$ is the marginal distribution of $Y_1$, $(W(t))_{t \in[0,1]}$ is a Brownian motion in $\mathcal{H}(K)$ and $W(1)$ is a zero-mean random element with covariance operator $S: \mathcal{H}(K) \rightarrow \mathcal{H}(K)$, defined by
\begin{equation}\label{eqn:covar_FCLT_RKHS}
    \langle S x, y\rangle=\sum_{i=-\infty}^{\infty} \bE\left[\left\langle K(Y_0,.)-\mu_P, x\right\rangle\left\langle K(Y_i,.)-\mu_P, y\right\rangle\right], \text { for } x, y \in \mathcal{H}(K) .
\end{equation}
Furthermore, the series in \eqref{eqn:covar_FCLT_RKHS} converges absolutely.
\end{assumption}
\begin{remark}
    Note that in the goodness-of-fit problem, $P=P_0$ is pre-specified under the null, whereas $P$ is unspecified in the change point problem.
\end{remark}

FCLT for Hilbert-valued temporally dependent sequence is a well-studied topic in the literature, see \cite{chen1998central, sharipov2016sequential}. In particular, \cite{chen1998central} study the FCLT for a Hilbert-valued sequence $\mathcal{L}_2$-near epoch dependent (NED) on a strong mixing process. But, \cite{sharipov2016sequential} assume absolute regularity, which is a stronger assumption than strong mixing. However, they assume $\mathcal{L}_1$-NED, which is more general than $\mathcal{L}_2$-NED. Fortunately, we show in the following proposition that the conditions of \cite{chen1998central} are satisfied by the RKHS-embedded series if the underlying data-generating process is stationary, strong mixing at a certain rate and satisfies a mild moment condition. We refer the reader to Appendix \ref{appnc} and \ref{appnd} for the technical details and proofs.
\begin{proposition}\label{prop:FCLT_Verification_GOF_NULL}
    Suppose $\{Y_i\}_{i \in \mathbb{Z}}$ forms a stationary, strongly mixing process on $\mathbb{R}^p$ with marginal distribution $P$ and $\{\alpha(m)\}_{m \in \mathbb{N}}$ are its mixing coefficients (see Appendix \ref{appnc} for the definition). Moreover, suppose the followings hold for some $\delta>0$:
    \begin{itemize}
        \item [(i)]\(\bE\, |K(Y_1,Y_1)|^{1+\delta/2}<\infty;\)
        \item [(ii)] \(\sum_{m=1}^{\infty} (\alpha(m))^{\delta /(2+\delta)}<\infty.\)
    \end{itemize}
    Then, Assumption~\ref{ass:FCLT_GOF_NULL} is satisfied.
\end{proposition}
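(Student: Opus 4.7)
The strategy is to reduce Assumption~\ref{ass:FCLT_GOF_NULL} to a direct application of the Hilbert-space FCLT of \cite{chen1998central} for $\mathcal{L}_2$-NED sequences on a strong mixing base. Set $V_i := K(Y_i,\cdot)-\mu_P\in \mathcal{H}(K)$. First, I would verify that $V_i$ is well-defined: the reproducing property gives $\|K(Y_i,\cdot)\|^2=K(Y_i,Y_i)$, so condition (i) yields $\bE\sqrt{K(Y_i,Y_i)}<\infty$, which via Lemma~\ref{lemma:existence_uniqueness_mean_embedding} guarantees the existence of $\mu_P\in\mathcal{H}(K)$ and hence that $V_i$ is a centered $\mathcal{H}(K)$-valued random element. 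Moreover
\begin{equation*}
\bE\|V_i\|^{2+\delta}\le 2^{1+\delta}\bigl(\bE\|K(Y_i,\cdot)\|^{2+\delta}+\|\mu_P\|^{2+\delta}\bigr)=2^{1+\delta}\bigl(\bE\, K(Y_i,Y_i)^{1+\delta/2}+\|\mu_P\|^{2+\delta}\bigr)<\infty,
\end{equation*}
so the $(2+\delta)$-moment hypothesis of Chen--White is met.

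Next I would verify the NED condition. Since $V_i$ is a measurable function of $Y_i$ alone, it is trivially $\mathcal{L}_2$-NED on $\{Y_i\}_{i\in\Z}$ with NED coefficients identically zero (indeed, $V_i$ is already $\sigma(Y_i)$-measurable so the best approximation by any $\sigma(Y_{i-a},\ldots,Y_{i+a})$-measurable proxy has zero error for every $a\ge 0$). The mixing rate hypothesis in condition (ii) matches the rate required by \cite{chen1998central}, so their FCLT delivers the weak convergence in \eqref{eqn:FCLT_RKHS} to a centered $\mathcal{H}(K)$-valued Brownian motion whose terminal covariance operator is the long-run covariance of $\{V_i\}$.

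The last thing to establish is the precise form of the limiting covariance operator and the absolute convergence of the series in \eqref{eqn:covar_FCLT_RKHS}. For arbitrary fixed $x,y\in\mathcal{H}(K)$, the real-valued sequences $\{\langle V_i,x\rangle\}$ and $\{\langle V_i,y\rangle\}$ are measurable functions of $Y_i$, hence strongly mixing with the same coefficients $\{\alpha(m)\}$ and finite $(2+\delta)$-moments (via Cauchy--Schwarz and the bound above). Davydov's covariance inequality then yields
\begin{equation*}
\bigl|\bE[\langle V_0,x\rangle\langle V_i,y\rangle]\bigr|\;\le\; C\,\alpha(|i|)^{\delta/(2+\delta)}\,\|x\|\,\|y\|\,\bigl(\bE\|V_0\|^{2+\delta}\bigr)^{2/(2+\delta)},
\end{equation*}
which by condition (ii) is absolutely summable in $i$. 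Summability gives both the pointwise (in $x,y$) convergence of the series defining $\langle Sx,y\rangle$ and the boundedness of the resulting bilinear form, from which the operator $S$ is identified as the long-run covariance and seen to be a trace-class nonnegative operator on $\mathcal{H}(K)$.

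The main technical obstacle is threefold: confirming that the Chen--White conditions are literally those I have verified (their notion of NED and mixing rate), and carefully handling the tensor/operator-valued arithmetic to identify the covariance operator $S$ of the limiting Brownian motion with the object in \eqref{eqn:covar_FCLT_RKHS}. The latter is largely bookkeeping once the Davydov-type bound above is in hand, but it does require verifying that the partial sums $\sum_{|i|\le N}\bE[\langle V_0,\cdot\rangle\langle V_i,\cdot\rangle]$ converge in operator norm (or at least weakly) to a well-defined trace-class operator rather than merely pointwise, which again follows from the uniform-in-$(x,y)$ bound.
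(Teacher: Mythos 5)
Your overall route is the same as the paper's: set $V_i=K(Y_i,\cdot)-\mu_P$, observe that the NED coefficients vanish because $V_i$ is a function of $Y_i$ alone, and invoke the Chen--White Hilbert-space FCLT, with Davydov/Dehling-type covariance bounds handling the long-run covariance. However, as written the plan has genuine gaps. Theorem 4.6 of Chen--White (Lemma~\ref{lemma:fclt_chen_white}) does not apply upon checking only moments, NED and a mixing rate: it additionally requires (c) convergence of $n^{-1}\var\bigl(\sum_{1\le i\le\lfloor nt\rfloor}h(Y_i)\bigr)$ to $t\sigma^2(h)$ for every $h\neq 0$, (d) $\limsup_n\bE\bigl\|\sum_{1\le i\le n}W_{n,i}\bigr\|^2<\infty$, and, crucially, (e) a complete orthonormal system $\{e_l\}$ along which the tail sums of the mixingale constants vanish --- this last condition is what delivers tightness in the infinite-dimensional space, and your proposal never addresses it (the paper verifies it using separability of $\mathcal{H}(K)$, Parseval's identity and monotone convergence from $\bE\|K(Y_1,\cdot)-\mu_P\|^2<\infty$). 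Your Davydov bound for fixed directions $x,y$ essentially covers (c) (the paper instead cites Lemma~\ref{lemma:dehling_3.2}), but it does not give (d): bounding $\bE\|\sum_i W_{n,i}\|^2=\sum_{i,j}\bE\langle V_i,V_j\rangle$ requires a covariance inequality for the inner product of two Hilbert-valued mixing elements (Lemma~\ref{lemma:dehling_3.1}), not a scalar bound for projections onto fixed $x,y$; similarly, a uniform bound on the bilinear form only yields boundedness of $S$, not the trace-class property you assert.

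A second omission: the conclusion of Lemma~\ref{lemma:fclt_chen_white} is weak convergence of the continuously interpolated process $M_n^*$ in $C_{\mathcal{H}(K)}[0,1]$, whereas Assumption~\ref{ass:FCLT_GOF_NULL} concerns the cadlag partial-sum process in $D_{\mathcal{H}(K)}[0,1]$ under the Skorokhod metric. The paper closes this gap by showing $\sup_{t\in[0,1]}\|M_n^*(t)-M_n(t)\|\le\max_{1\le i\le n}\|W_{n,i}\|\overset{\bP}{\to}0$ via a union bound and Markov's inequality using the $(2+\delta)$-th moment, and then applying Lemma~\ref{lemma:asymp_negligble_process}; your plan stops at ``the FCLT delivers \eqref{eqn:FCLT_RKHS}'' and skips this step. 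These are fixable omissions rather than a wrong approach, but they constitute precisely the substantive content of the paper's proof beyond the trivial NED observation.
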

\begin{remark}
    Any bounded kernel, which includes Gaussian, Laplacian, Exponential, Student, etc., satisfies $(i)$ in Proposition~\ref{prop:FCLT_Verification_GOF_NULL} and thus, is valid on heavy-tailed data as well. In Section~\ref{sec:simulations_FDA}, we demonstrate the excellent size accuracy of our proposed test for change point in heavy-tailed data. Part $(ii)$ is a standard assumption in the literature of central limit theorem for $\beta$-mixing processes; see \cite{ibragimov1962some, dehling1983limit, doukhan1994mixing}. It requires the underlying temporal dependence to be of short-range type and is satisfied if $\alpha(m)$ decays at least at a polynomial rate of $m^{-(1+\epsilon)(1+2/\delta)}$ for some $\epsilon>0$. It has been established in the literature that stationary Markov chains that possess geometric ergodicity are strong mixing at an exponential rate; see \cite{bradley2005basic, meyn2012markov} for more details. \cite{mokkadem1988mixing} shows that a stationary vector ARMA process is geometrically strong mixing if the iid innovations have a density with respect to the Lebesgue measure. \cite{carrasco2002mixing} provide sufficient conditions for geometric $\beta$-mixing (which implies geometric strong mixing) for various linear and nonlinear GARCH$(1,1)$, linear and power GARCH$(p,q)$,
 stochastic volatility, and autoregressive conditional duration models. For more details, see the aforementioned references. 
\end{remark}
\begin{remark}
    Our RKHS-embedding approach has a slight advantage than the one in \cite{Zhang27032025} on the theory front as well. The asymptotic theory behind their proposed tests for the marginal distribution  relies on the uniform FCLT for the canonical empirical process. The uniform FCLT 
 has been proved under $\rho$-mixing by \cite{berkes1977almost}, and by \cite{berkes2009asymptotic} under S-mixing; for definitions see therein. It is well-known that $\rho$-mixing is a stronger assumption than strong mixing. While S-mixing is easier to verify for certain models, it is certainly restricted to a limited class of models which admit a particular representation. Therefore, the strong mixing assumption in Proposition~\ref{prop:FCLT_Verification_GOF_NULL} is more general in our opinion.
\end{remark}
We are now ready to develop the asymptotic theory for the tests developed in the previous section.
\subsection{Goodness-of-Fit Testing}
For this subsection, let $U_n$ and $U$ be defined as in \eqref{eqn:Un_GOF} and \eqref{eqn:defn_of_U}, respectively.
\begin{theorem}\label{thm:asymp_null_GOF}
    Suppose Assumption~\ref{ass:FCLT_GOF_NULL} holds with $P=P_0$. Then under $H_0$, as $n \to \infty$,
    $$\sqrt{m_2}\cdot U_n\overset{D}{\to}U.$$
\end{theorem}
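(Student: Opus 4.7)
The plan is to reduce $\sqrt{m_2}\, U_n$ to a continuous functional of the RKHS-valued partial sum process, apply the continuous mapping theorem using Assumption~\ref{ass:FCLT_GOF_NULL}, and then identify the limit via a conditioning argument that exploits the independent-increment structure of the Brownian motion in $\mathcal{H}(K)$. The crucial observation is that, although the random projection direction $\hat{\mu}_1$ converges to a nontrivial Gaussian element in $\mathcal{H}(K)$ (after rescaling by $\sqrt{n}$) and contributes an unknown scale to the limit of $J_n(r):=\sum_{i=1}^{\lfloor m_2 r\rfloor}Z_i$, self-normalization cancels that scale, rendering the limiting law pivotal and free of $\eta$.

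Set $S_n(t) := n^{-1/2}\sum_{i=1}^{\lfloor nt\rfloor} X_i$ with $X_i := K(Y_i,\cdot) - \mu_{P_0}$, so that Assumption~\ref{ass:FCLT_GOF_NULL} with $P=P_0$ gives $S_n \leadsto W$ in $D_{\mathcal{H}(K)}[0,1]$. Writing $\eta_n := m_1/n \to \eta$, an elementary rearrangement using the reproducing property yields
\[
J_n(r) \;=\; \frac{1}{\eta_n}\Big\langle S_n(\eta_n),\; S_n\bigl(\eta_n + (1-\eta_n)r\bigr) - S_n(\eta_n)\Big\rangle + o_{\mathbb{P}}(1),
\]
uniformly in $r\in[0,1]$. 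Because the inner product on $\mathcal{H}(K)\times\mathcal{H}(K)$ is continuous and $W$ has continuous paths almost surely, the CMT yields $J_n \leadsto J$ in $D[0,1]$, with
\[
J(r) := \eta^{-1}\big\langle W(\eta),\; W\bigl(\eta+(1-\eta)r\bigr) - W(\eta)\big\rangle.
\]
Rewriting the statistic in terms of $J_n$ gives
\[
\sqrt{m_2}\, U_n \;=\; \frac{J_n(1)}{\sqrt{m_2^{-1}\sum_{t=1}^{m_2}\bigl(J_n(t/m_2) - (t/m_2)J_n(1)\bigr)^2}},
\]
and, since the Riemann sum in the denominator converges to the integral for continuous paths, a second CMT yields $\sqrt{m_2}\,U_n \leadsto \Psi(J):= J(1)/\sqrt{\int_0^1 (J(r)-rJ(1))^2\,dr}$.

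The last step is to identify $\Psi(J)\stackrel{D}{=}U$ through conditioning. Put $W_1:=W(\eta)$ and $W_2(r):=W(\eta+(1-\eta)r)-W(\eta)$. By independent increments of the Hilbert-space Brownian motion, $W_1\perp W_2$, with $W_1$ Gaussian of covariance $\eta S$ and $W_2$ a Brownian motion on $[0,1]$ with covariance operator $(1-\eta)S$. Conditionally on $W_1=h$, the real-valued process $\{\langle h, W_2(r)\rangle\}_{r\in[0,1]}$ is centered Gaussian with covariance $(1-\eta)(r_1\wedge r_2)\langle h, Sh\rangle$, i.e., equal in law to $\sigma(h)\tilde B(\cdot)$ for $\sigma(h):=\sqrt{(1-\eta)\langle h, Sh\rangle}$ and a standard Brownian motion $\tilde B$. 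Hence $J(r)=(\sigma(h)/\eta)\,\tilde B(r)$ conditionally, and the scale $\sigma(h)/\eta$ cancels between numerator and denominator of $\Psi(J)$, leaving $\tilde B(1)/\sqrt{\int_0^1(\tilde B(r)-r\tilde B(1))^2\,dr}\stackrel{D}{=}U$. Because this conditional law does not depend on $h$, integrating out $W_1$ yields the unconditional limit $U$.

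The chief technical obstacle is making this cancellation rigorous under minimal assumptions. It requires (i) $\sigma(W_1)>0$ almost surely, which follows from mild non-degeneracy of the long-run covariance operator $S$ together with the fact that $W_1$ is supported on the closure of the range of $S$ and $\langle h,Sh\rangle$ vanishes only on $\ker S$, a set of $W_1$-probability zero; and (ii) justifying continuity of $\Psi$ at continuous paths with strictly positive denominator and the Riemann-sum-to-integral step, both of which follow from standard arguments once one uses that weak convergence to a continuous limit in $D[0,1]$ upgrades to uniform convergence in probability via the Skorokhod representation. The remaining book-keeping --- absorbing the floor-function remainders in $\tau_n(r)=\eta_n+\lfloor m_2 r\rfloor/n$ and in the passage from $\eta_n$ to $\eta$ --- is routine given the continuity of the limit process $W$.
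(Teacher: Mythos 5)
Your proposal is correct and takes essentially the same route as the paper's own proof: rewrite $\sqrt{m_2}\,U_n$ as a functional of the bilinear process $\bigl\langle S_n(\eta), S_n(\cdot)-S_n(\eta)\bigr\rangle$, apply the continuous mapping theorem under Assumption~\ref{ass:FCLT_GOF_NULL}, and then condition on $W(\eta)=u$ so that, by independent increments, $\langle u, W(\cdot)-W(\eta)\rangle$ is a Brownian motion with scale $\sqrt{\langle Su,u\rangle}$ that cancels in the self-normalized ratio, giving the pivotal limit $U$ independently of $u$. Your additional remark on the almost-sure positivity of $\langle S W(\eta),W(\eta)\rangle$ merely makes explicit a nondegeneracy point the paper leaves implicit.
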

Next, we investigate the limiting power of our goodness-of-fit test under the alternative hypothesis that there is a distributional shift from the pre-specified distribution. We consider the following sequence of local alternatives:
$$H_{1n}:P=P_n,\text{ where }\mu_{P_n}\in \mathcal{H}(K),\;\;\mu_{P_n}-\mu_{P_0}=\delta_n\text{ and }\left\|\delta_n\right\|_{\mathcal{H}(K)}=\frac{\Delta_n}{\sqrt{n}}.$$
We make the following assumption on the data-generating process under the alternative hypothesis.
\begin{assumption}\label{ass:FCLT_local_alternative_GOF}
   Under $\left\{H_{1n}\right\}_{n=1}^{\infty}$, it holds that
    \begin{equation}
    \left(\frac{1}{\sqrt{n}} \sum_{i=1}^{\lfloor n t\rfloor}\left(K(Y_i,.)-\mu_{P_n}\right)\right)_{t \in[0,1]} \leadsto (\tilde{W}(t))_{t \in[0,1]}\text{ in } D_{\mathcal{H}(K)}[0,1],
\end{equation}
where $(\tilde{W}(t))_{t \in[0,1]}$ is a Brownian motion in $\mathcal{H}(K)$ and $\tilde{W}(1)$ is a zero-mean random element with covariance operator $\tilde{S}: \mathcal{H}(K) \rightarrow \mathcal{H}(K)$.
\end{assumption}
Similar to Proposition~\ref{prop:FCLT_Verification_GOF_NULL}, we show in the following proposition that when $(2+\delta)$-th moments are uniformly bounded and the strong mixing occurs at a suitable rate, a couple of additional assumptions on uniform integrability and convergence of variance implies Assumption~\ref{ass:FCLT_local_alternative_GOF} is satisfied.
\begin{proposition}\label{prop:verify_FCLT_GOF_Alt}
    Suppose $\{Y_i\}_{1\leq i \leq n}$ comes from a triangular array of random variables which is row-wise stationary and strong mixing and marginal distribution of random variables in the $n$-th row is $P_n$. Let the mixing coefficients at lag $m$ be uniformly bounded by $\tilde{\alpha}(m)$. Moreover, assume that followings hold for some $\delta>0$:
    \begin{itemize}
        \item [(i)] \(\bE |K(Y_i,Y_i)|^{1+\delta/2},\,\left\|\mu_{P_n}\right\|\leq C<\infty,\) where $C>0$ does not depend on $n$ or $i$;
        \item [(ii)] \(\sum_{m=1}^{\infty}\tilde{\alpha}(m)^{\frac{\delta}{\delta+2}}<\infty\)
        \item [(iii)] \(\left\{\frac{\left\|K(\tilde{Y}_n,\cdot)-\mu_{P_n}\right\|^2}{\bE K(\tilde{Y}_n,\tilde{Y}_n)-\left\|\mu_{P_n}\right\|^2};n\geq 1\right\}\) is uniformly integrable where marginal distribution of the random variable $\tilde{Y}_n$ is $P_n$;
        \item [(iv)] For any $h\neq 0 \in \cH(K)$, as $n\to \infty$, the following holds:
        \[\frac{1}{n}Var \left(\sum_{1\leq i \leq \lfloor nt \rfloor} h(Y_i)\right)\to t\,\tilde{\sigma}^2(h)>0\text{ for all }0<t\leq 1. \]
    \end{itemize}
    Then, Assumption~\ref{ass:FCLT_local_alternative_GOF} is satisfied with $\langle \tilde{S}h,h\rangle=\tilde{\sigma}^2(h)$ for all $h\neq 0\in \cH(K)$.
\end{proposition}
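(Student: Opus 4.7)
The plan is to verify the conditions of a triangular-array Hilbert-valued FCLT (in the spirit of the framework used in the proof of Proposition~\ref{prop:FCLT_Verification_GOF_NULL}, based on \cite{chen1998central}) for the embedded array $X_{n,i}(\cdot) := K(Y_i,\cdot) - \mu_{P_n}$. This array is row-wise stationary and trivially $\mathcal{L}_2$-NED on itself, so the underlying strong mixing base can be taken to be $\{Y_i\}$ itself. Three ingredients need to be checked: a uniform $(2+\delta)$-th moment bound, summability of the mixing coefficients, and convergence of finite-dimensional distributions together with tightness in $D_{\mathcal{H}(K)}[0,1]$.

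First I would take care of the routine components. By the reproducing property, $\|X_{n,i}\|^{2+\delta} \leq C_\delta \left( K(Y_i,Y_i)^{1+\delta/2} + \|\mu_{P_n}\|^{2+\delta}\right)$, whose expectation is uniformly bounded by condition~(i). Since $X_{n,i}$ is a Borel function of $Y_i$ alone, the strong mixing coefficients of $\{X_{n,i}\}_i$ are dominated by $\tilde\alpha(m)$, and condition~(ii) supplies the required summability. For convergence of finite-dimensional distributions, fix $h_1,\dots,h_k\in\mathcal{H}(K)$ and apply the reproducing property to write $\langle h_j, X_{n,i}\rangle = h_j(Y_i) - \mathbb{E}_{P_n} h_j(Y_i)$. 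A Cram\'er--Wold argument then reduces the problem to a scalar FCLT for a strong mixing triangular array (whose $(2+\delta)$-th moments are controlled via $|h_j(Y_i)| \leq \|h_j\|\sqrt{K(Y_i,Y_i)}$ and condition~(i)); condition~(iv) identifies the limiting variance $t\,\tilde\sigma^2(h)$ along any direction, the cross-covariances between different directions follow by polarization, and the limiting covariance operator $\tilde S$ is pinned down through $\langle \tilde S h, h\rangle = \tilde\sigma^2(h)$.

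The main obstacle is tightness of the partial-sum process in $D_{\mathcal{H}(K)}[0,1]$, which requires more than scalar tightness of each coordinate projection because $\mathcal{H}(K)$ is infinite-dimensional. Condition~(iii) is the decisive ingredient here: the uniform integrability of the normalized squared norm $\|X_{n,1}\|^2 / \mathrm{tr}(\mathrm{Cov}(X_{n,1}))$ delivers uniform (in $n$) control over the mass carried by the high-index coordinates in any orthonormal basis diagonalizing the limit, which is the Hilbert-valued analogue of a Lindeberg-type condition and prevents the sequence of covariance operators from spreading out in the tail. Combined with the scalar modulus-of-continuity control afforded by conditions~(i), (ii), and (iv), this yields tightness in $D_{\mathcal{H}(K)}[0,1]$ along the lines of the scheme developed in \cite{chen1998central}, adapted to the row-wise stationary triangular array setting. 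Finite-dimensional convergence together with tightness then delivers the claimed weak convergence and establishes Assumption~\ref{ass:FCLT_local_alternative_GOF}.
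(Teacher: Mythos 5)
Your high-level route is the same as the paper's: the paper's proof of this proposition consists of saying that one verifies the conditions of the Hilbert-valued FCLT of \cite{chen1998central} (Lemma~\ref{lemma:fclt_chen_white}) exactly as in the proof of Proposition~\ref{prop:FCLT_Verification_GOF_NULL}, with conditions (iii) and (iv) of the present statement supplying what stationarity and Lemma~\ref{lemma:dehling_3.2} supplied under the null. Your moment bound from (i), the domination of the mixing coefficients of the embedded array by $\tilde\alpha(m)$, and the identification of $\tilde S$ through condition (iv) via the reproducing property and polarization are all consistent with that scheme.

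However, your justification of the decisive infinite-dimensional step contains a genuine error. You claim that condition (iii), the uniform integrability of $\left\|K(\tilde Y_n,\cdot)-\mu_{P_n}\right\|^2$ normalized by its mean, ``delivers uniform control over the mass carried by the high-index coordinates in any orthonormal basis.'' It does not: uniform integrability of the scalar squared norm says nothing about how the second moment is distributed across coordinates. For instance, if $X_{n,1}=\xi\, e_n$ for a fixed bounded centered real variable $\xi$ and an orthonormal sequence $\{e_l\}$, the normalized squared norm is the same for every $n$ (hence trivially uniformly integrable), yet all of the variance sits on the $n$-th basis vector and the coordinate tails do not vanish uniformly in $n$. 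In the framework of Lemma~\ref{lemma:fclt_chen_white}, condition (iii) is precisely condition (b) (a Lindeberg-type requirement on the normalized summands, needed because only second moments of $W_{n,i}/c_{n,i}$ are controlled), whereas the coordinate-tail/compactness control is the separate condition (e), which in the proof of Proposition~\ref{prop:FCLT_Verification_GOF_NULL} is checked via Parseval's identity; in the triangular-array setting it amounts to showing that $\limsup_{n}\sum_{l>k}\bE\langle K(\tilde Y_n,\cdot)-\mu_{P_n},e_l\rangle^2\to 0$ as $k\to\infty$, and your UI-based argument does not produce this. You also leave unaddressed condition (d), i.e.\ $\limsup_n \bE\left\|\sum_{i\le n}W_{n,i}\right\|^2<\infty$, which follows from (i)--(ii) via the covariance inequality of Lemma~\ref{lemma:dehling_3.1}, and the final transfer from the interpolated process in $C_{\mathcal H(K)}[0,1]$ to the partial-sum process in $D_{\mathcal H(K)}[0,1]$ via Lemma~\ref{lemma:asymp_negligble_process}; these are comparatively routine, but the misattribution of (iii) is where the argument as written would fail.
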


\begin{theorem}\label{thm:GOF_asymptotic_power}
  Suppose Assumption~\ref{ass:FCLT_local_alternative_GOF} holds and let $b(\eta):=\frac{1}{\eta} \tilde{W}(\eta).$ Then, under $\{H_{1n}\}_{n=1}^{\infty}$, \(\sqrt{n}\hat{\mu}_1\overset{D}{\to}b(\eta)\) as $n \to \infty$, where $\hat{\mu}_1$ is defined in \eqref{eqn:mu_1_hat_GOF}. Moreover, the followings hold:
    \begin{enumerate}
        \item If $\Delta_n^2\to 0$, then $\mathbb{P}(\sqrt{m_2}\cdot U_n>U_{1-\alpha}) \to \alpha$.
        \item If $\Delta_n^2 \to \infty$, then $\mathbb{P}(\sqrt{m_2}\cdot U_n>U_{1-\alpha}) \to 1$. 
        \item If $\Delta_n^2 \to c^2$, where $c \in (0,\infty)$, then whenever $\frac{\delta_n}{\left\|\delta_n\right\|}\to \delta_0 \in \mathcal{H}(K),$ it holds that 
        \begin{itemize}
            \item [] 
            \begin{align*}
          &\mathbb{P}(\sqrt{m_2}\cdot U_n>U_{1-\alpha})\\
          &\to \int_{\mathcal{H}(K)} \mathbb{P}\left(\frac{B(1)+\frac{c (1-\eta) \langle \delta_0, u \rangle}{\Theta(u)}+\frac{c^2 \eta (1-\eta)}{\Theta(u)}}{\sqrt{\int_{0}^{1}\left(B(s)-sB(1)\right)^2 ds}}>U_{1-\alpha}\right)\mu_{\tilde{W}(\eta)}(du),
        \end{align*}
        \end{itemize}
        where $\Theta(u):=\sqrt{\langle S(u+c \eta \delta_0), u+c \eta \delta_0\rangle}$, $\mu_{\tilde{W}(\eta)}$ is the distribution of $\tilde{W}(\eta)$ and the above integral is with respect to the measure $\mu_{\tilde{W}(\eta)}$ on the Hilbert space $\mathcal{H}(K)$.
    \end{enumerate}
\end{theorem}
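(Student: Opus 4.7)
The plan is to view the test statistic as a continuous functional of Hilbert-space partial-sum processes, apply Assumption~\ref{ass:FCLT_local_alternative_GOF} together with a continuous mapping argument, and then extract the pivotal one-dimensional limit via a conditioning argument on the training sample. The starting point is the decomposition $X_i=(K(Y_i,\cdot)-\mu_{P_n})+\delta_n$. Applied to the training half and scaled by $\sqrt{n}$,
\[
\sqrt{n}\hat\mu_1=\frac{n}{m_1}\cdot\frac{1}{\sqrt{n}}\sum_{i=1}^{m_1}(K(Y_i,\cdot)-\mu_{P_n})+\sqrt{n}\delta_n,
\]
and the FCLT evaluated at $t=\eta$ sends the first term to $\frac{1}{\eta}\tilde W(\eta)=b(\eta)$. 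Since $\|\sqrt{n}\delta_n\|=\Delta_n$ and $\delta_n/\|\delta_n\|\to\delta_0$, the second term vanishes in case~1 (giving $\sqrt{n}\hat\mu_1\overset{D}{\to} b(\eta)$ as claimed), tends to $c\delta_0$ in case~3 (so $\sqrt{n}\hat\mu_1\overset{D}{\to} v_\infty:=b(\eta)+c\delta_0$), and diverges in norm in case~2.

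The same decomposition applied to the testing block yields, via a joint FCLT on the full path in $D_{\mathcal{H}(K)}[0,1]$,
\[
\left(\sqrt{n}\hat\mu_1,\;\tfrac{1}{\sqrt{m_2}}\sum_{j=1}^{\lfloor m_2 r\rfloor}X_{m_1+j}\right)\leadsto \bigl(v_\infty,\; \tilde B(r)+r\sqrt{1-\eta}\,c\delta_0\bigr),
\]
where $\tilde B(r):=(1-\eta)^{-1/2}[\tilde W(\eta+r(1-\eta))-\tilde W(\eta)]$ is a Brownian motion in $\mathcal{H}(K)$ with covariance $S$, and crucially $\tilde B$ is independent of $\tilde W(\eta)$ by the independent-increment property of a Hilbert-space Brownian motion. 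The linear shift $r\sqrt{1-\eta}\,c\delta_0$ cancels in the CUSUM bridge used to form $W_n$, so the bridge process converges to the shift-free $\tilde B(r)-r\tilde B(1)$. Case~1 now follows immediately: with $c=0$, the analysis reduces to Theorem~\ref{thm:asymp_null_GOF}, giving $\sqrt{m_2}U_n\overset{D}{\to}U$ and $\mathbb{P}(\sqrt{m_2}U_n>U_{1-\alpha})\to\alpha$. For case~2, one shows $\hat\mu_1=\delta_n+O_p(n^{-1/2})$ with $\|\delta_n\|\gg n^{-1/2}$, hence $T_n=\|\delta_n\|^2(1+o_p(1))$ and $W_n=O_p(\|\delta_n\|)$, so $\sqrt{m_2}U_n\asymp\sqrt{m_2}\|\delta_n\|=\sqrt{1-\eta}\,\Delta_n\to\infty$ and the power tends to $1$.

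For case~3 the key step is a conditioning argument. Using the scale invariance $U_n(c\hat\mu_1)=U_n(\hat\mu_1)$, substitute $\sqrt{n}\hat\mu_1$ for $\hat\mu_1$ throughout; the continuous mapping theorem applied to the joint limit above then produces
\[
\sqrt{m_2}U_n\leadsto \frac{\langle v_\infty,\,\tilde B(1)+\sqrt{1-\eta}\,c\delta_0\rangle}{\sqrt{\int_0^1\langle v_\infty,\,\tilde B(r)-r\tilde B(1)\rangle^2\,dr}}.
\]
Conditioning on $\tilde W(\eta)=u$, so that $v_\infty=(u+c\eta\delta_0)/\eta$ is deterministic and (by independence of $\tilde B$ from $\tilde W(\eta)$) $\langle v_\infty,\tilde B(\cdot)\rangle$ remains a real Brownian motion equal in law to $\sqrt{\langle Sv_\infty,v_\infty\rangle}\,B(\cdot)$ with $B$ a standard one-dimensional Brownian motion, the common scale $\sqrt{\langle Sv_\infty,v_\infty\rangle}=\Theta(u)/\eta$ cancels between numerator and denominator. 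Expanding $\langle v_\infty,\delta_0\rangle=\langle u,\delta_0\rangle/\eta+c$ (using $\|\delta_0\|=1$) recovers the stated deterministic shift, and integrating the resulting conditional probability against the law $\mu_{\tilde W(\eta)}$ yields the formula. The main obstacle is making this conditioning rigorous in the Hilbert-space setting---one must justify that the conditional law of $\langle v_\infty,\tilde B(\cdot)\rangle$ given $\tilde W(\eta)$ is indeed the claimed scaled one-dimensional Brownian motion, which ultimately rests on the independent-increments structure of $\tilde W$; the accompanying analytic issues (Riemann-sum approximation of $W_n^2$ by a continuous $L^2$ integral, and the Fubini-type exchange of limit and conditional probability) follow from the tightness and moment bounds underlying Proposition~\ref{prop:verify_FCLT_GOF_Alt}.
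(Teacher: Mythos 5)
Your proposal follows essentially the same route as the paper's proof. The paper works with the process $S_n(r)=\bigl\langle n^{-1/2}\sum_{i\le m_1}X_i,\; n^{-1/2}\sum_{j=m_1+1}^{\lfloor nr\rfloor}X_j\bigr\rangle$, expands it under $H_{1n}$, applies Assumption~\ref{ass:FCLT_local_alternative_GOF} with the continuous mapping theorem, and then conditions on $\tilde W(\eta)=u$ and reduces to a scalar Brownian motion via Lemma~\ref{lemma:distn_inner.prod_BM} --- exactly the conditioning you describe; your packaging through the joint limit of $(\sqrt n\hat\mu_1,\ \text{testing partial sums})$ plus positive scale invariance of $U_n$, and your Cases 1--2 (drift cancels in the bridge, $T_n=\|\delta_n\|^2(1+o_p(1))$, self-normalizer $O_p(\|\delta_n\|)$) are the same arguments in slightly different clothing.

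The one point you should not pass over is your closing claim in Case 3 that the expansion ``recovers the stated deterministic shift.'' Carrying your own limit through the conditioning, with $v_\infty=(u+c\eta\delta_0)/\eta$ and $\langle v_\infty,\tilde B(\cdot)\rangle\overset{D}{=}(\Theta(u)/\eta)B(\cdot)$, the common scale $\Theta(u)/\eta$ cancels and the deterministic terms become $c\sqrt{1-\eta}\,\langle\delta_0,u\rangle/\Theta(u)+c^2\eta\sqrt{1-\eta}/\Theta(u)$: the drift you correctly identified in the testing block is $r\sqrt{1-\eta}\,c\delta_0$, so the shift carries $\sqrt{1-\eta}$, not the $(1-\eta)$ appearing in the displayed formula. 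The same $\sqrt{1-\eta}$ bookkeeping arises at the final substitution in the paper's own proof, where $B(s(1-\eta)+\eta)-B(\eta)$ is replaced by a standard Brownian motion: numerator and denominator must both be divided by $\sqrt{1-\eta}$, which rescales the deterministic terms accordingly (a one-dimensional check with iid $N(0,1)$ data and $\delta_n=c/\sqrt n$ confirms the $\sqrt{1-\eta}$ version). So your method is sound and in fact makes the correct constant transparent, but as written the assertion that it matches the $(1-\eta)$ factors is not justified; you should either track this rescaling explicitly or flag the discrepancy with the stated expression rather than claim agreement.
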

\begin{remark}
    One can think of $\Delta_n$ as rescaled signal-to-noise ratio (SNR). Theorem~\ref{thm:GOF_asymptotic_power} reveals that when the SNR approaches $0$, our level-$\alpha$ test is powerless; when the SNR approaches $\infty$, our test is consistent; and when the SNR approaches an intermediate value between $0$ and $\infty$, we achieve non-trivial power.
\end{remark}
\begin{remark}
    Note that the assumption of one-to-one correspondence between the space of probability distributions and the space of mean embeddings is standard and quite crucial, otherwise MMD is not be a metric. It can be seen in the proof of the above results that size accuracy does not require this condition. However, under the violation of this assumption, if MMD$(P_n,P_0)=0$, then one will fail to distinguish between the null and the alternative and that will lead to serious power loss of our method. 
\end{remark}
\subsection{Change Point Testing}\label{sec:theory_CPD}
For this section, let $G_n$ and $G$ be defined as in \eqref{eqn:Gn_CPD} and \eqref{defn:Defn_of_G}, respectively. 
\begin{theorem}\label{thm:asymp_null_CPD}
        Suppose Assumption~\ref{ass:FCLT_GOF_NULL} holds. Then, under $H_0$, as $n \to \infty$, it holds that $G_n \overset{D}{\to}G.$
    \end{theorem}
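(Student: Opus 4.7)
The plan is to use the reproducing property of $\mathcal{H}(K)$ to recast $G_n$ as a continuous functional of two RKHS-valued objects, and then combine the FCLT in Assumption~\ref{ass:FCLT_GOF_NULL} with an independence/conditioning argument. Define $h_n := \hat{\mu}_1 - \hat{\mu}_n$ and the RKHS-valued partial-sum process over the middle block,
\[
\Psi_n(r):=N^{-1/2}\sum_{t=1}^{\lfloor Nr\rfloor}\bigl(F_{m_1+t}-\mu_P\bigr),\qquad r\in[0,1].
\]
Since $Z_k=\langle h_n, F_{m_1+k}\rangle$ by the reproducing property, linearity of the inner product and cancellation of the deterministic $\mu_P$-terms yield
\[
T_n(\lfloor Nr\rfloor)=\tfrac{1}{\sqrt n}\bigl\langle \sqrt n\,h_n,\,\Psi_n(r)-r\Psi_n(1)\bigr\rangle,
\]
and rewriting the two sums defining $V_n(\lfloor Nr\rfloor)$ as Riemann sums in the same variables gives
\[
V_n(\lfloor Nr\rfloor)=\tfrac{1}{n}\!\int_0^r\!\bigl\langle\sqrt n\,h_n,\,\Psi_n(s)-\tfrac{s}{r}\Psi_n(r)\bigr\rangle^{\!2} ds+\tfrac{1}{n}\!\int_r^1\!\bigl\langle\sqrt n\,h_n,\,(\Psi_n(1)-\Psi_n(s))-\tfrac{1-s}{1-r}(\Psi_n(1)-\Psi_n(r))\bigr\rangle^{\!2} ds+o_p(1).
\]
The common $n^{-1/2}$ scaling cancels in the ratio $T_n/V_n^{1/2}$, so $G_n$ becomes a sup-of-ratio continuous functional of the pair $(\sqrt n\,h_n,\,\Psi_n(\cdot))\in\mathcal{H}(K)\times D_{\mathcal{H}(K)}[0,1]$.

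Next I would invoke Assumption~\ref{ass:FCLT_GOF_NULL}. Writing $\Pi_n(t):=n^{-1/2}\sum_{i=1}^{\lfloor nt\rfloor}(F_i-\mu_P)$, both targets are continuous functionals of $\Pi_n$: explicitly, $\sqrt n\,h_n=(n/m_1)(\Pi_n(m_1/n)+\Pi_n((n-m_1)/n)-\Pi_n(1))$ and $\Psi_n(r)=\sqrt{n/N}\,(\Pi_n((m_1+\lfloor Nr\rfloor)/n)-\Pi_n(m_1/n))$. Since $\Pi_n\leadsto W$ in $D_{\mathcal{H}(K)}[0,1]$ with $W$ almost surely continuous, the continuous mapping theorem gives the joint weak convergence
\[
(\sqrt n\,h_n,\,\Psi_n(\cdot))\leadsto (\xi,\,W^*(\cdot)),\quad
\xi:=\tfrac{W(\eta)-W(1)+W(1-\eta)}{\eta},\quad
W^*(r):=\tfrac{W(\eta+(1-2\eta)r)-W(\eta)}{\sqrt{1-2\eta}}.
\]
The crucial observation is that $\xi$ depends only on the increments of $W$ on $[0,\eta]\cup[1-\eta,1]$, while $W^*$ depends only on the increments on $[\eta,1-\eta]$; the independent-increment property of the Hilbert-valued Brownian motion $W$ therefore makes $\xi$ and $W^*$ independent, and $W^*$ is itself a Brownian motion in $\mathcal{H}(K)$ with the same covariance operator $S$.

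A second application of the continuous mapping theorem (to the sup-of-ratio functional) yields $G_n$ converging in distribution to the analogous expression with $(\xi,W^*)$ in place of $(\sqrt n\,h_n,\Psi_n)$. I would then condition on $\xi=h$: by independence and Definition~\ref{defn:mean_cov_Hilbert_space}, the scalar process $r\mapsto\langle h, W^*(r)\rangle$ is a real-valued Brownian motion with variance $\langle Sh,h\rangle\cdot r$, so $\langle h, W^*(r)\rangle=\sqrt{\langle Sh,h\rangle}\,B(r)$ for some standard Brownian motion $B$. The factor $\sqrt{\langle Sh,h\rangle}$ appears identically in the numerator and in every bridge term in the denominator, and cancels from the ratio; the conditional limit therefore equals the pivotal random variable $G$ of \eqref{defn:Defn_of_G}. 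Since this conditional limit does not depend on $h$, the unconditional limit is also $G$, proving the claim.

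The main obstacle I anticipate is verifying almost-sure continuity of the sup-of-ratio functional at the limit, specifically that the denominator is strictly positive a.s. so that the continuous-mapping step is valid and the representation $\langle h,W^*(r)\rangle=\sqrt{\langle Sh,h\rangle}B(r)$ is well-defined. This reduces to $\langle S\xi,\xi\rangle>0$ a.s., a mild non-degeneracy of the Gaussian element $\xi$ whose covariance operator is a positive multiple of $S$; it holds whenever $S$ is nontrivial, which is implicit in Assumption~\ref{ass:FCLT_GOF_NULL}. Passing the discrete supremum $\sup_{k=1,\dots,N-1}$ to the continuous $\sup_{r\in[0,1]}$ is then routine given tightness of $\Psi_n$ in $D_{\mathcal{H}(K)}[0,1]$ and the almost-sure path-continuity of the limit process.
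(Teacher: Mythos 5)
Your proposal is correct and follows essentially the same route as the paper's proof: both rewrite $T_n(k)$ and $V_n(k)$ as inner products of the scaled projection direction with the middle-block RKHS partial sums, apply Assumption~\ref{ass:FCLT_GOF_NULL} with the continuous mapping theorem, and then exploit the independent increments of the Hilbert-valued Brownian motion to condition on $W(\eta)-W(1)+W(1-\eta)$, reduce to a real Brownian motion via the scalar representation of $\langle u, W(\cdot)\rangle$ (Lemma~\ref{lemma:distn_inner.prod_BM}), rescale time from $[\eta,1-\eta]$ to $[0,1]$, and conclude pivotality because the conditional law is free of the conditioning value. The only differences are cosmetic (you perform the time-rescaling when defining $W^*$ rather than at the end, and you explicitly flag the positivity/continuity issue for the sup-of-ratio functional, which the paper leaves implicit).
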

  Next, we investigate the power of our test against the alternative hypothesis that there exists a single change point at the location $k_0=\lfloor n \eta_0 \rfloor$ for some fixed but unknown $\eta_0\in (\eta,1-\eta).$ Let us recall that in this setup, we denote the marginal distribution of $Y_i$ by $P_i$. In particular, we analyze the power behavior of our test against the following sequence of local alternatives:
    $$H'_{1n}: \mu_{P_1}=\cdots=\mu_{P_{k_0}}\neq\mu_{P_{k_0+1}}=\cdots=\mu_{P_n},\text{ where } \mu_{P_{k_0}+1}-\mu_{P_{k_0}}=\delta_n,\;\left\|\delta_n\right\|=\frac{\Delta_n}{\sqrt{n}} \in \mathbb{R}.$$
     We assume that the following FCLT holds under the alternative.
    \begin{assumption}\label{ass:FCLT_local_alternative_CPD}
        Under $\left\{H'_{1n}\right\}_{n=1}^{\infty}$, it holds that
    \begin{equation}\label{eqn:FCLT_RKHS_CP}
    \left(\frac{1}{\sqrt{n}} \sum_{i=1}^{\lfloor n t\rfloor}\left(K(Y_i,.)-\mu_{P_i}\right)\right)_{t \in[0,1]} \leadsto (\Tilde{W}(t))_{t \in[0,1]}\text{ in } D_{\mathcal{H}(K)}[0,1],
\end{equation}
where $(\Tilde{W}(t))_{t \in[0,1]}$ is a Brownian motion in $\mathcal{H}(K)$ and $\Tilde{W}(1)$ has covariance operator $\Tilde{S}: \mathcal{H}(K) \rightarrow \mathcal{H}(K)$.
    \end{assumption}
    \begin{remark}
        Very similar to Proposition~\ref{prop:verify_FCLT_GOF_Alt}, one can provide sufficient moment and mixing conditions, along with some uniform integrability conditions, which will imply that the FCLT under the alternative hypothesis holds. We omit the details for brevity.
    \end{remark}
Before we describe the asymptotic behavior of our change point test statistic under alternative, let us define 
\[G^{*}_{cond}:=\sup_{r \in [0,1]}\frac{ B(r)-rB(1)-\sqrt{\frac{1-2\eta}{\langle Su, u \rangle}}(r_0\wedge r)((1-r_0)\wedge (1-r))\langle u, c\delta_0 \rangle }{\sqrt{\splitdfrac{\int_{0}^{r}\left((B(s)-\frac{s}{r}B(r))-\sqrt{\frac{1-2\eta}{\langle Su, u \rangle}}\frac{[(s \wedge r_0)((r-r_0)\wedge (r-s))]\vee 0}{r}\langle u, c\delta_0 \rangle\right)^2 ds}%
              {\splitdfrac{+\int_{r}^{1}\Big((B(1-s)-\frac{1-s}{1-r}B(1-r))}{-\sqrt{\frac{1-2\eta}{\langle Su, u \rangle}}\frac{[((1-s) \wedge (1-r_0))((r_0-r)\wedge (s-r))]\vee 0}{r}\langle u, c\delta_0 \rangle\Big)^2 ds}}
            }}.\]
\begin{theorem}\label{thm:CPD_asymptotic_power}
    Suppose Assumption~\ref{ass:FCLT_local_alternative_CPD} holds and let $\tilde{b}(\eta):=\frac{1}{\eta}( \tilde{W}(\eta)-\tilde{W}(1)+\tilde{W}(1-\eta)).$ Then, under $\{H'_{1n}\}_{n=1}^{\infty}$,  $\sqrt{n}(\hat{\mu}_1-\hat{\mu}_n)\overset{D}{\to} \tilde{b}(\eta)$ as $n\to \infty$, where  $\hat{\mu}_1-\hat{\mu}_n$ is defined in \eqref{eqn:CPD_projection}. Moreover, the followings hold:
    \begin{enumerate}
        \item If $\Delta_n\to 0$, then $\mathbb{P}(G_n>G_{1-\alpha})\to \alpha.$
        \item If $\Delta_n\to \infty$, then $\mathbb{P}(G_n>G_{1-\alpha})\to 1.$
        \item If $\Delta_n\to c\in(0,\infty)$ and $\frac{\delta_n}{\left\|\delta_n\right\|}\to \delta_0\in \mathcal{H}(K)$, then $G_n \overset{D}{\to} G^*$, where
        \[G^*\mid \{\tilde{b}(\eta)=u+c\delta_0\}\overset{D}{=}G^*_{cond}.\]
    \end{enumerate}
\end{theorem}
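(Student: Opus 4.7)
The plan is to combine a joint functional central limit theorem on the three blocks $\mathcal{S}_1,\mathcal{S}_2,\mathcal{S}_3$ with a conditioning argument in the spirit of Theorem~\ref{thm:GOF_asymptotic_power}, adapted to two complications specific to the change-point setup: the projection direction is built from two disjoint tail blocks, and the middle block $\mathcal{S}_2$ contains an interior mean shift whose location drives the drift in the CUSUM.

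First I would verify the stated limit for $\sqrt{n}(\hat{\mu}_1-\hat{\mu}_n)$. Writing
\begin{equation*}
\sqrt{n}(\hat{\mu}_1-\hat{\mu}_n) = \frac{\sqrt{n}}{m_1}\sum_{i=1}^{m_1}(F_i-\mu_{P_i}) - \frac{\sqrt{n}}{m_1}\sum_{j=n-m_1+1}^{n}(F_j-\mu_{P_j}) + \sqrt{n}(\mu_{P_1}-\mu_{P_n}),
\end{equation*}
and using $\mu_{P_i}=\mu_{P_1}$ for $i\leq m_1<k_0$ and $\mu_{P_j}=\mu_{P_n}$ for $j>k_0$, Assumption~\ref{ass:FCLT_local_alternative_CPD} together with the continuous mapping theorem identifies the first two terms as converging jointly to $\tfrac{1}{\eta}\tilde{W}(\eta)$ and $\tfrac{1}{\eta}(\tilde{W}(1)-\tilde{W}(1-\eta))$, whose difference is $\tilde{b}(\eta)$. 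In Case 1 the residual deterministic shift $\sqrt{n}(\mu_{P_1}-\mu_{P_n})=-\sqrt{n}\delta_n$ has norm $\Delta_n\to 0$, yielding the stated limit; in Case 3 it converges to $-c\delta_0$, giving $\xi_n:=\sqrt{n}(\hat{\mu}_1-\hat{\mu}_n)\leadsto \tilde{b}(\eta)-c\delta_0$, which is exactly the joint convergence being encoded by the conditioning on $\tilde{b}(\eta)=u+c\delta_0$.

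Next I would rescale to $\tilde{Z}_k:=\sqrt{n}Z_k=\langle \xi_n, F_{m_1+k}\rangle$, which leaves $G_n$ invariant. Splitting $\tilde{Z}_k=\langle \xi_n,F_{m_1+k}-\mu_{P_{m_1+k}}\rangle+\langle \xi_n,\mu_{P_{m_1+k}}\rangle$, the second piece equals $\langle \xi_n,\mu_{P_1}\rangle$ for $k\leq k_1:=k_0-m_1$ and jumps by $\langle \xi_n,\delta_n\rangle$ at $k=k_1$, inserting a deterministic triangle-shaped drift centered at $r_0=(\eta_0-\eta)/(1-2\eta)$ into the CUSUM and into each half of $V_n$. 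Under strong mixing the tail blocks that generate $\xi_n$ are asymptotically independent of the middle block; conditional on $\xi_n\to u$, the centered partial sum process $r\mapsto N^{-1/2}\sum_{t=1}^{\lfloor Nr\rfloor}\langle \xi_n, F_{m_1+t}-\mu_{P_{m_1+t}}\rangle$ converges to $\sqrt{\langle Su,u\rangle}\,B(r)$ by the FCLT restricted to the middle block (using $\tilde{S}=S$ under contiguous local alternatives), while the drift contribution is of order $\sqrt{N}\langle \xi_n,\delta_n\rangle\to \sqrt{1-2\eta}\langle u,c\delta_0\rangle$. The common factor $\sqrt{\langle Su,u\rangle}$ cancels between $T_n$ and $V_n^{1/2}$, leaving a standard Brownian bridge plus the drift rescaled by $\sqrt{(1-2\eta)/\langle Su,u\rangle}\langle u,c\delta_0\rangle$, which matches the numerator of $G^*_{cond}$; the analogous computation on the two halves of the self-normalizer gives the piecewise-linear drift kernels appearing in its denominator.

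The three cases follow by specialization. Case 1 corresponds to $c=0$, so $G^*_{cond}$ collapses to the null limit $G$ of Theorem~\ref{thm:asymp_null_CPD}, giving $\mathbb{P}(G_n>G_{1-\alpha})\to\alpha$. Case 2 uses the same decomposition: at $k$ near $k_1$ the drift contribution to $T_n(k)$ is of order $\sqrt{N}\langle \xi_n,\delta_n\rangle\asymp \Delta_n\to\infty$, while $V_n^{1/2}(k)=O_p(1)$, hence $G_n\to\infty$ in probability. Case 3 is the continuous mapping theorem applied to the joint functional limit, with the unconditional distribution obtained by integrating against the law of $\tilde{b}(\eta)$. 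The main obstacle is Step 2: the interior shift at $r_0$ produces different piecewise-linear drift kernels in the pre-$k$ and post-$k$ halves of $V_n(k)$, with the form depending on the relative positions of the inner index $s$, the CUSUM index $r$, and the true change-point fraction $r_0$. Tracking all four resulting case distinctions is precisely what yields the $[(s\wedge r_0)((r-r_0)\wedge(r-s))]\vee 0$ term and its symmetric counterpart in the denominator of $G^*_{cond}$; the rest of the argument is mechanical once this joint limit is in hand.
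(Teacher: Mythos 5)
Your proposal follows essentially the same route as the paper's proof: you identify the limit of the (centered) projection direction from Assumption~\ref{ass:FCLT_local_alternative_CPD}, decompose the projected sequence into a centered part plus a piecewise-constant mean with a jump of $\langle \xi_n,\delta_n\rangle$ at $k_0-m_1$ (this is exactly the content of the paper's Lemma~E.2, the Hilbert-space analogue of Proposition~4.1 of \cite{gao2023dimension}, which produces the triangular drift kernels), apply the continuous mapping theorem to the joint functional limit, and in Case~3 condition on the limit of the projection direction using the independent-increments property of $\tilde{W}$, the scalar-Brownian-motion lemma for $\langle u,\tilde{W}(\cdot)\rangle$, and Brownian rescaling of $[\eta,1-\eta]$ to $[0,1]$; Cases~1 and~2 are handled by the same specializations the paper uses.

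Two local points deserve care. First, in Case~2 your stated orders are inconsistent with your own rescaling $\tilde{Z}_k=\langle\xi_n,F_{m_1+k}\rangle$: since $\|\xi_n\|\asymp\Delta_n$ (because $\xi_n\approx-\sqrt{n}\,\delta_n$ when $\Delta_n\to\infty$), one has $\sqrt{N}\,|\langle\xi_n,\delta_n\rangle|\asymp\Delta_n^2$, not $\Delta_n$, and the self-normalizer at $k\approx k_0-m_1$ is $O_P(\Delta_n)$, not $O_P(1)$ (your claimed orders would be correct only if you normalized by the unit-norm direction $\xi_n/\|\xi_n\|$). The ratio still diverges at rate $\Delta_n$, so the conclusion stands, but as written the two displayed rate claims are false; the paper avoids this by showing $N^{1/2}T_n(\lfloor Nr_0\rfloor)/\Delta_n^2$ converges to a positive constant while $N\,V_n(\lfloor Nr_0\rfloor)/\Delta_n^4\to 0$. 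Second, because $G_n$ is a one-sided supremum of a signed ratio, you must check the sign of the drift: the jump $\langle\xi_n,\delta_n\rangle$ is asymptotically negative, which makes the centered CUSUM drift positive and peaked at $r_0$; the paper verifies this implicitly through the explicit positive limit $(1-2\eta)(r_0\wedge r)((1-r_0)\wedge(1-r))$, whereas your sketch leaves the sign unaddressed. Relatedly, the conditioning should be performed on the limiting object $\tilde{b}(\eta)-c\delta_0=u$ after the joint weak convergence (as the paper does), rather than "conditional on $\xi_n\to u$" at the pre-limit level, which is not a well-defined conditioning event.
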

\subsection{Testing Independence between Two Time Series}\label{sec:theory_indep_twoTS}
To derive the asymptotic distribution of our test statistic for the independence of two time series under the null and alternative hypothesis, we first assume that the following invariance principle holds.
\begin{assumption}\label{ass:FCLT_cross_cov_indep_upto_lag_m}
Let $C_{X_0,Y_{-m:m}}$, $C_{Y_0,X_{-m:m}}$ and $E_i$ be as defined in \eqref{eqn:cross_cov_indep_upo_lag_m}, \eqref{eqn:cross_cov_indep_upo_lag_m_2} and \eqref{eqn:Ei_twoTS}, respectively. Then, as $n\to \infty$, it holds that
  $$\left\{\frac{1}{\sqrt{n'}}\sum_{i=1}^{\lfloor n'r\rfloor}\left(E_i-\begin{pmatrix}
    C_{X_0,Y_{-m:m}}\\
    C_{Y_0,X_{-m:m}}
\end{pmatrix}\right)\right\}_{r \in [0,1]}\leadsto W \text{ in } D_{\mathcal{H}(K,L;m)}[0,1],$$
        where $W$ is a $\mathcal{H}(K,L;m)$-valued Brownian motion.
\end{assumption}
In the following proposition, similar to Proposition~\ref{prop:FCLT_Verification_GOF_NULL}, we show that the above invariance principle holds under the null hypothesis under mild moment and mixing conditions.
\begin{proposition}\label{prop:FCLT_cross_cov_indep_upto_lag_m_Verify}
    Let $\{(X_i,\,Y_{i})\}_{i\in \mathbb{Z}}$ be a stationary and strong mixing process with mixing coefficients as $\{\alpha(k)\}_{k\in \mathbb{Z}}$. Moreover, assume that the followings hold for some $\delta>0$:
\begin{enumerate}
    \item $\bE |K_1(X_0,X_0)|^{2+\delta},\;\bE |L_{2m+1}(Y_{-m:m},Y_{-m:m})|^{2+\delta},\; \bE |L_1(Y_0,Y_0)|^{2+\delta} \text{ and }$\\
    $\bE |K_{2m+1}(X_{-m:m},X_{-m:m})|^{2+\delta} $ are all finite.
    \item$\sum_{m=1}^{\infty} \alpha(m)^{\delta /(\delta+2)}<\infty$.
\end{enumerate}
Then, Assumption~\ref{ass:FCLT_cross_cov_indep_upto_lag_m} is satisfied.
\end{proposition}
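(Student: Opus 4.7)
The plan is to follow the same blueprint as Proposition~\ref{prop:FCLT_Verification_GOF_NULL}: reduce the statement to the Hilbert-valued FCLT of \cite{chen1998central} for a strong-mixing partial-sum process in the separable Hilbert space $\mathcal{H}(K,L;m)$, together with an extra step that removes the random centering by $\hat{\mu}_1,\hat{\nu}_1,\hat{\mu}_2,\hat{\nu}_2$ appearing in the definition \eqref{eqn:Ei_twoTS} of $E_i$.

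\textbf{Step 1 (Oracle sequence).} Define $\tilde{E}_k$ by replacing each sample mean in \eqref{eqn:Ei_twoTS} by its population counterpart $\mu_1,\nu_1,\mu_2,\nu_2$. Since $\tilde{E}_k$ is a measurable function of the $(2m+1)$-window $\{(X_{k+j},Y_{k+j}):0\le j\le 2m\}$, the sequence $\{\tilde{E}_k\}$ is stationary and strong mixing with coefficients bounded by $\tilde{\alpha}(j)\le \alpha((j-2m)\vee 0)$, so the summability hypothesis (2) carries over unchanged.

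\textbf{Step 2 (Moment bound and FCLT for $\tilde{E}_k$).} In a Hilbert tensor product $\|f\otimes g\|_{HS}=\|f\|\,\|g\|$, and the reproducing identity gives $\|K_1(x,\cdot)\|^2=K_1(x,x)$, with analogues for $L_1,K_{2m+1},L_{2m+1}$. Hence hypothesis (1) supplies $(4+2\delta)$-th moments on each factor norm, and Cauchy--Schwarz yields
\[
\bE\|\tilde{E}_k\|^{2+\delta}\;\lesssim\; \bigl(\bE\|K_1(X_0,\cdot)-\mu_1\|^{2(2+\delta)}\bigr)^{1/2}\bigl(\bE\|L_{2m+1}(Y_{-m:m},\cdot)-\nu_2\|^{2(2+\delta)}\bigr)^{1/2}+\cdots\;<\;\infty.
\]
Combined with Step 1 and the separability of $\mathcal{H}(K,L;m)$, the hypotheses of \cite{chen1998central} are satisfied exactly as in Proposition~\ref{prop:FCLT_Verification_GOF_NULL}, producing
\[
\Bigl\{\tfrac{1}{\sqrt{n'}}\sum_{i=1}^{\lfloor n'r\rfloor}\bigl(\tilde{E}_i-(C_{X_0,Y_{-m:m}},\,C_{Y_0,X_{-m:m}})^{\top}\bigr)\Bigr\}_{r\in[0,1]}\leadsto W\text{ in }D_{\mathcal{H}(K,L;m)}[0,1].
\]

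\textbf{Step 3 (Removing the sample means).} Write $\hat{\mu}_j=\mu_j+\epsilon_{\mu_j}$ and $\hat{\nu}_j=\nu_j+\epsilon_{\nu_j}$. Bilinearity of the tensor product produces, in each coordinate of $E_i-\tilde{E}_i$, an expansion of the form
\[
(a_i-\hat{\mu})\otimes(b_i-\hat{\nu}) - (a_i-\mu)\otimes(b_i-\nu) \;=\; -\epsilon_\mu\otimes(b_i-\nu)-(a_i-\mu)\otimes\epsilon_\nu+\epsilon_\mu\otimes\epsilon_\nu.
\]
Apply Proposition~\ref{prop:FCLT_Verification_GOF_NULL} to each auxiliary factor process $\{K_1(X_i,\cdot)-\mu_1\}$, $\{L_1(Y_i,\cdot)-\nu_1\}$, $\{K_{2m+1}(X_{i-m:i+m},\cdot)-\mu_2\}$, $\{L_{2m+1}(Y_{i-m:i+m},\cdot)-\nu_2\}$ (whose moment and mixing hypotheses are implied by those of the present proposition): each gives $\|\epsilon_\ast\|=O_p(n^{-1/2})$ together with tightness of the partial-sum process $r\mapsto \frac{1}{\sqrt{n'}}\sum_{i=1}^{\lfloor n'r\rfloor}(b_i-\nu)$ in $D_{\mathcal{H}}[0,1]$. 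Combining these via $\|f\otimes g\|_{HS}=\|f\|\,\|g\|$ yields
\[
\sup_{r\in[0,1]}\Bigl\|\tfrac{1}{\sqrt{n'}}\sum_{i=1}^{\lfloor n'r\rfloor}(E_i-\tilde{E}_i)\Bigr\|_{HS}\;=\;O_p(n^{-1/2})\cdot O_p(1)\;+\;\sqrt{n'}\cdot O_p(n^{-1})\;=\;o_p(1).
\]
A Slutsky argument in $D_{\mathcal{H}(K,L;m)}[0,1]$ then combines Steps 2 and 3 to deliver Assumption~\ref{ass:FCLT_cross_cov_indep_upto_lag_m}.

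\textbf{Main obstacle.} The principal technical work is Step 3: making the bilinear expansion rigorous requires tightness in the appropriate cadlag space for each of the four factor sequences individually, which in turn requires rerunning the Chen--White verification (as in Proposition~\ref{prop:FCLT_Verification_GOF_NULL}) on each of them. The Hilbert-Schmidt norm identity and an $r$-uniform product bound are what allow the linear cross-terms to be shown to be $o_p(1)$ uniformly rather than just pointwise; everything else is routine bookkeeping once this auxiliary infrastructure is in place.
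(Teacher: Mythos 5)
Your proposal is correct and follows essentially the same route as the paper: define the oracle sequence $\tilde{E}_k$ with population centerings, bound $\bE\|\tilde{E}_k\|^{2+\delta}$ via $\|f\otimes g\|_{HS}=\|f\|\,\|g\|$ and Cauchy--Schwarz, invoke the Chen--White FCLT as in Proposition~\ref{prop:FCLT_Verification_GOF_NULL}, and then show the sample-mean corrections are uniformly negligible through the bilinear tensor expansion combining root-$n$ consistency of the full-sample means with uniform $O_p(1)$ bounds on the normalized factor partial sums. The only cosmetic difference is that the paper obtains these last two ingredients from Theorem 1 of \cite{sharipov2016sequential} applied to the four factor processes, whereas you re-apply Proposition~\ref{prop:FCLT_Verification_GOF_NULL}; both are valid under the stated moment and mixing conditions.
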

\noindent The proof of Proposition~\ref{prop:FCLT_cross_cov_indep_upto_lag_m_Verify} can be found in Appendix \ref{appnc}. Assumption~\ref{ass:FCLT_cross_cov_indep_upto_lag_m} helps us to do inference without any further knowledge of the joint distributions. We are now ready to characterize the asymptotic behavior of our test statistic $U^{(1)}_n$ defined in \eqref{eqn:test_stat_indep_twoTS}.
\begin{theorem}\label{thm:asymp_indep_lag_m}
        Suppose Assumption~\ref{ass:FCLT_cross_cov_indep_upto_lag_m} holds. Then,
        \begin{enumerate}
            \item Under $H_0$, as $n \to \infty,$ it holds that
            $\sqrt{m_2'}\cdot U^{(1)}_n\overset{D}{\to}U,
$
where $U$ is defined in (\ref{eqn:defn_of_U}). 
            \item Under the local alternative $H_{1n}'':\begin{pmatrix}
    C_{X_0,Y_{-m:m}}\\
    C_{Y_0,X_{-m:m}}
\end{pmatrix}=\frac{\tilde{\delta}_n}{\sqrt{n'}}\in \mathcal{H}(K,L;m),$ as $n \to \infty$, the followings hold:
    \begin{enumerate}
        \item [(a)] If $\left\|\tilde{\delta}_n\right\|^2\to 0$, then $\mathbb{P}(\sqrt{m_2'}\cdot U^{(1)}_n>U_{1-\alpha}) \to \alpha$. 
        \item[(b)] If $\left\|\tilde{\delta}_n\right\|^2 \to \infty$, then $\mathbb{P}(\sqrt{m_2'}\cdot U^{(1)}_n>U_{1-\alpha}) \to 1$. 
        \item [(c)] If $\left\|\tilde{\delta}_n\right\|^2 \to c^2$, where $c \in (0,\infty)$, then whenever $\frac{\tilde{\delta}_n}{\left\|\tilde{\delta}_n\right\|}\to \tilde{\delta}_0 \in \mathcal{H}(K,L;m),$ we have that
       \begin{align*}
          &\mathbb{P}(\sqrt{m_2'}\cdot U^{(1)}_n>U_{1-\alpha})\\
          &\to \int \mathbb{P}\left(\frac{B(1)+\frac{c (1-\eta) \langle \tilde{\delta}_0, u \rangle}{\tilde{\Theta}(u)}+\frac{c^2 \eta (1-\eta)}{\tilde{\Theta}(u)}}{\sqrt{\int_{0}^{1}\left(B(s)-sB(1)\right)^2 ds}}>U_{1-\alpha}\right)\mu_{W(\eta)}(du),
        \end{align*}
         where $W$ is a Brownian motion in $\mathcal{H}(K,L;m)$,\\ $\tilde{\Theta}(u):=\sqrt{\langle S(u+c \eta \tilde{\delta}_0), u+c \eta \tilde{\delta_0}\rangle}$, and $S$ is the covariance operator of $W(1)$. Here $\mu_{W(\eta)}$ is the distribution of $W(\eta)$ and the above integral is a Lebesgue integral on the Hilbert space $\mathcal{H}(K,L;m)$.
    \end{enumerate}
        \end{enumerate}
    \end{theorem}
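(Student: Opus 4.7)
The plan is to reduce Theorem~\ref{thm:asymp_indep_lag_m} to the same pivotal self-normalization argument underlying Theorems~\ref{thm:asymp_null_GOF} and \ref{thm:GOF_asymptotic_power}, transported to the cross-covariance Hilbert space $\mathcal{H}(K,L;m)$ and using Assumption~\ref{ass:FCLT_cross_cov_indep_upto_lag_m} in place of Assumption~\ref{ass:FCLT_GOF_NULL}. A preliminary step replaces the plug-in means $\hat{\mu}_1,\hat{\nu}_1,\hat{\mu}_2,\hat{\nu}_2$ by their population counterparts: writing $E_k=\tilde{E}_k+R_k$, each of the four mean estimation errors is $O_p(n'^{-1/2})$ in the relevant marginal Hilbert space by Assumption~\ref{ass:FCLT_cross_cov_indep_upto_lag_m}; expanding the tensor products in $R_k$ and summing shows that $R_k$ contributes $o_p(1)$ to both $\sqrt{n'}\sqrt{m_2'}\,T_n^{(1)}$ and $\sqrt{n'}\,W_n^{(1)}$, the two quantities whose ratio we study.

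For part (1), under $H_0$ we have $\mathbb{E}\tilde{E}_k=\mathbf{0}$, so Assumption~\ref{ass:FCLT_cross_cov_indep_upto_lag_m} gives $\sqrt{n'}\,\hat{\lambda}\overset{D}{\to}W(\eta)/\eta$. I would then condition on the $\sigma$-algebra generated by the first split; strong mixing and stationarity imply that the second split's partial-sum process is asymptotically independent of $\hat{\lambda}$. Hence, conditional on $\sqrt{n'}\,\hat{\lambda}=u\in\mathcal{H}(K,L;m)$, the scalar series $\sqrt{n'}\,F_j^{(1)}=\langle u,\tilde{E}_{m_1'+j}\rangle+o_p(1)$ is stationary with long-run variance $\sigma^2(u):=\langle Su,u\rangle$, hence satisfies a univariate FCLT. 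Applying the functional continuous-mapping argument that underlies Shao's self-normalization theory, the factor $\sigma(u)$ cancels between $\sqrt{n'}\sqrt{m_2'}\,T_n^{(1)}$ and $\sqrt{n'}\,W_n^{(1)}$, producing the pivotal conditional limit $U$ of \eqref{eqn:defn_of_U} that does not depend on $u$; the unconditional limit is therefore also $U$, yielding part (1).

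For part (2), the same conditioning strategy applies to the triangular array with $\mathbb{E}\tilde{E}_k=\tilde{\delta}_n/\sqrt{n'}$; a triangular-array analog of Proposition~\ref{prop:FCLT_cross_cov_indep_upto_lag_m_Verify} parallel to Proposition~\ref{prop:verify_FCLT_GOF_Alt} supplies the required FCLT for the centered array. I would decompose $\sqrt{n'}\,\hat{\lambda}$ into its centered FCLT piece and the deterministic drift $\tilde{\delta}_n$, and analogously decompose $\sqrt{m_2'}\,\bar{E}_{(2)}:=(1/\sqrt{m_2'})\sum_{j=1}^{m_2'}\tilde{E}_{m_1'+j}$; expanding $\sqrt{n'}\sqrt{m_2'}\,T_n^{(1)}=\langle\sqrt{n'}\,\hat{\lambda},\sqrt{m_2'}\,\bar{E}_{(2)}\rangle$ gives, conditional on $u=W(\eta)$, a Gaussian term with variance $\tilde{\Theta}^2(u)$ plus a deterministic drift matching the numerator in case (c). Crucially, the drift in the partial-sum process $(1/\sqrt{m_2'})\sum_{j=1}^{\lfloor m_2' s\rfloor}\tilde{E}_{m_1'+j}$ is linear in $s$ and so is annihilated by the demeaning inside $W_n^{(1)}$; hence $\sqrt{n'}\,W_n^{(1)}\to\tilde{\Theta}(u)\sqrt{\int_0^1(B(s)-sB(1))^2\,ds}$, and cancellation of $\tilde{\Theta}(u)$ in the ratio yields the displayed conditional limit. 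Integrating against $\mu_{W(\eta)}$ gives the stated formula for case (c); cases (a) and (b) drop out by letting the drift shrink to zero or blow up, respectively, recovering asymptotic level $\alpha$ and consistency.

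The main technical obstacle is making the conditioning step in part (1) rigorous in the infinite-dimensional setting: one needs a joint weak limit in $\mathcal{H}(K,L;m)\times D[0,1]$ for (a) the first split's Hilbert-valued partial sum and (b) the projected second split's scalar partial-sum process, together with asymptotic independence across the split boundary. While such asymptotic independence is standard for strongly mixing scalar sequences, transporting it uniformly over the potential projection direction $u$ in the Hilbert setting and verifying measurability of the conditional limit requires care, most naturally through an adaptation of the conditioning lemma that underlies the proof of Theorem~\ref{thm:asymp_null_GOF}. A secondary obstacle is the plug-in replacement step, because the four estimated means enter tensored with $O_p(n'^{-1/2})$ averages in both splits, and one must check that none of the resulting cross-terms survives the $\sqrt{n'}\sqrt{m_2'}$ scaling.
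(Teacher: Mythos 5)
Your high-level plan (project along the first-split estimate, self-normalize, argue pivotality by conditioning, then track the drift for the local-power cases) is the right one, but the way you execute the conditioning is not how the paper does it and, as you yourself flag, it leaves the key step unproved. You propose to condition on the $\sigma$-algebra generated by the first split and to invoke ``strong mixing and stationarity'' to get asymptotic independence of the second split's partial-sum process from $\hat{\lambda}$, and then a conditional univariate FCLT for $\langle u,\tilde{E}_{m_1'+j}\rangle$ given $\hat{\lambda}=u$. Two problems: first, the theorem assumes only Assumption~\ref{ass:FCLT_cross_cov_indep_upto_lag_m}, so mixing is not available in the proof of the theorem (it enters only in Proposition~\ref{prop:FCLT_cross_cov_indep_upto_lag_m_Verify}, which verifies the assumption); second, under temporal dependence the conditional law of the second split given the first is not the stationary law, so the ``conditional FCLT given $\hat{\lambda}=u$'' is exactly the hard statement you would need to prove, and you do not. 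The paper's proof (which simply repeats the arguments of Theorems~\ref{thm:asymp_null_GOF} and~\ref{thm:GOF_asymptotic_power} in $\mathcal{H}(K,L;m)$) avoids this entirely: it applies the continuous mapping theorem to the bivariate functional $S_n(r)=\langle n'^{-1/2}\sum_{i\le m_1'}E_i,\,n'^{-1/2}\sum_{j=m_1'+1}^{\lfloor n'r\rfloor}E_j\rangle$ to obtain a limit that is a functional of the single Hilbert-valued Brownian motion $W$, and only then conditions on $W(\eta)=u$, using the independent increments of the limit $W$ (plus Lemma~\ref{lemma:distn_inner.prod_BM}) to show the conditional law does not depend on $u$; no pre-limit asymptotic-independence or conditional invariance principle is ever needed. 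Your proof becomes correct if you replace your data-level conditioning by this limit-level conditioning.

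A second, smaller issue is your preliminary plug-in step $E_k=\tilde{E}_k+R_k$. Assumption~\ref{ass:FCLT_cross_cov_indep_upto_lag_m} is stated for the $E_i$ with the estimated means already inside, so in the theorem's proof no replacement is needed; and your claim that the four marginal mean-estimation errors are $O_p(n'^{-1/2})$ ``by Assumption~\ref{ass:FCLT_cross_cov_indep_upto_lag_m}'' does not follow, since that assumption concerns the joint tensor-space process, not the marginal sample means. In the paper this comparison of $E_i$ with $\tilde{E}_i$ is the content of Proposition~\ref{prop:FCLT_cross_cov_indep_upto_lag_m_Verify} (using the moment and mixing conditions and Theorem 1 of Sharipov et al.), not of Theorem~\ref{thm:asymp_indep_lag_m}. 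Your drift analysis for part (2) — the deterministic component being linear in $s$ and hence annihilated by the bridging in $W_n^{(1)}$, with the Gaussian fluctuation carrying variance $\tilde{\Theta}^2(u)$ through the shifted direction $u+c\eta\tilde{\delta}_0$ — does match the paper's computation, though case (b) needs the explicit order comparison (numerator of exact order $\|\tilde{\delta}_n\|^2$ versus self-normalizer of order $O_P(\|\tilde{\delta}_n\|^2)$ after the drift cancellation) rather than the one-line gloss you give.
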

    \begin{remark}
        Testing for the presence of a change point, and the  independence between two series are  important tasks for functional time series as well; see \cite{gabrys2007portmanteau}, \cite{zhangshao2011},  \cite{horvath2012inference}, \cite{horvath2015testing} for the related literature. In these works,   the functional observations are modeled as random curves observed in $\mathcal{L}_2[0,1]$, which is the space of square-integrable real-valued functions on $[0,1]$.  Since our methods are constructed using kernels, they can be implemented in such setups. Moreover, as we pointed out 
 in Section~\ref{subsection:review_rkhs} and Section~\ref{sec:theory}, the theoretical guarantees still remain valid as long as the underlying space is a separable Hilbert space, which is indeed true for any continuous kernel on $\mathcal{L}_2[0,1]$, and certain weak dependence and moment  conditions. Again, note that the kernels need to be characteristic in $\mathcal{L}_2[0,1]$, else we may lose power against certain alternatives; see \cite{duncantwosample} for characterization of such kernels defined on a general Hilbert space, which includes the Gaussian kernel, the inverse multiquadratic kernel etc.  
    \end{remark}
    \begin{remark}
        Non-Euclidean data or object-valued data are becoming increasingly common in many fields, such as biological or social sciences. Examples of such data sets include graphs, networks, distributional data, covariance matrices, etc. Change point detection or testing for serial independence are crucial problems in such data sets as well; see \cite{dubey2020frechet}, \cite{dubey2023change}, \cite{jiang2024two}, \cite{jiang2024testing} for some of the recent developments on the above problems. Due to the use of kernels, our proposed methodology can be applied to object-valued temporally dependent data. Moreover, the asymptotic theory we have developed translates to such setup under some standard regularity conditions. Indeed, for a given metric space $(\Omega,d)$, \cite{zhang2024dimension} showed that the Gaussian and Laplacian kernels, defined as $K_1(x,y):=\exp{(-\sigma\,d^2(x,y))}$ and $K_2(x,y):=\exp{(-\sigma\,d(x,y))}$, are positive-definite. Moreover, they are universal kernels on such metric spaces, which implies that the mean embedding of any probability distribution on $\Omega$ into the RKHS generated by such a kernel, if it exists, is unique. We refer the reader to \cite{bhattacharjee2023nonlinear} for more details on this matter. 
    \end{remark}     
    \section{Numerical Experiments}\label{sec:simulations}
    In this section, we analyze the finite sample performance of our proposed methods in terms of empirical size, power, and computational time and compare them to those of some existing counterparts. Simulation results for goodness-of-fit testing are reported in Section~\ref{sec:simulations_GOF}, whereas Section~\ref{sec:simulations_CPD} presents simulation results on  testing for a change point in the marginal distribution.   Finally, finite sample performance of our methodology in testing for change point and testing for independence between two series in the context of functional data is examined in Section~\ref{sec:simulations_FDA} and \ref{sec:sim_Indep_FDA}, respectively. In all these subsections, we denote our proposed methodology by "SS-SN" (Sample Splitting \& Self-Normalization).

\subsection{Goodness-of-Fit Testing}\label{sec:simulations_GOF}

In this subsection, we investigate the empirical performance of our proposed methodology in the context of goodness-of-fit testing. In particular, we generate data from the same AR$(1)$ model defined in ~\eqref{eqn:data_null_CPD}. We are interested in testing the null hypothesis $H_0:P=\cN(0,1/(1-\kappa^2))$ at $5\%$ significance level, where $\kappa\in \{-0.5, 0, 0.5\}$ is assumed to be known. The sample splitting ratio $\eta$ in our proposed method is chosen from the grid of values $\{0.2, 0.35, 0.5,  0.7\}.$ We work with the Gaussian kernel $K(x,y)=\exp{(-\sigma(x-y)^2)}$, where $\sigma$ is chosen via the "Median Heuristics", i.e.,
\[\sigma=Median\left\{\frac{1}{2|Y_{i_1}-Y_{i_2}|^2}, 1\leq i_1\neq i_2 \leq n\right\}.\]
According to \cite{kellner2019one}, we have the following formulae for the mean embedding of $P$ into $\mathcal{H}(K)$ and its RKHS norm, when $P=\cN(0,1/(1-\kappa^2))$:
$$\mu_P(x)=\left(1+\frac{2\sigma}{1-\kappa^2}\right)^{-1/2}\exp{\left(-\frac{\sigma\,x^{\top}x}{ \left(1+\frac{2\sigma}{1-\kappa^2}\right)}\right)};\;\;\;\left\|\mu_{P}\right\|_{\mathcal{H}(K)}^2=\left(1+\frac{4\sigma}{1-\kappa^2}\right)^{-\frac{1}{2}}.$$ 
We compare the performance of our method against the fixed-$b$ subsampling method by \cite{shao2013fixed}. We briefly describe their methodology for the sake of completeness.  
 Let $F_{a:b}(x):=\frac{1}{b-a+1}\sum_{i=a}^{b} 1(Y_i\leq x)$ and let $F_0$ be the pre-specified distribution under the null hypothesis. The p-value based on fixed-$b$ subsampling is defined as:
$$
pval_{n, l}^*=\frac{1}{N} \sum_{t=1}^N \mathbf{1}\left\{\sqrt{l}\left\|F_{t: t+l-1}(x)-F_{1: n}(x)\right\|_{\infty} \geq \sqrt{n}\left\|F_{1: n}(x)-F_0\left(x \right)\right\|_{\infty}\right\}.
$$
where $\|\cdot\|_{\infty}$ is the infinity norm, $l=\lfloor n b\rfloor, N=n-l+1$ and $b \in(0,1)$ is the blocklength
parameter. The distribution of $pval_{n, l}^*$ is not pivotal and depends on b, but can be approximated by a second-level subsampling. Let $n^{\prime}$ be the second-level subsampling length, $l^{\prime}=\max \left(\left\lceil n^{\prime} b\right\rceil, 2\right)$ and $N^{\prime}=n^{\prime}-l^{\prime}+1$. For $t=1,2, \ldots, n-n^{\prime}+1$, define
$$
h_{n^{\prime}, t}^*=\frac{1}{N^{\prime}} \sum_{j=t}^{t+N^{\prime}-1} 1\left\{\sqrt{l^{\prime}}\left\|F_{j: j+l^{\prime}-1}(x)-F_{t: t+n^{\prime}-1}(x)\right\|_{\infty} \geq \sqrt{n^{\prime}}\left\|F_{t: t+n^{\prime}-1}(x)-F_{1:n}\left(x\right)\right\|_{\infty}\right\}.
$$

Then the final p -value is defined as
$$
pval^*=\frac{1}{n-n^{\prime}+1} \sum_{t=1}^{n-n^{\prime}+1} 1\left\{h_{n^{\prime}, t}^*<\text { pval }_{n, l}^*\right\}.
$$
 Then, the null hypothesis is rejected at level $\alpha$ if $pval^*\leq \alpha.$ We let $b\in \{0.05, 0.15, 0.25\}$ and set the second-level subsampling length $n'=30.$ The experiment is repeated over $5000$ independent Monte Carlo replicates and the average rejection rates in percentage (and run times in milliseconds) are tabulated in Table \ref{tab:Size_GOF}. 
While the computational complexity of both methods seems comparable,
the subsampling method faces serious size distortion, especially for higher values of $b$. In contrast, the SS-SN methodology has excellent size accuracy across all the values of the sample splitting ratio. This indicates that the size accuracy of our proposed methodology is not sensitive to the choice of the splitting ratio.\\
\begin{table*}
\centering
    \caption{Empirical Size in \% (average run time in milliseconds) in Goodness-of-Fit Testing}
    \label{tab:Size_GOF}
    \begin{tabular}{@{}l|cccc|ccc@{}}
        \hline
     & \multicolumn{4}{c|}{SS-SN} & \multicolumn{3}{c}{Fixed-b}\\
   \hline
        $\kappa$ & $\eta=0.2$ & $\eta=0.35$ & $\eta=0.5$ & $\eta=0.7$ & $b=0.05$ & $b=0.15$ & $b=0.25$\\
        \hline 
       -0.5 & 5.18 (3.67)& 4.94 (3.99)& 4.38 (4.07)& 5.28  (3.81)& 7.52 (10.64)&10.88 (10.11)&13.56 (9.2)\\ 
        0 & 4.88 (3.64)& 5.1 (3.98)& 4.86 (4.08)& 5.1 (3.1)& 8.78 (10.76)& 10.12 (10.18)&13.96 (9.23)\\ 
        0.5 & 5.5 (2.92)&5.54 (3.2)&6.28 (3.29)&5.74 (3.1)&4.54 (10.84)&9.56 (9.99)& 14.22 (9.09)\\ \hline
    \end{tabular}
\end{table*}

\begin{figure}[t]
    \centering

    % Row 1
    \begin{subfigure}[t]{7.5cm}
        \includegraphics[width=7.5cm]{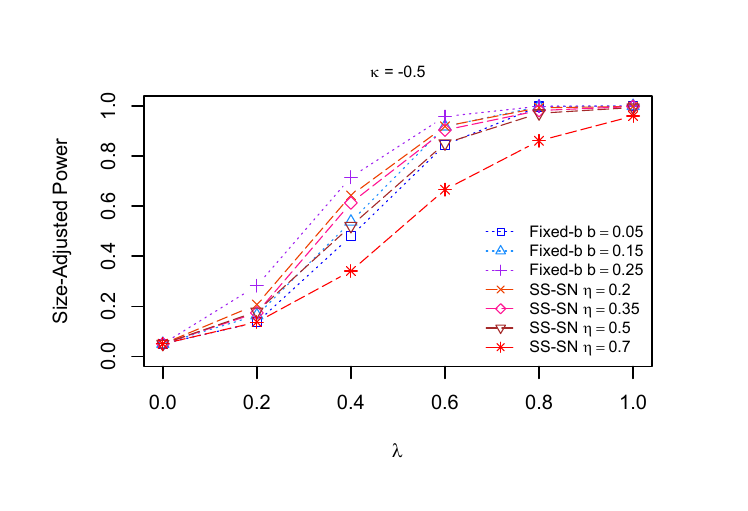}
        \caption{}
    \end{subfigure}
    \hspace{1cm}
    \begin{subfigure}[t]{7.5cm}
        \includegraphics[width=7.5cm]{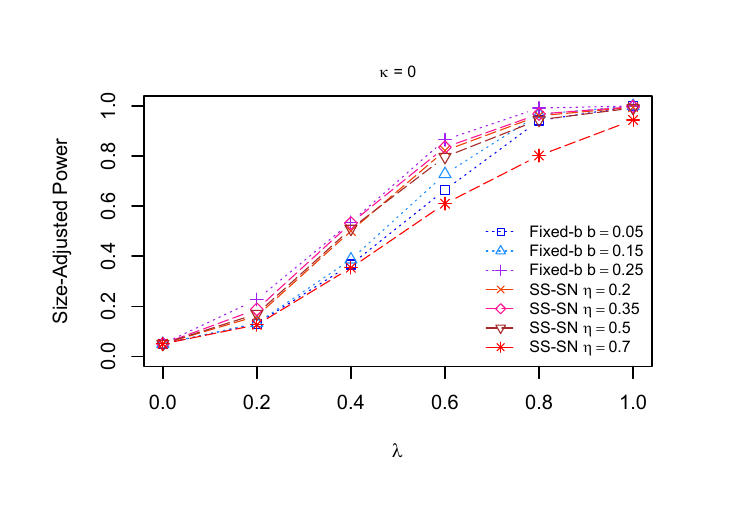}
        \caption{}
    \end{subfigure}

    \vspace{1em} % space between rows

    % Row 2
    \begin{subfigure}[t]{7.5cm}
        \includegraphics[width=7.5cm]{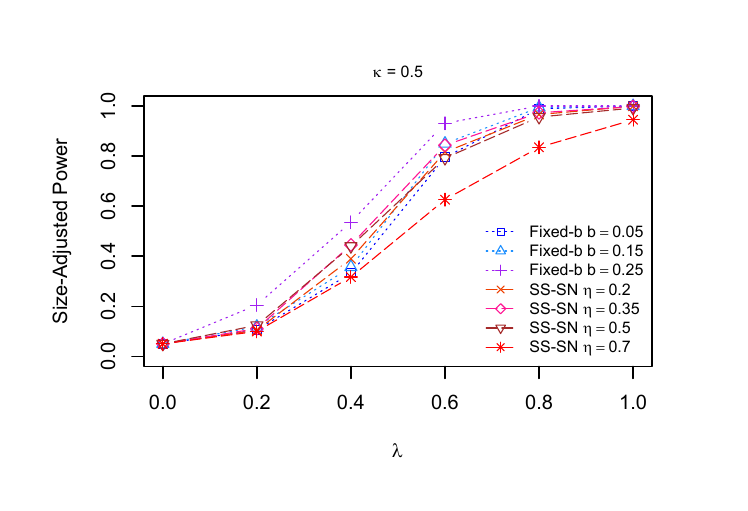}
        \caption{}
    \end{subfigure}

    \caption{Size-adjusted power in goodness-of-fit testing}
    \label{fig:power_GOF}
\end{figure}
We now analyze the empirical size-adjusted power of the SS-SN and fixed-$b$ subsampling method. We generate the data from 
     $$X_t=(1-\delta_t) Y_{t}+\delta_t \epsilon_t;\;t=1,2,\cdots,n,$$
where $Y_t=\kappa Y_{t-1}+e_t$, $e_t$ s are iid $N(0, 1)$, $\epsilon_t$s are iid $Exp(1)-1$ and $\delta_t$ s are iid Bernoulli$(\lambda)$ with $\lambda\in \{0, 0.2, 0.4,\cdots,1\} $. Moreover, $\{e_t\}_{t\in \mathbb{Z}}$, $\{\epsilon_t\}_{t\in \mathbb{Z}}$ and $\{\delta_t\}_{t \in \mathbb{Z}}$ s are mutually independent.
The values of all other parameters involved remain the same as in the size analysis. We plot the size-adjusted power of different methods against $\lambda$ in Figure \ref{fig:power_GOF}.
In general, the SS-SN method exhibits very mild power loss in comparison to the fixed-$b$ subsampling method. However, the power loss seems to be significant for the splitting ratio $\eta=0.7$. This is consistent with the finding in \cite{zhang2024another} and suggests that we should set $\eta\le 0.5$ to prevent potential power loss.

\subsection{Testing for a Change Point in Euclidean Data}\label{sec:simulations_CPD}

In this subsection, we investigate the finite sample performance of our proposed change point testing in marginal distribution against that of the sequential block bootstrap method proposed by \cite{sharipov2016sequential}. To investigate the finite sample size, we generate data from the following univariate AR$(1)$ model:
\begin{align}\label{eqn:data_null_CPD}
    Y_t=\kappa Y_{t-1}+\epsilon_t;\;t=1,2,\cdots, 200,
\end{align}
where $\epsilon_t$ s are iid $\mathcal{N}(0,1)$ and $\kappa \in \{-0.5, 0, 0.5\}$.
%We test the null hypothesis of no change point in marginal distribution at $5\%$ significance level. We let  
In the implementation of our methodology, we choose the splitting parameter $\eta=0.1$ and  work with Gaussian and Student kernels, which are given by
$$K_1(x,y)=\exp{(-\sigma(x-y)^2)},\;\;K_2(x,y)=(1+\sigma(x-y)^2)^{-1},$$
respectively. The bandwidth of the kernels $\sigma$ is chosen via the "Median Heuristics", i.e.,
$$\sigma=\operatorname{Median}\left\{\frac{1}{2|Y_{i_1}-Y_{i_2}|^2}, 1\leq i_1\neq i_2 \leq n\right\}.$$ To implement the sequential block bootstrap method of \cite{sharipov2016sequential}, we set the weight function $w(t)$ to be Gaussian, i.e., $w(t)=\frac{1}{\sqrt{2\pi}}\exp{(-x^2/2)}$, and the blocklength is chosen to be $5$ or $10$ (denoted in Figure~\ref{fig:power_CPD} as "SBB-b5" and "SBB-b10", respectively). The number of bootstrap replicates is chosen to be $499$. The entire experiment is repeated over $1000$ independent Monte Carlo replicates. The average rejection rates in percentage (and run times in seconds) are tabulated in Table \ref{tab:Size_CPD}, which shows that 
 our proposed test has excellent size accuracy across all values of $\kappa$ and there seems to be no apparent effect from the choice of the kernel. In contrast, the sequential block bootstrap method is over-conservative for negative $\kappa$ and exhibits noticeable size distortion when $\kappa$ is positive. Another noticeable advantage of our method is  in computational time. 
 %in terms of the average time each method takes per Monte Carlo replicate. 
 While our method can be executed in less than 50 miliseconds on average, the sequential block bootstrap method can take up to about half an hour. 
\begin{table*}
\centering
    \caption{Empirical Size in \% (average run time in seconds) in Change Point Testing}
    \label{tab:Size_CPD}
    \begin{tabular}{@{}l|cc|cc@{}}
         \hline
     & \multicolumn{2}{c}{SS-SN} & \multicolumn{2}{c}{Sequential Block Bootstrap}\\
    \hline 
        $\kappa$ & Gaussian Kernel & Student Kernel & Blocklength=5 & Blocklength=10\\
        \hline 
       -0.5 & 5.4 (0.04)& 5.5 (0.03)& 2.4 (850.04)& 3.5 (502.47)\\ 
        0 & 5.9 (0.04)& 5.7 (0.04)& 4.9 (1137.12)& 3.5 (963.69)\\ 
        0.5 & 5.6 (0.04)& 6  (0.04)& 10.4 (1036.67)& 7.1 (1509.97)\\ \hline
    \end{tabular}
\end{table*}

To investigate the size adjusted power of the above methods, we generate data from the following model, where there is a single change point for the marginal distribution in the middle of the sequence:
$$X_t=(1-\delta_t) Y_{t}+\delta_t Y_{t} 1_{t\leq \lfloor n/2\rfloor}+\delta_t \epsilon_t 1_{t> \lfloor n/2\rfloor};\;t=1,2,\cdots,n,$$
where $Y_t=\kappa Y_{t-1}+e_t$, $e_t$s are iid $\cN(0,1)$,   $\epsilon_t$s  are iid $Exp(1)-1$ and $\delta_t$s are iid Bernoulli$(\lambda)$. Moreover, $\{e_t\}_{t\in \mathbb{Z}}$, $\{\epsilon_t\}_{t\in \mathbb{Z}}$ and $\{\delta_t\}_{t \in \mathbb{Z}}$ s are mutually independent. We choose\\ $\lambda \in \{0,\,0.2,\,0.4,\,\cdots, 1\}$. The choice of $\kappa$, the splitting ratio $\eta$, the kernels and their bandwidths in our method, and the blocklengths and weight function in the sequential block bootstrap method remain the same as our choices in the finite sample size analysis. However, we reduce the number of bootstrap replicates to 299 in this case in the interest of computational time. To perform size adjustment for the bootstrap-based test in \cite{sharipov2016sequential}, we follow the methodology proposed by \cite{dominguez2000size}. The experiment is repeated over $1000$ independent Monte Carlo replicates and we plot the size-adjusted power of different methods against $\lambda$ in Figure \ref{fig:power_CPD}. 
\begin{figure}[t]
    \centering

    % Row 1
    \begin{subfigure}[t]{7.5cm}
        \includegraphics[width=7.5cm]{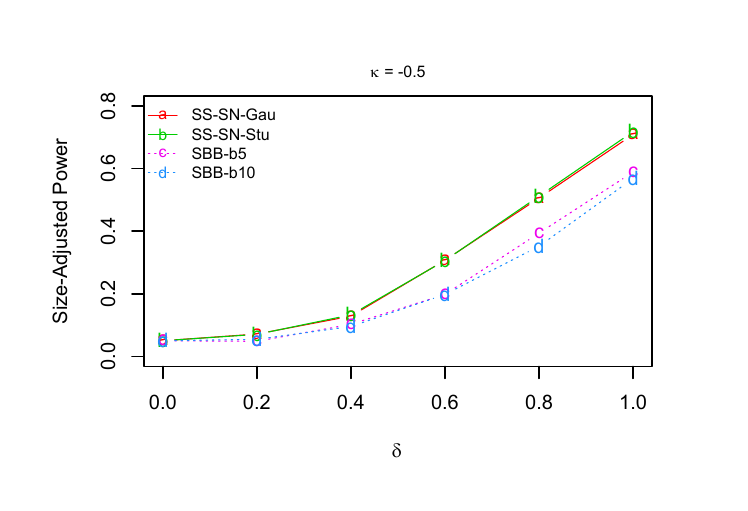}
        \caption{}
    \end{subfigure}
    %\hfill
    \hspace{1cm}
    \begin{subfigure}[t]{7.5cm}
        \includegraphics[width=7.5cm]{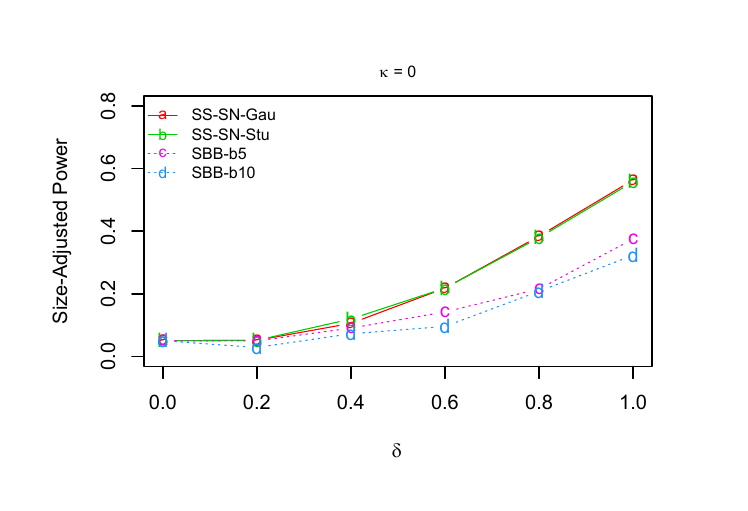}
        \caption{}
    \end{subfigure}

    %\vspace{1em} % space between rows

    % Row 2
    \begin{subfigure}[t]{7.5cm}
        \includegraphics[width=7.5cm]{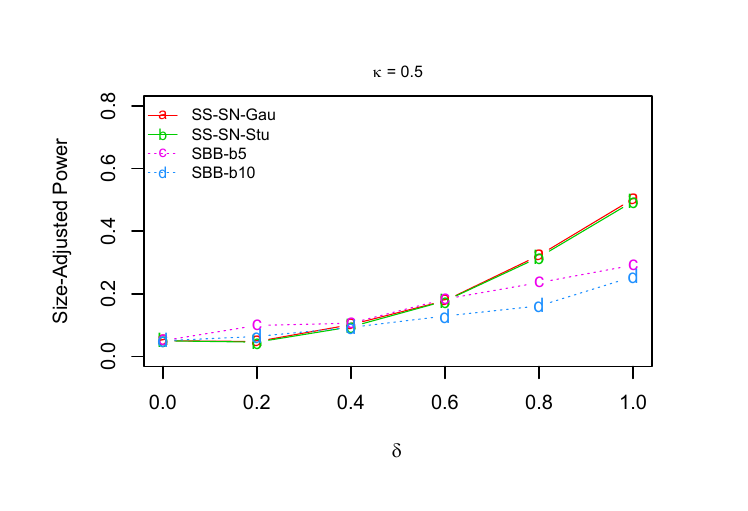}
        \caption{}
            \end{subfigure}
    \caption{  Size-adjusted power in change point testing in Euclidean data}
    \label{fig:power_CPD}
    \end{figure}
    
\noindent As one can see, the power of our SS-SN method is competitive, and it outperforms the sequential block bootstrap most of the time. In addition, the two kernels deliver almost identical powers, suggesting potential insensitivity of our method with respect to the kernel choice.

\subsection{Testing for a Change Point in Functional Time Series}\label{sec:simulations_FDA}
In this subsection, we investigate the finite sample performance of our proposed method in testing for a single change point in the marginal distribution of a functional time series. We generate the data under the null hypothesis from two types of processes. Under the first data-generating process, observations are generated from functional AR(1) process where the innovations are standard Brownian motions. We use an approximation on a evenly spaced finite grid of $d$ points $\{i/d:\; 0\leq i \leq d-1\}.$ In particular,
$$X_{-50}=\left(\xi_1, \xi_1+\xi_2, \ldots, \sum_{i=1}^d \xi_i\right) / \sqrt{d}, \quad \xi_i\text{s are i.i.d. }\mathcal{N}(0,1)\text{-distributed;}
$$
$$
X_t= \Phi X_{t-1}^{\mathrm{T}}+W_t \quad \forall-50<t \leq n,
$$
where $\Phi \in \mathbb{R}^{d \times d}$ with entries $\Phi_{i, j}=\min (i, j) / d^2,$ and 
\begin{equation}\label{eqn:FDA_DGP1}
    W_t=\left(\xi_1^{(t)}, \xi_1^{(t)}+\xi_2^{(t)}, \ldots, \sum_{i=1}^d \xi_i^{(t)}\right) / \sqrt{d},
\end{equation}

$\xi_i^{(t)}$s are i.i.d. $\mathcal{N}(0,1)$-distributed. To see the effect of heavy-tailed distribution, we consider another data-generating process where $\xi_i^{(t)}$s for $-50<t\leq 200$ in the above model are instead iid sample from a $t$ distribution with $2$ degrees of freedom. We set $d=100$ in all our simulations.  We compare the empirical rejection rates of our method against that of the spatial sign test of \cite{wegner2024robust} when $n=50,\; 200$ at four different levels of significance, i.e., $\alpha=0.1, 0.05, 0.025, 0.01$. For the implementation of both methods, we approximate the Hilbert space distance between any two observations in $\mathcal{L}_2[0,1]$ by $\sqrt{\frac{1}{d}\sum_{k=0}^{d-1} (X_i(k/d)-X_j(k/d))^2}.$ The splitting ratio in SS-SN is set to be $\eta=0.1$. Both the kernels in SS-SN are chosen to be Gaussian and the kernel bandwidth is chosen via median heuristics. In the spatial sign test of \cite{wegner2024robust}, the bandwidth $q$ involved in the multiplier bootstrap is selected via the data adaptive approach of \cite{rice2017plug}. The number of Monte Carlo replicates is set to be $1000$ and the number of bootstrap replicates is set to be $500$. As shown in Table~\ref{tab:Size_CPD_FDA}, the SS-SN method is very slightly oversized for small sample size, but the size accuracy gets better for large sample size. Moreover, the size accuracy is good even in heavy-tailed data. In contrary, the spatial sign test is significantly undersized for small sample size. In terms of computational time, the SS-SN method took $0.4$ seconds on average per iteration, whereas the spatial sign test took $1.55$ hours.\\
\begin{table*}
\centering
    \caption{Empirical Size in \% for Change Point Testing in Functional Data}
    \label{tab:Size_CPD_FDA}
    \begin{tabular}{@{}l|cccc|cccc@{}}
         \hline
     & \multicolumn{4}{c|}{Gaussian} & \multicolumn{4}{c}{Heavy tail}\\
    \hline 
    & \multicolumn{2}{c}{SS-SN} & \multicolumn{2}{c}{Spatial Sign} & \multicolumn{2}{c}{SS-SN} & \multicolumn{2}{c}{Spatial Sign}\\
    \hline
        $\alpha$ & $n=50$ & $n=200$ &  $n=50$ & $n=200$ & $n=50$ & $n=200$ & $n=50$ & $n=200$ \\
        \hline
      0.1 & 11.50 & 9.80 & 0.50 & 13.55 & 10.70 & 9.70 & 0.40 & 14.95 \\ 
  0.05 & 6.60 & 4.60 & 0.05 & 5.45 & 4.75 & 4.20 & 0.10 & 6.90 \\ 
  0.025 & 3.85 & 2.25 & 0.05 & 2.05 & 3.05 & 1.80 & 0.05 & 3.70 \\ 
  0.01 & 2.20 & 0.65 & 0.05 & 0.50 & 1.70 & 0.60 & 0.05 & 1.25 \\ 
        \hline
    \end{tabular}
\end{table*}

To examine the power performance, we consider a uniform jump of magnitude $\lambda$ in the mean of the first data-generating process above in the middle of the sequence, i.e.,
\begin{align*}
        Y_t=\begin{cases}
            X_t & t\leq \lfloor n/2 \rfloor,\\
            X_t+\lambda\,(1,1,\cdots,1)^{\top} & t>\lfloor n/2 \rfloor,
        \end{cases}
        \end{align*}
where $\{X_t\}$s are defined in \eqref{eqn:FDA_DGP1}. We set $n=200$ here and the number of Monte-Carlo replicates to be $500$. We vary $\lambda\in\{0,\;0.2,\cdots,1\}$ and plot the size-adjusted power of aforementioned methods against $\lambda$ in Figure~\ref{fig:power_CPD_FDA}. Note that we choose the splitting ratio $\eta$ of the SS-SN method from the set $\{0.05, 0.1, 0.15\}$. Overall, the power of SS-SN is comparable to that of the spatial sign test and in fact, it is slightly better when the splitting ratio is $0.1$ or $0.15$.
\begin{figure}[t]
\centering
\includegraphics{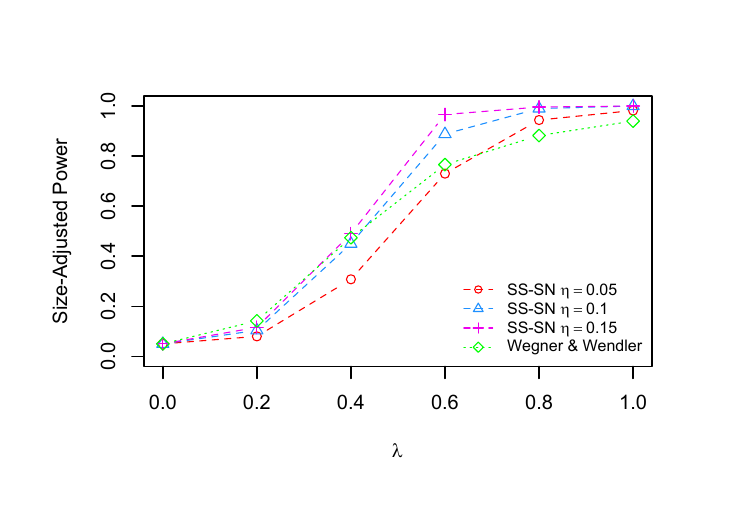}
\caption{ Size-adjusted power in change point testing in functional data}\label{fig:power_CPD_FDA}
\end{figure}
\subsection{Testing for Independence between Two Functional Time Series}\label{sec:sim_Indep_FDA}
In this subsection, we analyze the finite sample performance of our proposed test for independence between two stationary functional time series. We compare the SS-SN method against the method proposed by \cite{horvath2015testing}. Under the null hypothesis, we generate samples from two independent stationary functional autoregressive model of order one, i.e. for $1\leq i \leq 200$ and $t \in [0,1]$,
\begin{equation}\label{eqn:FDA_DGP2}
   X_i(t)=\int \psi_q(t,u)X_{i-1}(u)du+B_{1,i}(t),\;\;\;Y_i(t)=\int \psi_q(t,u)Y_{i-1}(u)du+B_{2,i}(t), 
\end{equation}
where $\psi_q(t,u)=q\,\min(t,u)$ and $\{B_{1,i}\}_{i=1}^{200} \indep \{B_{2,i}\}_{i=1}^{200}$ are standard Brownian motions on $[0,1]$. The parameter $q$ determines the magnitude of temporal dependence and we let $q\in \{0, 0.75, 2\}$. Note that the data-generating process is exactly the same as ~\eqref{eqn:FDA_DGP1} when $q=1$; When $q=0$, the two samples correspond to  independent standard Brownian motions on $[0,1].$ 
%We approximate the processes in a regularly spaced grid of $25$ points and use a burn-in period of $50$. 
To implement the SS-SN method, we choose both kernels to be Gaussian and the kernel bandwidth is once again selected via median heuristics. To investigate the effect due to the choice of the splitting ratio $\eta$ and the lag $m$ associated with our test statistic, we chose $\eta$ from $\{0.2, 0.35, 0.5\}$ and $m$ from $\{0,1,2\}.$ In the implementation of the method of \cite{horvath2015testing}, we choose both kernels to be Bartlett kernels and the associated windows were set to be $w_1(r)=w_2(r)=\lfloor r^{1/4}\rfloor.$ Inspired by the numerical experiments in their paper, we set the associated lag $H$ to be $4$. The empirical rejection rates based on 2000 Monte Carlo replicates are tabulated in Table~\ref{tab:Size_Indep_FDA}. The sizes for both method appear accurate when $q=0, 0.75$, whereas the method of \cite{horvath2015testing} exhibits non-negligible size distortion for stronger temporal dependence (i.e., $q=2$). The SS-SN method exhibits overall good size accuracy with $\eta=0.2$, and the size distortion can be noted for higher values of $m$ and when $\eta=0.35, 0.5$. The choice of $m$ can affect the power (to be shown below) and may be based on domain knowledge. \\
\begin{table}
\centering
\caption{Empirical Size in \% for Testing Independence between Two Functional Series}
    \label{tab:Size_Indep_FDA}
    \begin{tabular}{@{}r|ccc|ccc|ccc|c@{}}
    \hline
     & \multicolumn{9}{|c|}{SS-SN} & Horvath \& Rice\\
    \hline 
      & \multicolumn{3}{|c|}{$\eta=0.2$} &\multicolumn{3}{|c|}{$\eta=0.35$} & \multicolumn{3}{|c|}{$\eta=0.5$} & \\
    \hline
    & $m=0$ & $m=1$ & $m=2$ & $m=0$ & $m=1$ & $m=2$ & $m=0$ & $m=1$ & $m=2$ & $H=6$\\
    \hline
    $q=0$ & 6.5 & 5.5 & 5.9  & 5.8 & 5.3 & 7.2 & 5.6 & 5.7 & 6.1 & 5.6\\
    $q=0.75$ & 5 & 5.35 & 6.2 & 5.6 & 6.3 & 5.9 & 5.3 & 5.6 & 5.9 & 6\\
    $q=2$ & 5.5 & 5.7 & 6.2 & 6.6 & 7 & 6.8 & 6.7 & 7.5& 8.4 & 11\\
    \hline
    \end{tabular}
\end{table}
To investigate the size-adjusted power, we use innovations $\{W_{1,i}\}_{i=1}^{200}$ and $\{W_{2,i}\}_{i=1}^{200}$ in place of $\{B_{1,i}\}_{i=1}^{200} \text{ and } \{B_{2,i}\}_{i=1}^{200}$  in \eqref{eqn:FDA_DGP2}, respectively, where $W_{1,i}(t):=B_{1,i}(t)$ and\\ $W_{2,i}(t):=\lambda\, B_{1,i}(t)+\sqrt{1-\lambda^2}\,B_{2,i}(t)$, $\lambda \in \{0, 0.1, 0.2, \cdots,1\}.$ We set the number of Monte-Carlo replicates to be $1000$ and the average size-adjusted power is plotted against $\lambda$ in Figure~\ref{fig:Power_Indep_FDA}. The choice of $\eta$ does not seem to significantly affect the power of the SS-SN test, although the SS-SN test with lag $m=0$ seem to  outperform SS-SN tests with $m=1,2$. This is expected because the cross-correlation between the two samples under the alternative hypothesis is non-zero at lag $0$, but zero at higher lags. Overall, the size-adjusted power of SS-SN method seems to be comparable with \cite{horvath2015testing} when the dependence is not too strong. However, when the dependence is strong, SS-SN method results in slightly more power loss.
    \begin{figure}[t]
    \centering
    % Row 1
    \begin{subfigure}[t]{5cm}
        \includegraphics[width=5cm]{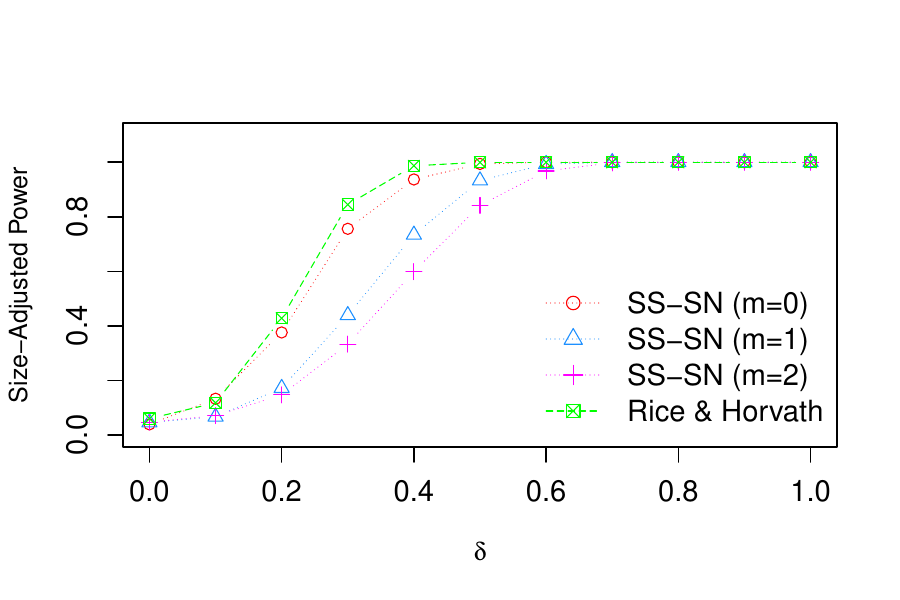}
        \caption{$q=0,\,\eta=0.2$}
    \end{subfigure}
    \hfill
    \begin{subfigure}[t]{5cm}
        \includegraphics[width=5cm]{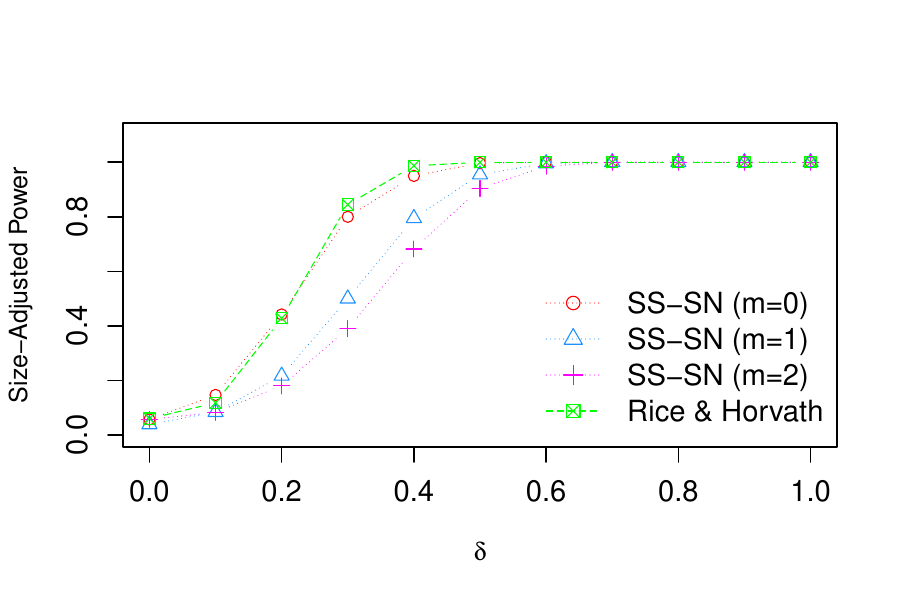}
        \caption{$q=0.75,\,\eta=0.2$}
    \end{subfigure}
    \hfill
    \begin{subfigure}[t]{5cm}
        \includegraphics[width=5cm]{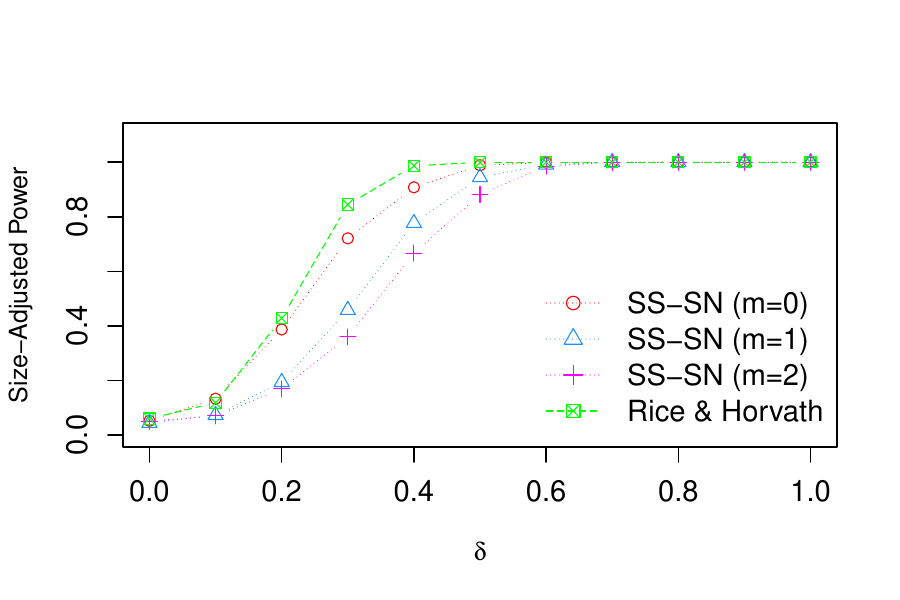}
        \caption{$q=2,\,\eta=0.2$}
    \end{subfigure}
    % Row 2
    \hfill
    \begin{subfigure}[t]{5cm}
        \includegraphics[width=5cm]{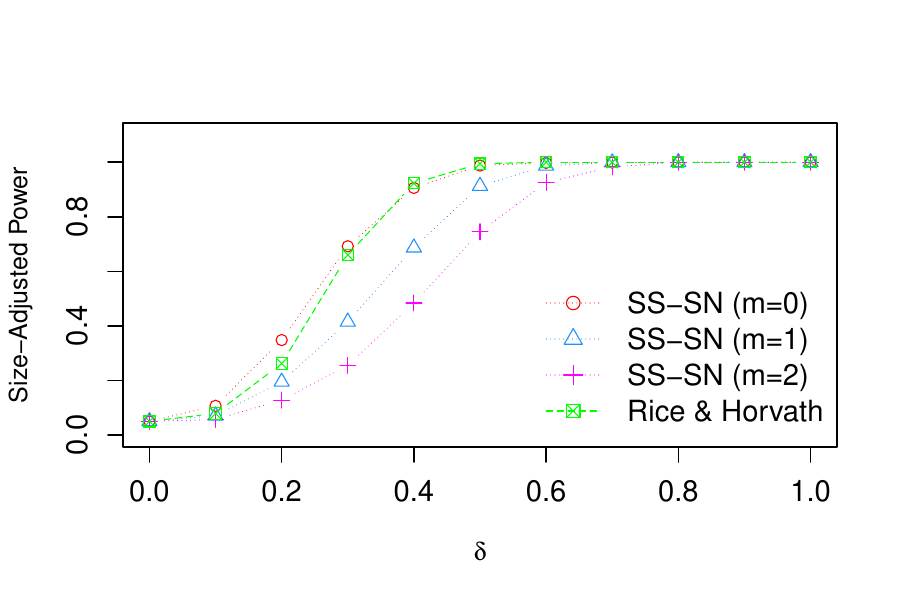}
        \caption{$q=0,\,\eta=0.2$}
    \end{subfigure}
    \hfill
    \begin{subfigure}[t]{5cm}
        \includegraphics[width=5cm]{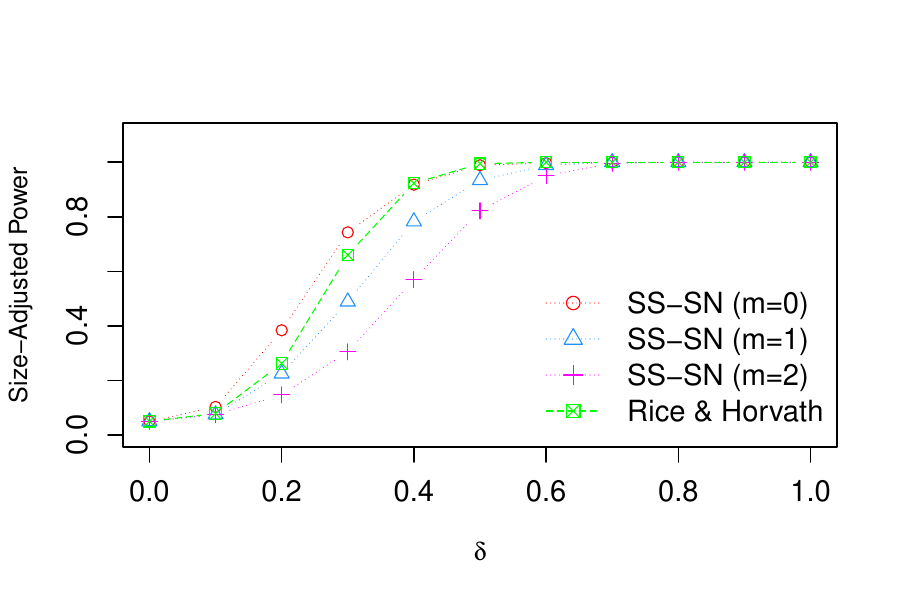}
        \caption{$q=0,\,\eta=0.2$}
    \end{subfigure}
    \hfill
    \begin{subfigure}[t]{5cm}
        \includegraphics[width=5cm]{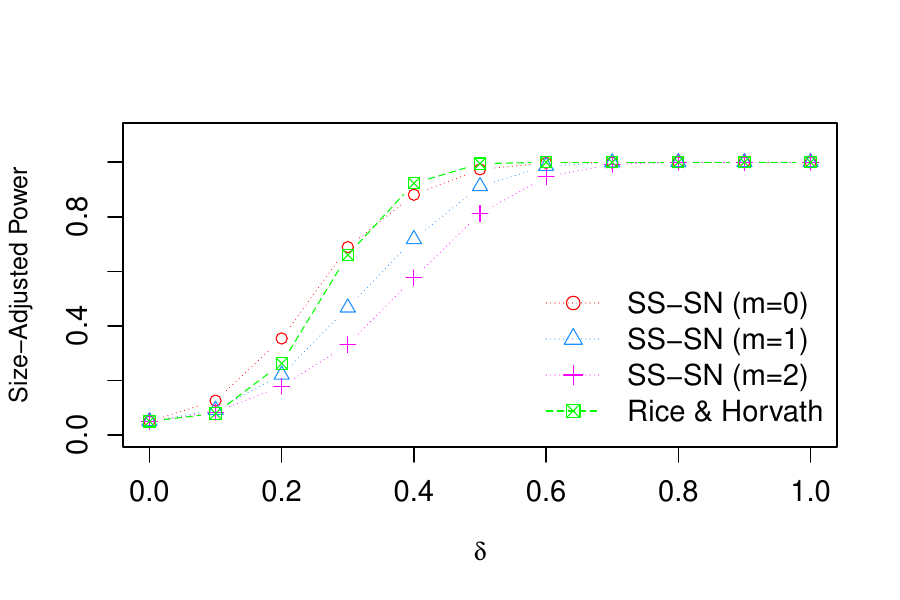}
        \caption{$q=0,\,\eta=0.2$}
    \end{subfigure}
    % Row 3
    \hfill
    \begin{subfigure}[t]{5cm}
        \includegraphics[width=5cm]{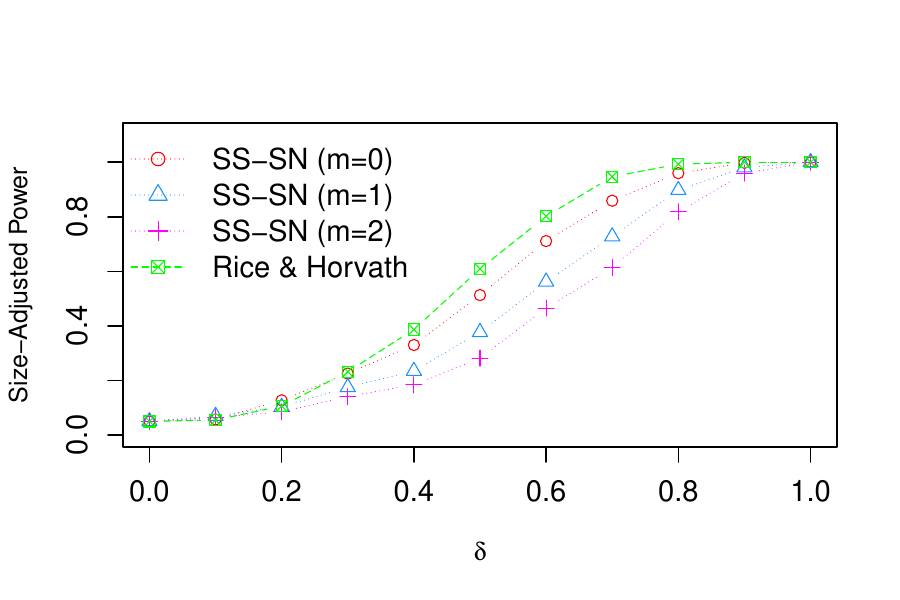}
        \caption{$q=0,\,\eta=0.2$}
    \end{subfigure}
    \hfill
    \begin{subfigure}[t]{5cm}
        \includegraphics[width=5cm]{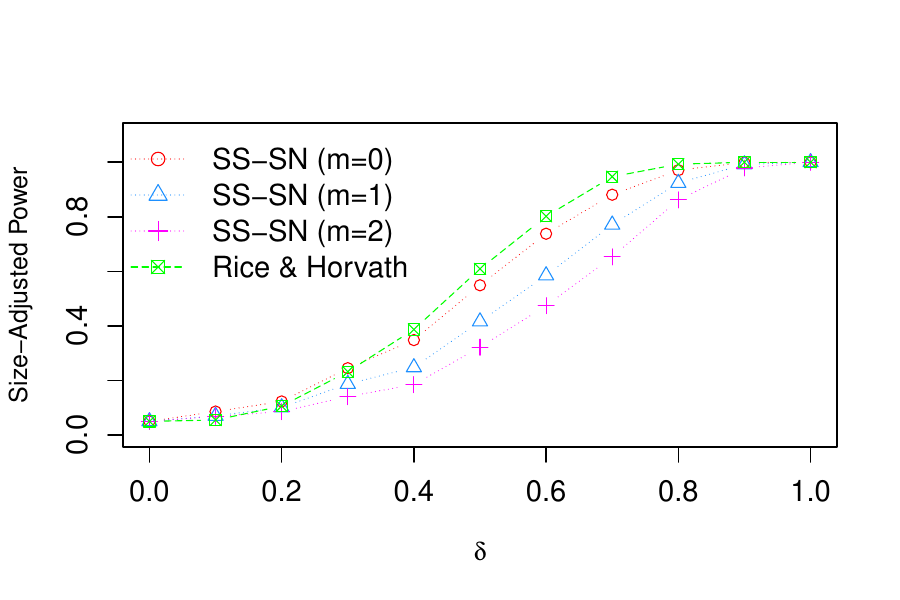}
        \caption{$q=0,\,\eta=0.2$}
    \end{subfigure}
    \hfill
    \begin{subfigure}[t]{5cm}
        \includegraphics[width=5cm]{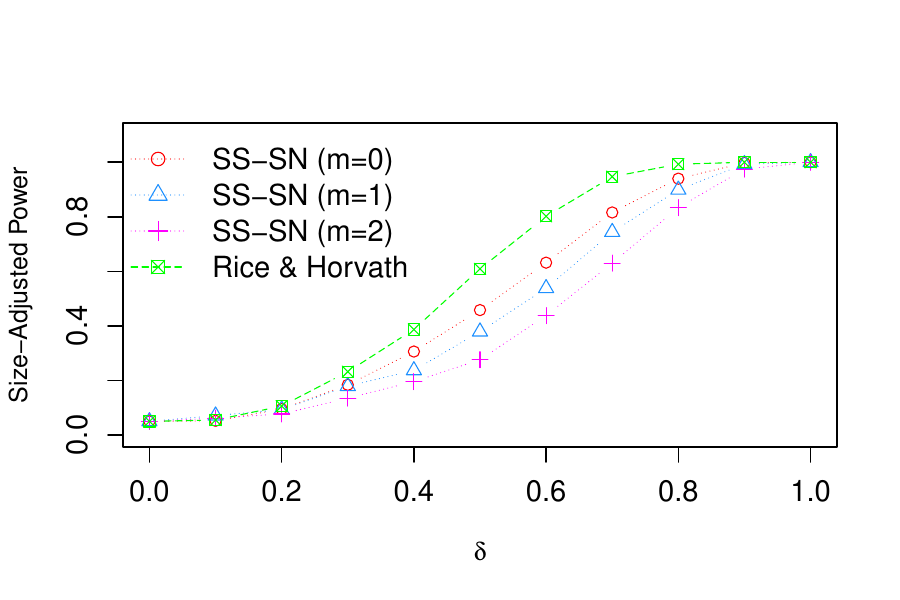}
        \caption{$q=0,\,\eta=0.2$}
    \end{subfigure}
    \caption{Size-adjusted power in testing independence between two functional series}
    \label{fig:Power_Indep_FDA}
\end{figure}

 \section{Conclusion}\label{sec:conclusion}

In this article, we propose a new resampling-free methodology to conduct inference for various infinite-dimensional parameters in multivariate, Hilbert space valued  and object-valued time series. With the help of sample-splitting and RKHS embedding, we extend the use of self-normalization, a bandwidth-free method typically used for drawing inference about finite-dimensional parameters in time series, to the case of infinite-dimensional parameters. The proposed methodology is very easy to implement and fairly fast to compute. Theoretically, we derive the asymptotic pivotal distributions of our test statistics under the null and the limiting power under alternatives with mild moment and weak dependence conditions. 
Our methodology exhibits excellent size accuracy in finite samples with occasional mild to moderate power loss as compared to several existing competitors. 
Although the sample splitting ratio is left up to the choice of the user, we empirically demonstrate that this tuning parameter has little impact on the size accuracy or efficiency of our tests as long as it is in a suitable range, see Section~\ref{sec:simulations}. 
The size accuracy, computational efficiency and reduced sensitivity with respect to the tuning parameter are the major advantages of our method compared to the traditional bootstrap-based counterparts. One possible future research topic is to extend our methodology to high-dimensional or non-stationary time series setup. This is currently under investigation. 

\section*{Acknowledgements}
The authors would like to thank Dr. Gregory Rice for providing the code used in \cite{horvath2015testing} and some related clarifications. The partial support from a grant provided by the U.S. National Science Foundation is gratefully acknowledged.

\bigskip
    
	\setlength{\bibsep}{0pt plus 0.25ex}
	
	\bibliographystyle{chicago}

    \bibliography{corrected_mybibilography}
    
\begin{appendix}
    \refstepcounter{section}
\section*{Appendix \thesection: Gaussian Random Element and Brownian Motion in Hilbert Space}\label{appna}
\begin{definition}\label{defn:gaussian_hilbert}
    Let $H$ be a separable Hilbert space equipped with inner product $\langle .,.\rangle_H$. An $H$-valued random element $N$ is said to be Gaussian if for all $h\in H\setminus \{0\}$, $\langle N,h\rangle_H$ has a normal distribution in $\mathbb{R}.$ The distribution of $N$ is uniquely determined by its mean and covariance operator.
\end{definition}
Let us represent the space of all $H$-valued continuous and cadlag functions on $[0,1]$ by $C_H[0,1]$ and $D_H[0,1]$, respectively.  For $C_H[0,1]$, we use the $\cL_{\infty}$ metric. For $D_H[0,1]$ , we use the Skorokhod metric on $D_H[0,1]$ defined in \cite{sharipov2016sequential} (also see \cite{billingsley2013convergence}, Chapter 12), i.e.,
$$
\rho(f, g)=\inf _{\lambda \in \Lambda}\left\{\sup _{t \in[0,1]}\|f(t)-g \circ \lambda(t)\|_H+\|i d-\lambda\|_{\infty}\right\},\;\; f, g \in D_H[0,1],
$$
where $\Lambda$ is the class of strictly increasing, continuous mappings of $[0,1]$ onto itself, $\|\cdot\|_H$ is the Hilbert space norm, and $\|\cdot\|_{\infty}$ is the supremum norm. Moreover $i d:[0,1] \rightarrow[0,1]$ is the identity function and $\circ$ denotes the composition of functions. Note that the Skorokhod metric is dominated by the $\cL_{\infty}$ metric (by taking $\lambda=id$).
\begin{definition}\label{defn:brownian_hilbert}
     A random element $W$ of $D_H[0,1]$ is called Brownian motion in $H$ if
     \begin{itemize}
\item[(i)] $W(0)=0$ almost surely;
\item[(ii)] $W \in C_H[0,1]$ almost surely, where $C_H[0,1]$ is the set of all continuous functions from $[0,1]$ to $H$;
\item[(iii)] the increments on disjoint intervals are independent;
\item[(iv)] for all $0 \leq t<t+s \leq 1$ the increment $W(t+s)-W(t)$ is Gaussian with mean zero and covariance operator $s S$, where $S: H \rightarrow H$ does not depend on $s$ or $t$.
\end{itemize}
Note that the distribution of a Brownian motion $W$ is uniquely determined by the covariance operator $S$ of $W(1)$.
\end{definition}

\refstepcounter{section}
\section*{Appendix \thesection: Basics of HSIC}\label{appnb}
\begin{definition}\label{defn:HS_norm}
     Let $H_1,\,H_2$ be two separable Hilbert spaces. The tensor product of two elements $f$ and $g$ from Hilbert spaces $H_1$ and $H_2$, respectively, is a linear map $f \otimes g: H_2\mapsto H_1$, defined as
$$(f\otimes g)h=f\langle g,h\rangle,\;\;\forall\;\;h\in H_2.$$
The tensor product Hilbert space associated with $H_1$ and $H_2$ is the space
$$H_1 \otimes H_2=\{f \otimes g\mid f\in H_1,g \in H_2\},$$
equipped with the inner product
$$\langle f_1\otimes g_1, f_2 \otimes g_2\rangle_{H_1 \otimes H_2}=\langle f_1,f_2\rangle_{H_1}\times \langle g_1,g_2\rangle_{H_2}.$$
The norm associated with the inner product $\langle .,. \rangle_{H_1 \otimes H_2}$ is known as the Hilbert Schmidt norm and we denote it by $\left\|.\right\|_{HS}.$
\end{definition}
\begin{remark}
    Note that if $H_1$ and $H_2$ are separable, then $H_1\otimes H_2$ is also separable. 
\end{remark}
\begin{definition}
    Let $\mathcal{X},\mathcal{Y}$ be two separable metric spaces, $(X,Y)$ be two jointly distributed random elements such that $X\in \mathcal{X}$ and $Y\in \mathcal{Y}$. Given two positive-definite and symmetric kernels $K$ and $L$ on $\mathcal{X}$ and $\mathcal{Y}$, respectively, the associated cross-covariance operator $C_{XY}$ is defined as the Hilbert-space mean of the random operator $\left(K(X,.)-\mu_X\right)\otimes \left(L(Y,.)-\mu_Y\right)$, where $\mu_X:=\bE_X K(X,.)$ and $\mu_Y:=\bE_Y K(Y,.).$ This is equivalent to
$$C_{XY}=\bE_{XY}\left[\left(K(X,.)-\mu_X\right)\otimes \left(L(Y,.)-\mu_Y\right)\right],$$
where the above expectation is taken on the tensor product Hilbert space $\mathcal{H}(K)\otimes \mathcal{H}(L).$
\end{definition}
The following result tells us that under some mild conditions, a necessary and sufficient condition for the independence of $X$ and $Y$ is that the cross-covariance operator $C_{XY}$ is the $\mathbf{0}$ element of the tensor product Hilbert space, see \cite{sejdinovic2013equivalence}.
\begin{lemma}\label{lemma:HSIC=0_iff_indep}
If $\bE K(X,X),\;\;\bE L(Y,Y)$ are finite and both the kernels $K$ and $L$ are characteristic, then
$X\indep Y$ if and only if $C_{XY}=\mathbf{0}\in \mathcal{H}(K)\otimes \mathcal{H}(L).$
\end{lemma}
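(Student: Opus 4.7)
The plan is to reduce the lemma to an equality of mean embeddings in an appropriate tensor-product RKHS and then invoke injectivity of that embedding on the joint-distribution space. The first step is to record the basic identity: for any $f\in\mathcal{H}(K)$ and $g\in\mathcal{H}(L)$, the reproducing property combined with the inner-product formula on $\mathcal{H}(K)\otimes\mathcal{H}(L)$ (Definition B.1) gives
\begin{equation*}
\langle C_{XY},\,f\otimes g\rangle_{\mathcal{H}(K)\otimes\mathcal{H}(L)} = \bE[f(X)g(Y)] - \bE[f(X)]\,\bE[g(Y)] = \operatorname{Cov}(f(X),g(Y)).
\end{equation*}
The hypotheses $\bE K(X,X)<\infty$ and $\bE L(Y,Y)<\infty$, combined with the pointwise bound $|f(x)|\le\|f\|\sqrt{K(x,x)}$ (and analogously for $g$), guarantee that $C_{XY}$ is a well-defined Hilbert--Schmidt operator and that both sides of the identity are finite.

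For the easy direction, suppose $X\indep Y$. Then $f(X)$ and $g(Y)$ are independent real-valued random variables with finite second moments, so $\operatorname{Cov}(f(X),g(Y))=0$. Since rank-one tensors $\{f\otimes g:f\in\mathcal{H}(K),g\in\mathcal{H}(L)\}$ have dense linear span in $\mathcal{H}(K)\otimes\mathcal{H}(L)$, the displayed identity forces $\langle C_{XY},h\rangle=0$ for every $h$ in a total set, hence $C_{XY}=\mathbf{0}$.

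For the converse, the plan is to realize $C_{XY}$ as (minus) the mean embedding of the signed measure $P_{XY}-P_X\otimes P_Y$ into a single scalar-kernel RKHS on the product space. Define $M((x,y),(x',y')):=K(x,x')L(y,y')$ on $\mathcal{X}\times\mathcal{Y}$; this is a continuous symmetric positive-definite kernel, and there is a canonical Hilbert-space isometry $\Phi:\mathcal{H}(K)\otimes\mathcal{H}(L)\to\mathcal{H}(M)$ sending $K(x,\cdot)\otimes L(y,\cdot)$ to $M((x,y),\cdot)$. Under $\Phi$, the mean embeddings of $P_{XY}$ and of $P_X\otimes P_Y$ into $\mathcal{H}(M)$ correspond to $\bE[K(X,\cdot)\otimes L(Y,\cdot)]$ and to $\mu_X\otimes\mu_Y$ respectively, and their difference equals $C_{XY}$ by a direct expansion of the tensor product. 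Thus $C_{XY}=\mathbf{0}$ is exactly the statement $\mu_{P_{XY}}=\mu_{P_X\otimes P_Y}$ in $\mathcal{H}(M)$.

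The main obstacle is the final step: concluding from this equality of embeddings that $P_{XY}=P_X\otimes P_Y$, i.e., that $M$ is characteristic on the space of Borel probabilities on $\mathcal{X}\times\mathcal{Y}$. The cleanest route is to verify the integral-strict-positive-definiteness criterion of the second lemma in the review: given a nonzero finite signed Borel measure $\mu$ on the product space, one uses a Fubini/disintegration argument to reduce $\iint M\,d\mu\,d\mu$ to an integral involving the separate integral-strict-positive-definiteness of $K$ and $L$, and concludes it is strictly positive. Alternatively, one may cite the known fact (e.g.\ Szab\'o and Sriperumbudur) that the product of two characteristic kernels on Polish spaces is characteristic. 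Either argument uses the characteristic assumption on \emph{both} $K$ and $L$; dropping either would break this step. Once injectivity is in hand, $\mu_{P_{XY}}=\mu_{P_X\otimes P_Y}$ yields $P_{XY}=P_X\otimes P_Y$, which by definition is $X\indep Y$, completing the proof.
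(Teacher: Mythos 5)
The paper does not actually prove this lemma: it is stated as a quoted result with a pointer to \cite{sejdinovic2013equivalence}, so there is no in-paper argument to compare against. Judged on its own terms, your proof is correct up to, but not including, its final step. The identity $\langle C_{XY}, f\otimes g\rangle=\operatorname{Cov}(f(X),g(Y))$, the integrability bookkeeping via $|f(x)|\le\|f\|\sqrt{K(x,x)}$, the forward direction using totality of rank-one tensors, and the identification of $C_{XY}$ with $\mu_{P_{XY}}-\mu_{P_X\otimes P_Y}$ in $\mathcal{H}(M)$ for the product kernel $M=K\cdot L$ are all fine.

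The gap is the injectivity step. It is \emph{not} true in general that the product of two characteristic kernels is characteristic, nor even that it separates $P_{XY}$ from $P_X\otimes P_Y$ (the so-called $I$-characteristic property). The reference you invoke in support --- Szab\'o and Sriperumbudur (JMLR, 2018) --- in fact establishes the opposite: they exhibit characteristic kernels $K$ and $L$ (on finite domains) and a dependent pair $(X,Y)$ with $C_{XY}=\mathbf{0}$, and they show the chain ``$K\otimes L$ characteristic $\Rightarrow$ $K\otimes L$ $I$-characteristic $\Rightarrow$ $K$ and $L$ characteristic'' has strict, non-reversible implications. Your alternative route via integral strict positive definiteness fails for the same structural reason: a nonzero finite signed measure $\mu$ on $\mathcal{X}\times\mathcal{Y}$ need not disintegrate into a product of signed measures, so $\iint M\,d\mu\,d\mu=\bigl\|\int K(x,\cdot)\otimes L(y,\cdot)\,d\mu(x,y)\bigr\|_{HS}^2$ cannot be reduced by Fubini to the separate ISPD properties of $K$ and $L$. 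To close the converse you must strengthen the hypothesis: assume the product kernel $K\cdot L$ is itself characteristic on $\mathcal{X}\times\mathcal{Y}$, or that $K$ and $L$ are both universal (products of universal kernels are universal), or restrict to bounded, continuous, translation-invariant characteristic kernels on Euclidean space (e.g.\ Gaussian or Laplacian, which covers every example the paper actually uses, since the product of their spectral measures has full support). As written, the ``only if'' direction of the lemma cannot be derived from the stated assumptions alone.
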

Motivated by the above characterization, the squared Hilbert Schmidt norm of the cross-covariance operator has been used a measure of dependence between $X$ and $Y$ (see \cite{gretton2005measuring} for more details). This is precisely the HSIC between $X$ and $Y$.
\begin{definition}
    The Hilbert-Schmidt Independence Criterion (HSIC) between $X$ and $Y$ is defined as $HSIC(X,Y):=\left\|C_{XY}\right\|^2_{HS}.$
\end{definition}
\refstepcounter{section}
\section*{Appendix \thesection: Functional Central Limit Theorem in Hilbert Space}\label{appnc}
In this section, we state the details of Theorem 4.6. in \cite{chen1998central}, which proves a functional central limit theorem (FCLT) in a separable Hilbert space. This result is the foundation of Proposition 1-3 in our paper. To this end, let $H$ be a separable Hilbert space equipped with inner product $\langle .,.\rangle_H$. Let $\{W_{n,i};\;n\geq 1;\;i=\cdots,-1,0,1,\cdots\}$ be a triangular array of $H$-valued random elements on a common probability space $(\Omega,\mathcal{F},\bP)$. Moreover, let $\{V_{n,i};\;n\geq 1;\;i=\cdots,-1,0,1,\cdots\}$ be another triangular array of random elements on $(\Omega,\mathcal{F},\bP)$ taking values in a Banach space $B$.  Let us define the following partial sum process in $D_H[0,1]$:
   \[M_n(t):=\sum_{i=1}^{\lfloor nt \rfloor} W_{n,i} ,\;t\in [0,1].\]
   Moreover, let us define the following process in $C_H[0,1]$:
   \[M_n^*(t)=M_n(t)+(nt-\lfloor nt\rfloor)W_{n,\lfloor nt\rfloor+1},\;t\in [0,1].\]   
   We now revisit the concepts of mixingales and near epoch dependence, which will be necessary to study the FCLT of the process $M^*_n$.
\begin{definition}[Mixingale]

    Let $\left\{\mathcal{F}^{n, i}: i=\ldots,-1,0,1, \ldots ; n \geq 1\right\}$ be an array of sub $\sigma$-fields of $\mathcal{F}$ non-decreasing in $i$ for each $n$,  defined by
    \[\mathcal{F}^{n,i}:=\sigma\{\cdots,V_{n,i-1},V_{n,i}\}.\]
     Then $\left\{W_{n, i}, \mathcal{F}^{n, i}\right\}$ is called an $\cL_{\boldsymbol{p}}(\mathbb{H})$ mixingale array if there exist finite nonnegative constants $\left\{c_{n, i}:\; i \geq 1 ; n \geq 1\right\}$ and $\left\{\psi_m ;\; m \geq 0\right\}$ with $\psi_m \rightarrow 0$ as $m \rightarrow \infty$ such that for all $i, n \geq 1, m \geq 0$ :
$$
\begin{aligned}
\left\|\bE\left(W_{n, i} \mid \mathcal{F}^{n, i-m}\right)\right\|_p & \leq \psi_m c_{n, i} ; \\
\left\|W_{n, i}-\bE\left(W_{n, i} \mid \mathcal{F}^{n, i+m}\right)\right\|_p & \leq \psi_{m+1} c_{n, i} .
\end{aligned}
$$

If in addition, for each $n \geq 1, W_{n, i}$ is $\mathcal{F}^{n, i}$-measurable, then $\left\{W_{n, i}, \mathcal{F}^{n, i}\right\}$ is an adapted $\cL_p(\mathbb{H})$-mixingale array, and the second inequality holds automatically.
\end{definition}

\begin{definition}[$\cL_p(H)$-Near Epoch Dependence]
    $\{W_{n,i};\;n,i\geq 1\}$ is called an $\cL_p(\mathbb{H})$-array near epoch dependent $\left(\cL_p(\mathbb{H})\right.$-NED $)$ on $\left\{V_{n, i}\right\}$ if $\left\|W_{n, i}\right\|_p<\infty$ for all $i, n \geq 1$, and there exist constants $\left\{d_{n, i} \geq 0: i \geq 1 ; n \geq 1\right\}$ and $\left\{\mu_m \geq 0: m \geq 0\right\}$ with $\mu_m$ decreasing to zero as $m \rightarrow \infty$ such that
$$
\left\|W_{n, i}-\bE\left[W_{n, i} \mid \mathcal{F}_{n, i-m}^{n, i+m}\right]\right\|_p \leq \mu_m d_{n, i},
$$
where $\mathcal{F}_{n,a}^{n,b}:=\sigma\{V_{n,a},V_{n,a+1},\cdots,V_{n,b}\}.$

\end{definition}
\begin{remark}
   It should be noted that when $W_{n,i}$ is $\mathcal{F}^{n,i}$-measurable, the above inequality holds trivially with $\mu_m=d_{n,i}=0$.
\end{remark}
\begin{definition}[Strong Mixing]
    The array $\{V_{n,i}\}$ is called a "strong" or "\(\alpha\)-mixing" sequence if $\lim_{m \to \infty}\alpha(m)=0$, where
    \[\alpha(m):=\sup_{n \geq 1}\,\sup_i\,\sup \left[|\bP(A \cap B)-\bP(A)\bP(B)|:\;A \in \mathcal{F}_{n,-\infty}^{n,i},\,\mathcal{F}_{n,i+m}^{n,\infty}\right].\]
\end{definition}
Now, we state the required FCLT in the following lemma, which is a slight modification of Theorem 4.6. in \cite{chen1998central}.
\begin{lemma}[Theorem 4.6. of \cite{chen1998central}]\label{lemma:fclt_chen_white}
   Let $\left\{W_{n, i}: i \geq 1 ; n \geq 1\right\}$ be a double array of $H$-valued random elements (r.e.'s) with zero means, $\left\|W_{n, i}\right\|_r<\infty$ for some $r \geq 2$.
   Suppose the following conditions hold for some $\delta>0$:
   \begin{itemize}
       \item [(a)]
       \begin{itemize}
           \item[(i)] $\{W_{n,i}\}$ is $\mathcal{L}_2(H)$-NED on $\{V_{n,i}\}$ such that $\mu_m=0$ for large enough $m$ and $d_{n,i}=0$ for $i\geq K$ for some fixed $K$. Moreover, $\left\{V_{n, i}\right\}$ is a strong mixing $B$-r.e. array with mixing coefficients $\alpha(m)$ such that $\alpha(m)=O(m^{-\lambda})$ for some $\lambda>1+\frac{2}{\delta}>0$, for some $\delta>0$;
           \item [(ii)] $\sup _{k, J} \limsup \operatorname{sum}_{n \rightarrow \infty} k^{-1} \sum_{J+1 \leq i \leq J+k}\left(c_{n, i}\right)^2<\infty$, where $c_{n, i} \equiv \max \left(d_{n, i},\left\|W_{n, i}\right\|_r\right)$ for all $n, i$;
       \end{itemize}
       \item [(b)] when $r=2,\left\{\left\|W_{n, i} / c_{n, i}\right\|^2: i \geq 1 ; n \geq 1\right\}$ is uniformly integrable for each $0 \neq h \in \mathbb{H}$.
       \item [(c)] for all $t \in[0,1]$, each $0 \neq h \in \mathbb{H}, \lim _{n \rightarrow \infty} E\left[\Sigma_{1 \leq i \leq \lfloor nt \rfloor}\langle W_{n, i}, h\rangle\right]^2=t \sigma^2(h)$.
       \item [(d)] $\lim \sup _{n \rightarrow \infty} E\left[\left\|\Sigma_{1 \leq i \leq n} W_{n, i}\right\|^2\right]=C<\infty$.
       \item [(e)]There exists a complete orthonormal basis $\left\{e_l\right\}$ such that
$$
\limsup _{n \rightarrow \infty} \Sigma_{l>k} \Sigma_{1 \leq i \leq n}\left(c_{n, i, e_l}\right)^2 \rightarrow 0, \quad \text { as } k \rightarrow \infty,
$$
where for any $h \in H$, $\{\langle W_{n,i},h \rangle\}$ is a $\cL_2(\mathbb{R})$-mixingale with $\{\psi_m\}$ and $\{c_{n,i,h}\}$.
   \end{itemize}
Then, (i) there exists nonsingular covariance operator $S$ such that $(S h, h)=\sigma^2(h)$ for each $h \in H, h \neq 0$; and (ii) $M^*_n \leadsto W$ in $C_{H}[0,1]$ (as $n \rightarrow \infty$ ), where $W$ is a Brownian motion in $H$ with $W(0)=0, \mathbb{E} W(1)=0$, and $\operatorname{Cov} W(1)=S$.
\end{lemma}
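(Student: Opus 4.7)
The plan is to follow the template for an $H$-valued invariance principle via Prokhorov's theorem: establish finite-dimensional convergence of $M_n^*$ to the candidate Brownian motion $W$, and then prove tightness of the laws of $M_n^*$ in $C_H[0,1]$. Since this is an adaptation of Theorem 4.6 of Chen--White, the proof will primarily reduce to invoking their scalar FCLT along one-dimensional projections and then gluing these together using the orthonormal basis supplied in condition (e).

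First I would handle the finite-dimensional convergence. By the Cram\'{e}r--Wold device in $H$ and linearity of $\langle \cdot, h\rangle$, it suffices to show that for each fixed $0\neq h\in H$ and each $t\in[0,1]$, $\langle M_n^*(t),h\rangle \to \sigma(h) B(t)$ weakly in $C[0,1]$. By condition (e) the scalar array $\{\langle W_{n,i},h\rangle\}$ forms an $\mathcal{L}_2(\mathbb{R})$-mixingale, and through (a)(i) it inherits an $\mathcal{L}_2$-NED structure on the strong-mixing base $\{V_{n,i}\}$ at the required polynomial rate. Combining this with the uniform Ces\`aro second-moment bound from (a)(ii) and the uniform integrability in (b) (when $r=2$), one invokes a standard scalar FCLT for NED sequences on a mixing base (see Theorem 4.1 of Chen--White, ultimately built on McLeish-type martingale approximations). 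The limiting variance $t\sigma^2(h)$ is identified via (c), which yields convergence of one-dimensional projected processes and therefore, by Cram\'{e}r--Wold, of all finite-dimensional distributions of $M_n^*$.

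Next I would prove tightness of $\{M_n^*\}$ in $C_H[0,1]$. The standard two-part criterion requires (i) tightness of the marginal laws of $M_n^*(t)$ in $H$ for each fixed $t$, and (ii) stochastic equicontinuity of sample paths in the $H$-norm. For (i), expand in the orthonormal basis of (e): $\|M_n^*(t)\|^2 = \sum_l \langle M_n^*(t), e_l\rangle^2$. Condition (d) bounds the total second moment, and the tail-decay condition in (e) (together with the mixingale bound $E|\langle M_n^*(t),e_l\rangle|^2 \lesssim \sum_i c_{n,i,e_l}^2$) gives that for every $\varepsilon>0$ one can choose $k$ so that the high-frequency tail mass is uniformly small in $n$. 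This is precisely the flat-concentration condition for $H$-valued tightness. For (ii), the modulus of continuity of $M_n^*$ in $H$-norm can be dominated coordinatewise along the basis: the scalar processes $\langle M_n^*(\cdot),e_l\rangle$ are tight in $C[0,1]$ by the argument of the previous paragraph, and the uniformity across $l$ (needed for $H$-valued equicontinuity) again follows from (e) combined with the $\mathcal{L}_2$-mixingale maximal inequality.

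Finally I would identify the limit. The finite-dimensional limits are Gaussian and, by the mixingale/NED approximation to martingale increments on disjoint blocks, have independent increments with variance linear in the length of the increment. Defining the bilinear form $Q(h,h):=\sigma^2(h)$ from (c) and polarizing by $Q(h_1,h_2):=\tfrac14(Q(h_1+h_2,h_1+h_2)-Q(h_1-h_2,h_1-h_2))$, one obtains a symmetric positive bilinear form that is trace class thanks to (d) and (e); this defines the nonsingular covariance operator $S$ with $\langle Sh,h\rangle = \sigma^2(h)$, and the weak limit is thus the Brownian motion $W$ in $H$ with $\mathrm{Cov}\,W(1)=S$. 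The main technical obstacle, as in Chen--White, is the tightness step, and specifically the flat-concentration part: the whole purpose of the basis-tail condition (e) is to carry the scalar mixingale maximal inequalities into an $H$-valued one, and verifying that the constants in those inequalities remain summable in $l$ under (a)(ii) and (b) is where the bookkeeping is most delicate.
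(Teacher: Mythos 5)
The paper offers no proof of this lemma: it is stated as a (slightly modified) quotation of Theorem 4.6 of \cite{chen1998central} and is used as an imported black box, so there is no in-paper argument to compare your proposal against. Your outline is nonetheless a faithful reconstruction of the architecture Chen and White actually use: project onto $h\in H$ to get scalar $\mathcal{L}_2$-mixingale/NED arrays and apply a McLeish-type scalar FCLT to obtain the one-dimensional limits with variance $t\sigma^2(h)$ from (c); obtain tightness in $C_H[0,1]$ by combining scalar tightness of the coordinate processes with the flat-concentration basis-tail condition (e); and recover the covariance operator by polarizing $\sigma^2(\cdot)$, with trace-class control coming from (d) and (e). Two cautions. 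First, the appeal to ``the Cram\'er--Wold device in $H$'' is only legitimate once tightness in $H$ has been established --- in an infinite-dimensional Hilbert space, convergence of all one-dimensional projections does not by itself yield a weak limit --- so the logical order must be tightness first, then identification via projections; your write-up gets this right in substance but the first paragraph states it in the wrong order. Second, all of the substantive technical work (the scalar FCLT for NED arrays on a strong-mixing base, the mixingale maximal inequality $E\sup_t|\langle M_n^*(t),e_l\rangle|^2\lesssim \sum_i c_{n,i,e_l}^2$, and the uniform-in-$l$ summability bookkeeping under (a)(ii) and (e)) is deferred to citation, so what you have is a correct road map rather than a self-contained proof; given that the paper itself simply cites the result, that level of detail is acceptable here, but it should not be presented as more than a sketch.
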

\begin{remark}
    Note that the above result provides FCLT for $M_n^*$ in $C_H[0,1]$ with respect to the $\cL_{\infty}$ metric. However, in our proof of Proposition 1, we show that using Lemma~\ref{lemma:asymp_negligble_process}, this indeed implies that weak convergence of the partial sum process $M_n$ (elements in $D_H[0,1]$) also holds in $D_H[0,1]$ with respect to the Skorokhod metric. 
\end{remark}
\refstepcounter{section}
\section*{Appendix \thesection: Proofs of Main Results}\label{appnd}
\subsection{Proof of Theorem 3.2.}
We define the following process in $[\eta,1]$:
    \begin{align*}
        S_n(r)&:=\frac{1}{n}\sum_{j=m_1+1}^{\lfloor nr \rfloor}\sum_{i=1}^{m_1} \langle K(Y_i,.)-\mu_{P_0}, K(Y_j,.)-\mu_{P_0}\rangle\\       &=\left\langle \frac{1}{\sqrt{n}}\sum_{i=1}^{m_1}(K(Y_i,.)-\mu_{P_0}),  \frac{1}{\sqrt{n}}\sum_{j=m_1+1}^{\lfloor nr \rfloor}(K(Y_j,.)-\mu_{P_0})\right\rangle,\;\; r \in[\eta,1].
    \end{align*}
An application of continuous mapping theorem on Assumption 1. gives us that as $n \to \infty$,
    $$\left\{S_n(r)\right\}_{r \in [\eta,1]}\leadsto \left\{\langle W(\eta), W(r)-W(\eta)\rangle\right\}_{r \in [\eta,1]} \text{ in } D[\eta,1].$$
    Now,
\begin{align*}
  &\sqrt{m_2}\cdot U_n\\
  &=\frac{S_n(1)}{\sqrt{\frac{1}{m_2}\sum_{t=m_1+1}^n\left(S_n(\frac{t}{n})-\frac{t-m_1}{m_2}S_n(1)\right)^2}}\\
  &=\frac{S_n(1)}{\sqrt{\frac{1}{1-\eta}\int_{\eta}^{1}\,\left(S_n(r')-\frac{r'-\eta}{1-\eta}S_n(1)\right)^2 dr'+o_P(1)}}\\
  &\overset{r=r'-\eta}{=}\frac{S_n(1)}{\sqrt{\frac{1}{1-\eta}\int_{0}^{1-\eta}\,\left(S_n(r+\eta)-\frac{r}{1-\eta}S_n(1)\right)^2 dr+o_P(1)}}\allowdisplaybreaks
  \\
  &\overset{D}{\to}\frac{\langle W(\eta), W(1)-W(\eta)\rangle}{\sqrt{\frac{1}{1-\eta}\int_{0}^{1-\eta}\,\Big(\langle W(\eta), W(r+\eta)-W(\eta)\rangle
  -\frac{r}{1-\eta} \langle W(\eta), W(1)-W(\eta)\rangle\Big)^2 dr}}\\
  &\equiv g\left(\{W(r)\}_{r \in [0,1]}\right),
\end{align*}
where the convergence in distribution in the penultimate line holds due to another application of the Continuous Mapping theorem. By the independent increment property of $W$, $\left\{W(t)\right\}_{t \in [0,\eta]}$ is independent of $\left\{W(r+\eta)-W(\eta)\right\}_{r \in [0,1-\eta]}$. Therefore, conditioning on\\ $W(\eta)=u$, we have
\begin{align*}
    &g\left(\{W(r)\}_{r \in [0,1]}\right) \mid \left\{W(\eta)=u\right\}\\
    &\overset{D}{=}\frac{\langle u, W(1)-W(\eta)\rangle}{\sqrt{\frac{1}{1-\eta}\int_{0}^{1-\eta}\,\Big(\langle u, W(r+\eta)-W(\eta)\rangle
  -\frac{r}{1-\eta} \langle u, W(1)-W(\eta)\rangle\Big)^2 dr}}\\
  &=\frac{\langle Su, u \rangle^{-1/2} \langle u, W(1)-W(\eta)\rangle}{\frac{\langle Su, u \rangle^{-1/2}}{\sqrt{1-\eta}}\sqrt{\int_{0}^{1-\eta}\,\Big(\langle u, W(r+\eta)-W(\eta)\rangle
  -\frac{r}{1-\eta} \langle u, W(1)-W(\eta)\rangle\Big)^2 dr}}\\
  &\overset{D}{=}\frac{b(1)}{\sqrt{\int_{0}^{1} \left(b(s(1-\eta)+\eta)-sb(1)\right)^2ds}},
\end{align*}
where $b(t)=B(t)-B(\eta)$, $t \in [\eta,1]$ and the final equality follows from Lemma \ref{lemma:distn_inner.prod_BM} and Continuous Mapping theorem.  Note that the conditional distribution doesn't depend on $u$ and hence, we conclude that it will also be the unconditional distribution. Now, if we let $h\left(B(r)\right)=\frac{1}{\sqrt{1-\eta}}\, b(r(1-\eta)+\eta,1);\;\; r \in [0,1]$, then $\left(\{h(B(r))\}_{r \in [0,1]}\right)\overset{D}{=}\{B(r)\}_{r \in [0,1]}.$ Therefore,
\begin{align*}
g\left(\{B(r)\}_{r \in [0,1]}\right) &\overset{D}{=}
   \frac{h(B(1))}{\sqrt{\int_{0}^{1}\, \left(h(B(s))-sh(B(1))\right)^2\, ds}}\\
   &\overset{D}{=}\frac{B(1)}{\sqrt{\int_{0}^{1}\, \left(B(s)-sB(1)\right)^2\, ds}}=U.
\end{align*}
\subsection{Proof of Theorem 3.2.}
 Let us define $\tilde{Z}_i=K(Y_i,.)-\mu_{P_n}$.From the definition of $\left\{S_n(r)\right\}_{r \in [\eta,1]}$, it follows that:
     \begin{align*}
        S_n(r)
        &=\left\langle \frac{1}{\sqrt{n}}\sum_{i=1}^{m_1}(Z_i-\mu_{P_0}),  \frac{1}{\sqrt{n}}\sum_{j=m_1+1}^{\lfloor nr \rfloor}(Z_j-\mu_{P_0})\right\rangle\\
        &=\left\langle \frac{1}{\sqrt{n}}\sum_{i=1}^{m_1}(Z_i-\mu_{P_n}+\mu_{P_n}-\mu_{P_0}),  \frac{1}{\sqrt{n}}\sum_{j=m_1+1}^{\lfloor nr \rfloor}(Z_j-\mu_{P_n}+\mu_{P_n}-\mu_{P_0})\right\rangle\\
        &=\frac{1}{n}\sum_{j=m_1+1}^{\lfloor nr \rfloor}\sum_{i=1}^{m_1} \langle \tilde{Z}_i,\tilde{Z}_j\rangle-\frac{1}{n}\sum_{j=m_1+1}^{\lfloor nr \rfloor}\sum_{i=1}^{m_1}\langle \tilde{Z}_i, \mu_{P_n}-\mu_{P_0}\rangle\\
        &-\frac{1}{n}\sum_{j=m_1+1}^{\lfloor nr \rfloor}\sum_{i=1}^{m_1}\langle \tilde{Z}_j, \mu_{P_n}-\mu_{P_0}\rangle +\frac{(\lfloor nr \rfloor-m_1) m_1}{n}\left\|\mu_{P_n}-\mu_{P_0}\right\|^2\allowdisplaybreaks\\
        &=\frac{1}{n}\sum_{j=m_1+1}^{\lfloor nr \rfloor}\sum_{i=1}^{m_1} \langle \tilde{Z}_i,\tilde{Z}_j\rangle-\frac{1}{n}\sum_{j=m_1+1}^{\lfloor nr \rfloor}\sum_{i=1}^{m_1}\langle \tilde{Z}_i,\delta_n\rangle-\frac{1}{n}\sum_{j=m_1+1}^{\lfloor nr \rfloor}\sum_{i=1}^{m_1}\langle \tilde{Z}_j, \delta_n\rangle\\
        &+\frac{(\lfloor nr \rfloor-m_1) m_1}{n}\left\|\delta_n\right\|^2.
    \end{align*}
    
    \begin{enumerate}
        \item $\Delta_n^2\to 0$: This means $\sqrt{n} \delta_n \to \mathbf{0}$, where $\mathbf{0}$ represents the zero element of $\mathcal{H}(K)$. 
        Since $(\tilde{Z}_t)_{t=1}^{n}$ satisfy Assumption 2, therefore as $n \to \infty,$
   \begin{align*}
       \sup_{r \in [\eta,1]}\frac{1}{n}\sum_{j=m_1+1}^{\lfloor nr \rfloor}\sum_{i=1}^{m_1}\langle \tilde{Z}_i,\delta_n\rangle&=\sup_{r \in [\eta,1]}\langle \frac{1}{\sqrt{n}}\sum_{i=1}^{m_1}Z_i, \frac{1}{\sqrt{n}}\sum_{j=m_1+1}^{\lfloor nr \rfloor} \delta_n \rangle\\
       &\leq \left\|\frac{1}{\sqrt{n}}\sum_{i=1}^{m_1}Z_i \right\|\cdot \sup_{r \in [\eta,1]} \frac{\lfloor nr \rfloor -m_1}{\sqrt{n}}\left\|\delta_n\right\|\overset{\bP}{\to} 0.
   \end{align*}%\newpage
    Similarly, 
    $$\sup_{r \in [\eta,1]}\frac{1}{n}\sum_{j=m_1+1}^{\lfloor nr \rfloor}\sum_{i=1}^{m_1}\langle \tilde{Z}_j, \delta_n\rangle \overset{\bP}{\to} 0.$$
    Finally, as a consequence of Assumption 2 and continuous mapping theorem, we have
    \begin{align*}
        \frac{1}{n}\sum_{j=m_1+1}^{\lfloor nr \rfloor}\sum_{i=1}^{m_1} \langle \tilde{Z}_i,\tilde{Z}_j\rangle\leadsto \left\{\langle \tilde{W}(\eta), \tilde{W}(r)-\tilde{W}(\eta)\rangle\right\}_{r \in [\eta,1]} \text{ in } D[\eta,1].
    \end{align*}
    Therefore,
    $$S_n(r)\leadsto \left\{\langle \tilde{W}(\eta), \tilde{W}(r)-\tilde{W}(\eta)\rangle\right\}_{r \in [\eta,1]} \text{ in } D[\eta,1].$$
    Observe that this limiting process is same as the limiting process of $S_n(r)$ under $H_0.$ Therefore, using the same conditioning technique that we used under the null, we conclude that under the alternative, $\sqrt{m_2}\cdot U_n \overset{D}{\to} U,$ as $n \to \infty.$ Therefore, $\bP(\sqrt{m_2}\cdot U_n>U_{1-\alpha})\to \alpha.$
    \item $\Delta_n^2 \to \infty:$ Observe that in this case,
    \begin{align*}
        \frac{1}{\Delta_n^2}S_n(1)\overset{D}{\to}\; \eta (1-\eta),
    \end{align*}
    and hence,
    $$S_n(1)=O_{\bP}(\Delta_n^2).$$
    Also,
    \begin{align*}
        &\frac{1}{m_2}\sum_{t=m_1+1}^n\left(S_n(\frac{t}{n})-\frac{t-m_1}{m_2}S_n(1)\right)^2\\\
        &=\frac{1}{n^2 m_2}\sum_{t=m_1+1}^n\Bigg(\sum_{j=m_1+1}^{t}\sum_{i=1}^{m_1} \langle \tilde{Z}_i,\tilde{Z}_j\rangle-\sum_{j=m_1+1}^{t}\sum_{i=1}^{m_1}\langle \tilde{Z}_i,\delta_n\rangle-\sum_{j=m_1+1}^{t}\sum_{i=1}^{m_1}\langle \tilde{Z}_j, \delta_n\rangle \\
        &+(t-m_1) m_1\left\|\delta_n\right\|^2\\\
        &-\frac{t-m_1}{m_2}\Bigg(\sum_{j=m_1+1}^{n}\sum_{i=1}^{m_1} \langle \tilde{Z}_i,\tilde{Z}_j\rangle-\sum_{j=m_1+1}^{n}\sum_{i=1}^{m_1}\langle \tilde{Z}_i,\delta_n\rangle\\
        &-\sum_{j=m_1+1}^{n}\sum_{i=1}^{m_1}\langle \tilde{Z}_j, \delta_n\rangle +m_2 m_1\left\|\delta_n\right\|^2\Bigg)\Bigg)^2\allowdisplaybreaks\\
        &=\frac{1}{n^2 m_2} \sum_{t=m_1+1}^n\Bigg( \underbrace{\left(\sum_{j=m_1+1}^{n}\sum_{i=1}^{m_1} \langle \tilde{Z}_i,\tilde{Z}_j\rangle-\frac{t-m_1}{m_2}\sum_{j=m_1+1}^{n}\sum_{i=1}^{m_1} \langle \tilde{Z}_i,\tilde{Z}_j\rangle\right)}_{=A_n(t/n)}\\
        &+\underbrace{m_1 \left\langle \delta_n, \left(\sum_{j=m_1+1}^{t}\tilde{Z}_j-\frac{t-m_1}{m_2}\sum_{j=m_1+1}^{n}\tilde{Z}_j\right)\right\rangle}_{=B_n(t/n)}\Bigg)^2\leq \frac{2}{n^2 m_2}\sum_{t=m_1+1}^n (A^2_n(t/n)+B^2_n(t/n)).
            \end{align*}
    It's not difficult to see that under Assumption 2, a direct application of continuous mapping theorem gives us
    \begin{align*}
        \frac{1}{n^3 m_2\left\|\delta_n\right\|^2}\sum_{t=m_1+1}^nA^2_n(t/n)=O_{\bP}(1),
    \end{align*}
    and 
    \begin{align*}
        \frac{1}{n^3 m_2\left\|\delta_n\right\|^2}\sum_{t=m_1+1}^nB^2_n(t/n)=O_{\bP}(1).
    \end{align*}
 Therefore,
    $$\frac{1}{m_2}\sum_{t=m_1+1}^n\left(S_n(\frac{t}{n})-\frac{t-m_1}{m_2}S_n(1)\right)^2=O_P(\Delta_n^2).$$
    Hence,
    \begin{align*}
        \sqrt{m_2}\cdot U_n =\frac{S_n(1)}{\sqrt{\frac{1}{m_2}\sum_{t=m_1+1}^n\left(S_n(\frac{t}{n})-\frac{t-m_1}{m_2}S_n(1)\right)^2}}\overset{\bP}{\to} \infty,
    \end{align*}
   which implies that $\bP(\sqrt{m_2}\cdot U_n>U_{1-\alpha})\to 1.$
    \item $\Delta_n^2 \to c^2$, where $c \in (0,\infty)$ and, $\frac{\delta_n}{\left\|\delta_n\right\|}\to \delta_0 \in \mathcal{H}(K)$: In this case $\sqrt{n}\delta_n \to c \delta_0$. By continuous mapping theorem,
    \begin{align*}
        &S_n(r)
        \\
        &=\frac{1}{n}\sum_{j=m_1+1}^{\lfloor nr \rfloor}\sum_{i=1}^{m_1} \langle \tilde{Z}_i,\tilde{Z}_j\rangle-\frac{1}{n}\sum_{j=m_1+1}^{\lfloor nr \rfloor}\sum_{i=1}^{m_1}\langle \tilde{Z}_i,\delta_n\rangle-\frac{1}{n}\sum_{j=m_1+1}^{\lfloor nr \rfloor}\sum_{i=1}^{m_1}\langle \tilde{Z}_j, \delta_n\rangle\\
        &+\frac{(\lfloor nr \rfloor-m_1) m_1}{n}\left\|\delta_n\right\|^2\\
        &\leadsto \langle \tilde{W}(\eta), \tilde{W}(r)-\tilde{W}(\eta)\rangle+ c(r-\eta)\langle \delta_0 , \tilde{W}(\eta)\rangle + c \eta \langle \delta_0, \tilde{W}(r)-\tilde{W}(\eta)\rangle +c^2 \eta (r-\eta)\\
        &=:C(r),        
    \end{align*}
    in $D[\eta,1]$. Therefore, similar to the calculations in the proof of Theorem 3.1., by another application of continuous mapping theorem, we obtain that as $n \to \infty$,
    \begin{align*}
        \sqrt{m_2} \cdot U_n\overset{D}{\to} U^*:=\frac{C(1)}{\sqrt{\int_{0}^{1}\left(C(s(1-\eta)+\eta)-sC(1)\right)^2\,ds}}.
    \end{align*}
    Since $\left\{\tilde{W}(s)\right\}_{s \in [0,\eta]}$ is independent of $\left\{\tilde{W}(r)-\tilde{W}(\eta)\right\}_{r \in [\eta, 1]}$, conditioning on\\ $\left\{\tilde{W}(\eta)=u\right\}$, we obtain that
    \begin{align*}
        &\left\{C(r)\right\}_{r \in [\eta,1]}\mid \left\{\tilde{W}(\eta)=u\right\}\\
        &\overset{D}{=} \left\{\langle u+c \eta \delta_0, \tilde{W}(r)-\tilde{W}(\eta)\rangle +c (r-\eta) \langle \delta_0, u \rangle + c^2 \eta (r-\eta)\right\}_{r \in [\eta,1]}\\
        &\overset{D}{=}\left\{\underbrace{\sqrt{\langle \tilde{S}(u+c \eta \delta_0), u+c \eta \delta_0\rangle}}_{=:\Theta(u)}\left(B(r)-B(\eta)\right)+c (r-\eta) \langle \delta_0, u \rangle + c^2 \eta (r-\eta)\right\}_{r \in [\eta,1]} .
    \end{align*}%\newpage
     Note that
    \begin{align*}
        &\left\{C(s(1-\eta)+\eta)-sC(1)\right\}_{s \in [0,1]}\\
        &\overset{D}{=}\Theta(u)\left\{\left(B(s(1-\eta)+\eta)-B(\eta)-s(B(1)-B(\eta))\right)\right\}_{s \in [0,1]}.
    \end{align*}
    Therefore, conditioned on $\left\{\tilde{W}(\eta)=u\right\}$, we have
    \begin{align*}
        U^*\mid \left\{W(\eta)=u\right\}&\overset{D}{=} \frac{(B(1)-B(\eta))+\frac{c (1-\eta) \langle \delta_0, u \rangle}{\Theta(u)}+\frac{c^2 \eta (1-\eta)}{\Theta(u)}}{\sqrt{\int_{0}^{1}\left(B(s(1-\eta)+\eta)-B(\eta)-s(B(1)-B(\eta))\right)^2 ds}}\allowdisplaybreaks\\
        &\overset{D}{=}\frac{B(1)+\frac{c (1-\eta) \langle \delta_0, u \rangle}{\Theta(u)}+\frac{c^2 \eta (1-\eta)}{\Theta(u)}}{\sqrt{\int_{0}^{1}\left(B(s)-sB(1)\right)^2 ds}},
    \end{align*}
    where in the last line we have used the fact that\\ $\frac{1}{\sqrt{1-\eta}}\left\{B(s(1-\eta)+\eta)-B(\eta)\right\}_{s \in [0,1]}\overset{D}{=}\left\{B(s)\right\}_{s \in [0,1]}.$ This completes the proof.
    \end{enumerate}
    \subsection{Proof of Theorem 3.3.}
    Let us define $Z_i'=K(Y_i,.)-\mu_P$, where $i=1,2,\cdots,n.$ Now, note that
\begin{align*}
    &N^{1/2}T_n(k)\\
    &= \sum_{t=1}^{k} \left(Z_t-\Bar{Z}_{1:N}\right)\\
    &=\frac{1}{m_1}\sum_{t=1}^{k}\sum_{i=1}^{m_1} K(Y_i, Y_{m_1+t})-\frac{1}{m_1}\sum_{t=1}^{k}\sum_{j=n-m_1+1}^{n} K(Y_j, Y_{m_1+t})-\frac{k}{N m_1}\sum_{t=1}^{N}\sum_{i=1}^{m_1} K(Y_i, Y_{m_1+t})\\
    &+\frac{k}{N m_1}\sum_{t=1}^{N}\sum_{j=n-m_1+1}^{n} K(Y_j, Y_{m_1+t})\allowdisplaybreaks\\
    &=\frac{1}{m_1}\sum_{t=1}^{k}\sum_{i=1}^{m_1} \left(K(Y_i, Y_{m_1+t})-\mu_P(Y_i)-\mu_P(Y_{m_1+t})+\left\|\mu_P\right\|^2\right)\\
    &-\frac{1}{m_1}\sum_{t=1}^{k}\sum_{j=n-m_1+1}^{n} \left(K(Y_j, Y_{m_1+t})-\mu_P(Y_j)-\mu_P(Y_{m_1+t})+\left\|\mu_P\right\|^2\right)\\
    &-\frac{k}{N m_1}\sum_{t=1}^{N}\sum_{i=1}^{m_1} \left(K(Y_i, Y_{m_1+t})-\mu_P(Y_i)-\mu_P(Y_{m_1+t})+\left\|\mu_P\right\|^2\right)\\
    &+\frac{k}{N m_1}\sum_{t=1}^{N}\sum_{j=n-m_1+1}^{n} \left(K(Y_j, Y_{m_1+t})-\mu_P(Y_j)-\mu_P(Y_{m_1+t})+\left\|\mu_P\right\|^2\right)\\
    &=\frac{1}{m_1}\sum_{t=1}^{k}\sum_{i=1}^{m_1} \langle Z'_i, Z'_{m_1+t} \rangle -\frac{1}{m_1}\sum_{t=1}^{k}\sum_{j=n-m_1+1}^{n} \langle Z'_j, Z'_{m_1+t}\rangle -\frac{k}{N m_1}\sum_{t=1}^{N}\sum_{i=1}^{m_1} \langle Z'_i, Z'_{m_1+t} \rangle\\
    &+ \frac{k}{N m_1}\sum_{t=1}^{N}\sum_{j=n-m_1+1}^{n} \langle Z'_j, Z'_{m_1+t}\rangle.
\end{align*}
With a little abuse of notation, setting $k=\lfloor nr \rfloor$, where $r \in [0,1-2\eta]$, applying Continuous Mapping theorem on Assumption 1., as $n \to \infty$, we have
\begin{align*}
    N^{1/2}T_n(k) &\leadsto \frac{1}{\eta} \langle W(\eta)-W(1)+W(1-\eta), W(r+\eta)-W(\eta) \rangle \\
    &-\frac{r}{\eta(1-2\eta)} \langle W(\eta)-W(1)+W(1-\eta), W(1-\eta)-W(\eta) \rangle\\
    &=\frac{1}{\eta} \langle W(\eta)-W(1)+W(1-\eta), W(r+\eta) -W(\eta) -\frac{r}{1-2\eta}(W(1-\eta)-W(\eta))\rangle.
\end{align*}
Similarly, as $n \to \infty$, we have
\begin{align*}
    &\frac{1}{N}\sum_{t=1}^k\left(Z_{1:t}-t\Bar{Z}_{1:k}\right)^2 \\
    &\leadsto \frac{1}{\eta^2(1-2\eta)}\int_{0}^{r} \Big(\langle W(\eta)-W(1)+W(1-\eta), W(s+\eta) -W(\eta)\rangle\\
    &-\frac{s}{r}\langle W(\eta)-W(1)+W(1-\eta), W(r+\eta) -W(\eta)\rangle\Big)^2\,ds\\
    &=\frac{1}{\eta^2(1-2\eta)}\int_{0}^{r} \langle W(\eta)-W(1)+W(1-\eta), W(s+\eta) -W(\eta)-\frac{s}{r}(W(r+\eta)-W(\eta))\rangle^2\,ds.
\end{align*}
\begin{align*}
    &\frac{1}{N}\sum_{t=k+1}^{N}\left(Z_{t:N}-(N-t+1)\Bar{Z}_{k+1:N}\right)^2\\
    &\leadsto \frac{1}{\eta^2(1-2\eta)}\int_{r}^{1-2\eta} \langle W(\eta)-W(1)+W(1-\eta),\\
    &W(1-\eta) -W(s+\eta)-\frac{1-2\eta-s}{1-2\eta-r}(W(1-\eta)-W(r+\eta))\rangle^2\,ds.
\end{align*}
Therefore, by Continuous Mapping Theorem, as $n \to \infty$, we have
\begin{align*}
    &G_n\\
    &\overset{D}{\to} \sup_{ r \in [0,1-2\eta]}\sqrt{1-2\eta}\langle W(\eta)-W(1)+W(1-\eta), W(r+\eta) -W(\eta) -\frac{r}{1-2\eta}(W(1-\eta)\\
    &-W(\eta))\rangle\Big/\Big(\int_{0}^{r} \langle W(\eta)-W(1)+W(1-\eta), W(s+\eta) -W(\eta)-\frac{s}{r}(W(r+\eta)-W(\eta))\rangle^2\,ds\\
    &+\int_{r}^{1-2\eta} \langle W(\eta)-W(1)+W(1-\eta), W(1-\eta) -W(s+\eta)\\
    &-\frac{1-2\eta-s}{1-2\eta-r}(W(1-\eta)-W(r+\eta))\rangle^2\,ds\Big)^{1/2}\\
    &=h\left(\left\{W(t)\right\}_{t \in [0,1]}\right).
\end{align*}
Now, due to the independent increment property of $W$, the process $\left\{W(r+\eta)-W(\eta)\right\}_{r \in [0,1-2\eta]}$ is independent of\newline \noindent $\left\{W(t)\right\}_{t \in [0,\eta]}\bigcup \left\{W(r+1-\eta)-W(1-\eta)\right\}_{r \in [0,\eta]}$. Thus, conditioning on $W(\eta)-W(1)+W(1-\eta)=u$ and using Lemma~\ref{lemma:distn_inner.prod_BM}, we have
\begin{align*}
    &h\left(\left\{W(t)\right\}_{t \in [0,1]}\right) \mid \left\{W(\eta)-W(1)+W(1-\eta)=u\right\}\\
    &\overset{D}{=}\sup_{ r \in [0,1-2\eta]}\sqrt{1-2\eta}\left(B(r+\eta) -B(\eta) -\frac{r}{1-2\eta}(B(1-\eta)-B(\eta))\right)\\
    &\Big/\sqrt{\begin{pmatrix}
        \int_{0}^{r}  \left(B(s+\eta) -B(\eta)-\frac{s}{r}(B(r+\eta)-B(\eta))\right)^2 ds\\
        +\int_{r}^{1-2\eta} \left( B(1-\eta) -B(s+\eta)-\frac{1-2\eta-s}{1-2\eta-r}(B(1-\eta)-B(r+\eta))\right)^2ds
    \end{pmatrix}}\allowdisplaybreaks\\
    &\overset{r'=\frac{r}{1-2\eta}}{=}\sup_{ r' \in [0,1]}\left(B(r'(1-2\eta)+\eta) -B(\eta) -r'(B(1-\eta)-B(\eta))\right)\\
    &\Bigg/\Big(\int_{0}^{(1-2\eta)r'}  \frac{1}{1-2\eta}\left(B(s+\eta) -B(\eta)-\frac{s}{(1-2\eta)r'}(B((1-2\eta)r'+\eta)-B(\eta))\right)^2 ds\\
    &+\int_{(1-2\eta)r'}^{1-2\eta} \frac{1}{1-2\eta}\Bigg( B(1-\eta) -B(s+\eta)-\frac{1-2\eta-s}{1-2\eta-(1-2\eta)r'}(B(1-\eta)\\
    &-B((1-2\eta)r'+\eta))\Bigg)^2ds\Big)^{1/2}\\
    &\overset{t=\frac{s}{1-2\eta}}{=}\sup_{ r' \in [0,1]}\left(B(r'(1-2\eta)+\eta) -B(\eta) -r'(B(1-\eta)-B(\eta))\right)\\
    &\Bigg/\Big(\int_{0}^{r'}  \left(B((1-2\eta)t+\eta) -B(\eta)-\frac{t}{r'}(B((1-2\eta)r'+\eta)-B(\eta))\right)^2 dt\\
    &+\int_{r'}^{1} \left( B(1-\eta) -B((1-2\eta)t+\eta)-\frac{1-t}{1-r'}(B(1-\eta)-B((1-2\eta)r'+\eta))\right)^2dt\Big)^{1/2}.\\
\end{align*}
Now, 
$$\frac{1}{\sqrt{1-2\eta}}\left\{B((1-2\eta)t+\eta) -B(\eta)\right\}_{t\in [0,1]}\overset{D}{=}\left\{B(s)\right\}_{s \in [0,1]}.$$
Therefore,
\begin{align*}
     &h\left(\left\{W(t)\right\}_{t \in [0,1]}\right) \mid \left\{W(\eta)+W(1)-W(1-\eta)=u\right\}\\
     &\overset{D}{=}\sup_{r'\in[0,1]}\;\frac{B(r')-rB(1)}{\left(\int_{0}^{r'}(B(s)-\frac{s}{r'}B(r'))^2 ds+\int_{r'}^{1}(B(1)-B(s)-\frac{1-s}{1-r'}(B(1)-B(r')))^2 ds\right)^{1/2}}.
\end{align*}
Since the conditional distribution doesn't depend on $u$, we conclude that it is also  the unconditional distribution, which completes the proof.
\subsection{Proof of Theorem 3.4.}
Let us redefine $Z_i'$ as $Z'_i:=K(Y_i,.)-\mu_{P_i}$ and let $\hat{\nu}_1:=\frac{1}{m_1}\sum_{i=1}^{m_1}Z_i'$ and $\hat{\nu}_n:=\frac{1}{m_1}\sum_{j=n-m_1+1}^{n}Z_j'$.
    Note that
    \begin{equation}\label{eqn:CPD_Power_1}
        \sqrt{n}(\hat{\nu}_1-\hat{\nu}_n-\delta_n)=\frac{n}{\lfloor n \eta \rfloor} \frac{1}{\sqrt{n}}\sum_{i=1}^{m_1}\left(Z'_i-Z'_{n-m_1+i}\right)-\sqrt{n}\,\delta_n.
    \end{equation}
    Therefore, for any $0\leq r \leq 1$, it follows that
    \begin{align}\label{eqn:CPD_Power_2}
        &\langle\hat{\nu}_1-\hat{\nu}_n-\delta_n, \sum_{j=1}^{\lfloor Nr \rfloor} Z'_{j+m_1}\rangle\\
        &=\frac{n}{\lfloor n \eta \rfloor}\langle\frac{1}{\sqrt{n}}\sum_{i=1}^{m_1}\left(Z'_i-Z'_{n-m_1+i}\right),\frac{1}{\sqrt{n}}\sum_{j=1}^{\lfloor Nr \rfloor} Z'_{j+m_1}\rangle
        -\langle\frac{1}{\sqrt{n}}\sum_{j=1}^{\lfloor Nr \rfloor} Z'_{j+m_1},\sqrt{n}\,\delta_n\rangle.
    \end{align}
    Now, we divide our analysis into three subcases based on the asymptotic behavior of\newline  $\sqrt{n}\,\|\delta_n\|_{\mathcal{H}(K)}$, defined as $\Delta_n$.
    \begin{enumerate}
        \item $\Delta_n\to 0:$ This is equivalent to $\sqrt{n}\delta_n\to \mathbf{0}\in \mathcal{H}(K).$ Then,
        \begin{align*}
            n \langle \hat{\nu}_1-\hat{\nu}_n-\delta_n, \delta_n\rangle \leq \underbrace{\sqrt{n}\left\|\hat{\nu}_1-\hat{\nu}_n-\delta_n\right\|}_{O_{\bP}(1)\text{ because of}~\ref{eqn:CPD_Power_1}}\times \underbrace{\Delta_n}_{\to 0}\overset{\bP}{\to} 0.
        \end{align*}
         Therefore, letting $k=\lfloor Nr \rfloor$, where $0\leq r \leq 1$, from Equations~(\ref{eqn:CPD_Power_1}) and (\ref{eqn:CPD_Power_2}) and Lemma~\ref{lemma:CPD_Power}, we have
        \begin{align*}
            &N^{1 / 2} T_n(k)\\
            &\leadsto \frac{1}{\eta}\left\langle \tilde{W}(\eta)-\tilde{W}(1)+\tilde{W}(1-\eta), \tilde{W}(r(1-2\eta)+\eta)-\tilde{W}(\eta)-r(\tilde{W}(1-\eta)-\tilde{W}(\eta))\right\rangle.
        \end{align*}
        \begin{align*}
            &N V_n(k)\\
            &\leadsto \frac{1}{\eta^2(1-2\eta)}\Bigg(\int_{0}^{r}\left\langle \tilde{W}(\eta)-\tilde{W}(1)+\tilde{W}(1-\eta),\tilde{W}(s(1-2\eta)+\eta)-\tilde{W}(\eta)\right\rangle^2ds\\
            &+\int_{r}^{1}\langle \tilde{W}(\eta)-\tilde{W}(1)+\tilde{W}(1-\eta),\tilde{W}(1-\eta)-\tilde{W}(s(1-2\eta)+\eta)\\
            &-\frac{1-s}{1-r}(\tilde{W}(1-\eta)-\tilde{W}(r(1-2\eta)+\eta))\rangle^2ds\Bigg).
        \end{align*}
        Now, observe that the above two limiting processes are same as the ones we derived under the null. Therefore, following the same lines of argument, we can establish that as $n \to \infty$, $G_n \overset{D}{\to}G.$
        Therefore, $\bP(G_n>G_{1-\alpha})\to \alpha.$
        \item $\Delta_n\to \infty:$ In this case, as $n \to \infty$,
        \begin{align*}
            \frac{1}{\left\|\Delta_n\right\|^2} \left\langle\hat{\nu}_1-\hat{\nu}_n-\delta_n, \sum_{j=1}^{\lfloor Nr \rfloor} Z'_{j+m_1}\right\rangle\leadsto 0.
        \end{align*}
        Also,
        \begin{align*}
            \frac{n}{\Delta_n^2}\left\langle\hat{\nu}_1-\hat{\nu}_n-\delta_n, \delta_n\right\rangle\overset{\bP}{\to}-1.
        \end{align*}
        Now, letting $r_0=\lim_{n \to \infty}\frac{k_0-m_1}{N}$, we obtain that as $n\to \infty$,
        $$\frac{N^{1/2}T_n(k)}{\Delta_n^2}\leadsto (1-2\eta)(r_0\wedge r)((1-r_0)\wedge (1-r)).$$
    Similarly,
    \begin{align*}
        \frac{N}{\Delta_n^4}V_n(k)&\leadsto \int_{0}^{r}\left((1-2\eta)\frac{[(s \wedge r_0)((r-r_0)\wedge (r-s))]\vee 0}{r}\right)^2 ds\\
        &+\int_{r}^{1} \left((1-2\eta)\frac{[((1-s) \wedge (1-r_0))((r_0-r)\wedge (s-r))]\vee 0}{1-r}\right)^2 ds.
    \end{align*}
    In particular, 
    \begin{align*}
        \frac{N^{1/2}T_n(\lfloor N r_0 \rfloor)}{\Delta_n^2}\overset{\bP}{\to} (1-2\eta)r_0(1-r_0)\text{ and } \frac{N}{\Delta_n^4}V_n(\lfloor N r_0 \rfloor)\overset{\bP}{\to} 0.
    \end{align*}
    Therefore,
    $$G_n=\sup_{k=1,2,\cdots,N-1} T_n(k)/V^{1/2}_n(k)\geq T_n(\lfloor N r_0 \rfloor)/V_n(\lfloor N r_0 \rfloor)\overset{\bP}{\to}\infty,$$
    which implies that $\bP(G_n>G_{1-\alpha})\to 1.$
    \item $\Delta_n\to c\in(0,\infty)$ and $\frac{\delta_n}{\left\|\delta_n\right\|}\to \delta_0\in \mathcal{H}(K):$ In this case,
    \begin{align*}
        &\left\langle\hat{\nu}_1-\hat{\nu}_n-\delta_n, \sum_{j=1}^{\lfloor Nr \rfloor} Z'_{j+m_1}\right\rangle \\
        &\leadsto \left\langle \underbrace{\frac{1}{\eta}( \tilde{W}(\eta)-\tilde{W}(1)+\tilde{W}(1-\eta))}_{=b(\eta)}-c\delta_0,\tilde{W}(r(1-2\eta)+\eta)-\tilde{W}(\eta)\right \rangle,
    \end{align*}
    and
    \begin{align*}
        n \langle \hat{\nu}_1-\hat{\nu}_n-\delta_n, \delta_n \rangle \overset{D}{\to} \left\langle b(\eta)-c\delta_0, c\Delta_0\right \rangle.
    \end{align*}
    Therefore, letting $k=\lfloor Nr \rfloor$, we get
    \begin{align*}
        N^{1/2} T_n(k) &\leadsto \left\langle b(\eta)-c\delta_0, \tilde{W}(r(1-2\eta)+\eta)-\tilde{W}(\eta)-r(\tilde{W}(1-\eta)-\tilde{W}(\eta))\right \rangle\\
        &-(1-2\eta)(r_0\wedge r)((1-r_0)\wedge (1-r))\left\langle b(\eta)-c\delta_0, c\delta_0\right \rangle=:T(r,b(\eta)).
        \end{align*}
        \begin{align*}
            &NV_n(k)\\
            &\leadsto \int_{0}^{r} \Bigg(\left \langle b(\eta)-c\delta_0, \tilde{W}(s(1-2\eta)+\eta)-\tilde{W}(\eta)-\frac{s}{r}(\tilde{W}(r(1-2\eta)+\eta)-\tilde{W}(\eta))\right \rangle\\
            &-(1-2\eta)\frac{[(s \wedge r_0)((r-r_0)\wedge (r-s))]\vee 0}{r}\langle b(\eta)-c\delta_0, c\delta_0 \rangle \Bigg)^2 ds\allowdisplaybreaks\\
            &+\int_{r}^{1} \Bigg( \langle b(\eta)-c\delta_0, \tilde{W}(1-\eta)-\tilde{W}(s(1-2\eta)+\eta)\\
            &-\frac{1-s}{1-r}(\tilde{W}(1-\eta)-\tilde{W}(r(1-2\eta)+\eta)) \rangle\\
            &-(1-2\eta)\frac{[((1-s) \wedge (1-r_0))((r_0-r)\wedge (s-r))]\vee 0}{1-r}\langle b(\eta)-c\delta_0, c\delta_0 \rangle \Bigg)^2 ds\\
            &=:V(r, b(\eta)).
        \end{align*}
        Therefore,
        \begin{align*}
            G_n\overset{D}{\to} G^*:=\sup_{r \in [0,1]} T(r, b(\eta))/V^{1/2}(r, b(\eta)).
        \end{align*}
        Now, conditioning on  $b(\eta)-c\delta_0=u$, using the the independent increment property of the Hilbert-valued Brownian motion $W$ as we did in the proof of Theorem 3.3., Lemma~\ref{lemma:distn_inner.prod_BM} and also the fact that $$\frac{1}{\sqrt{1-2\eta}}\left\{B((1-2\eta)t+\eta) -B(\eta)\right\}_{t\in [0,1]}\overset{D}{=}\left\{B(s)\right\}_{s \in [0,1]},$$ we obtain that
        \begin{align*}
            &G^*\mid \{b(\eta)-c\Delta_0=u\}\\
            &\overset{D}{=}\sup_{r \in [0,1]}\frac{\sqrt{\langle Su, u \rangle} (B(r)-rB(1))-(1-2\eta)(r_0\wedge r)((1-r_0)\wedge (1-r))\langle u, c\delta_0 \rangle }
            {\sqrt{\begin{pmatrix}
                \int_{0}^{r}\Big(\sqrt{\langle Su, u \rangle}(B(s)-\frac{s}{r}B(r))\\
                -\sqrt{1-2\eta}\frac{[(s \wedge r_0)((r-r_0)\wedge (r-s))]\vee 0}{r}\langle u, c\delta_0 \rangle\Big)^2 ds\\
                +\int_{r}^{1}\big(\sqrt{\langle Su, u \rangle}(B(1-s)-\frac{1-s}{1-r}B(1-r))\\
                -\sqrt{1-2\eta}\frac{[((1-s) \wedge (1-r_0))((r_0-r)\wedge (s-r))]\vee 0}{r}\langle u, c\delta_0 \rangle\big)^2 ds
            \end{pmatrix}}}\\
            &=\sup_{r \in [0,1]}\frac{ B(r)-rB(1)-\sqrt{\frac{1-2\eta}{\langle Su, u \rangle}}(r_0\wedge r)((1-r_0)\wedge (1-r))\langle u, c\delta_0 \rangle }{\sqrt{\splitdfrac{\int_{0}^{r}\left((B(s)-\frac{s}{r}B(r))-\sqrt{\frac{1-2\eta}{\langle Su, u \rangle}}\frac{[(s \wedge r_0)((r-r_0)\wedge (r-s))]\vee 0}{r}\langle u, c\delta_0 \rangle\right)^2 ds}%
              {\splitdfrac{+\int_{r}^{1}\Big((B(1-s)-\frac{1-s}{1-r}B(1-r))}{-\sqrt{\frac{1-2\eta}{\langle Su, u \rangle}}\frac{[((1-s) \wedge (1-r_0))((r_0-r)\wedge (s-r))]\vee 0}{r}\langle u, c\delta_0 \rangle\Big)^2 ds}}
            }}\\
            &=G^*_{cond}.
        \end{align*}
        This completes the proof.
        \end{enumerate}

\subsection{Proof of Proposition 1.}
Let us define
\[W_{n,i}:=
    \frac{1}{\sqrt{n}}(K(Y_i,\cdot)-\mu_P).\]
where $\{Y_1,Y_2,\cdots,Y_n\}$ comes from a stationary process on $\mathbb{R}^p$ and $P$ is the marginal distribution of $Y_1$. Let us define the two following partial sum processes in $D_{\mathcal{H(K)}}[0,1]$ and $C_{\mathcal{H(K)}}[0,1]$, respectively:
 \[M_n(t):=\sum_{i=1}^{\lfloor nt \rfloor} W_{n,i},\;t\in [0,1];\]
   \[M_n^*(t)=M_n(t)+(nt-\lfloor nt\rfloor)W_{n,\lfloor nt\rfloor+1},\;t\in [0,1].\]  
We first show that $M_n^*\leadsto W$ in $C_H[0,1]$. It suffices to verify the conditions of Lemma~\ref{lemma:fclt_chen_white}.  For a Hilbert valued random element $h_{random}$, we will denote its Hilbert space norm as $\left\|h_{random}\right\|$ and its $L_p$ norm as $\left\|h_{random}\right\|_p$, i.e.,
\[\left\|h_{random}\right\|_p:=(\bE \left\|h_{random}\right\|^p)^{1/p}.\]
First, note that for $1\leq i \leq n$,
\[\left\|W_{n,i}\right\|^2_2=\frac{1}{n}(\bE K(Y_i,Y_i)-\left\|\mu_P\right\|^2)<\infty.\]
 Moreover,
\begin{enumerate}
    \item [(a)]  In our setting, let $V_{n,i}:=Y_i$ for $i=1,2,\cdots,n$. Clearly, $\{W_{n,i}\}$ is $\cL_2$-NED on $\{V_{n,i}\}$ with approximation constants $\mu_m=0$ and $d_{n,i}=0$ for all $m,i$. Moreover, $\{V_{n,i}\}$ is a strong mixing array with $\alpha(m)$ of size $-\frac{r}{r-2}$ with $r=\delta+2$, because $\sum_{m\geq 1} \alpha(m)^{\frac{\delta}{\delta+2}}<\infty.$ Finally, from the definition of $c_{n,i}$ in Lemma~\ref{lemma:fclt_chen_white}, we have
    \[c^2_{n,i}=\max\{d^2_{n,i},\left\|W_{n,i}\right\|^2_2\}=\frac{1}{n}(\bE K(Y_i,Y_i)-\left\|\mu_P\right\|^2)=\frac{1}{n}(\bE K(Y_1,Y_1)-\left\|\mu_P\right\|^2).\]
    Therefore, 
    \begin{align*}
        \sup_{k,J}\,\limsup_{n\to \infty}k^{-1}\sum_{J+1\leq i \leq J+k} (c_{n,i})^2&=\sup_{k,J}\,\limsup_{n\to \infty}k^{-1}\sum_{J+1\leq i \leq J+k} \left\|W_{n,i}\right\|_2^2\\
        &= \frac{1}{n}(\bE K(Y_1,Y_1)-\left\|\mu_P\right\|^2)<\infty.
    \end{align*}
    \item[(b)] Note that when $1\leq i \leq n$,
    \begin{align*}
        \left\|W_{n,i}/c_{n,i}\right\|^2=\frac{\left\|K(Y_i,\cdot)-\mu_P\right\|^2}{\bE K(Y_i,Y_i)-\left\|\mu_P\right\|^2}\overset{D}{=}\frac{\left\|K(Y_1,\cdot)-\mu_P\right\|}{\bE K(Y_1,Y_1)-\left\|\mu_P\right\|^2},
    \end{align*}
    where the equality in distribution follows from stationarity. Therefore, \\$\{\left\|W_{n,i}/c_{n,i}\right\|^2;\;n,i\geq 1\}$ is uniformly integrable.

        \item [(c)] Take any $h\neq 0 \in \mathcal{H}(K)$. First, note that by Cauchy-Schwartz inequality, we have
        \[|h(Y)|=|\langle h, K(Y,\cdot)\rangle|\leq \left\|h\right\|\,\left\|K(Y,\cdot)\right\|=\left\|h\right\|\,\sqrt{K(Y,Y)}.\]
        Raising both sides to the power of $2+\delta$, we have
        \[|h(Y)|^{2+\delta}\leq \left\|h\right\|\, |K(Y,Y)|^{1+\frac{\delta}{2}}.\]
        Since $h\in \mathcal{H}(K)$, its Hilbert space norm is finite and hence, taking expectation and using the monotonicity of $L_r$ norm for $r\geq 1$, we have that \(\left\|h(Y)\right\|_{p}<\infty\) for any $1\leq p\leq 2+\delta.$ Now,
        \begin{align*}
            &\bE\left(\sum_{1\leq i \leq \lfloor nt \rfloor} \langle W_{n,i},h \rangle\right)^2\\
            &=\sum_{1\leq i \leq \lfloor nt \rfloor} \bE \langle W_{n,i},h\rangle^2+2\sum_{1\leq i_1<i_2\leq \lfloor nt \rfloor} \bE \langle W_{n,i_1},h\rangle \langle W_{n,i_2},h\rangle \\
            &=\frac{1}{n}\left(\sum_{1\leq i \leq \lfloor nt \rfloor} \bE (h(Y_i)-\bE h(Y_i))^2+2\sum_{1\leq i_1<i_2\leq \lfloor nt \rfloor} \bE (h(Y_{i_1})-\bE h(Y_{i_1}))(h(Y_{i_2})-\bE h(Y_{i_2})) \right)\\
            &=\frac{1}{n}Var \left(\sum_{1\leq i \leq \lfloor nt \rfloor} (h(Y_i)-\bE h(Y_i))\right),
        \end{align*}
        where the equality in the penultimate line holds because of the reproducing property of the RKHS and also the property of the RKHS mean embedding. Now, observe that by definition, the sequence $\{h(Y_k)\}$ is also strong mixing and its $\alpha$-mixing coefficients are dominated by the ones of $\{Y_k\}$. Therefore, by Lemma~\ref{lemma:dehling_3.2}, it holds that
        \[\bE\left(\sum_{1\leq i \leq \lfloor nt \rfloor} \langle W_{n,i},h \rangle\right)^2=\frac{1}{n}Var \left(\sum_{1\leq i \leq \lfloor nt \rfloor} (h(Y_i)-\bE h(Y_i))\right)\to t\,\sigma^2(h),\]
        where
        \[\sigma^2(h):=Var(h(Y_1))+2\sum_{v\geq 2}Cov(h(Y_1),h(Y_v)).\]
    \item [(d)] \begin{align*}
        \bE \left\|\sum_{i=1}^{n} W_{n,i}\right\|^2
        &=\sum_{i,j=1}^{n}\bE \langle W_{n,i}, W_{n,j}\rangle\allowdisplaybreaks\\
        &\leq \sum_{i,j=1}^{n}|\bE\langle W_{n,i}, W_{n,j}\rangle|\\
        &\leq \sum_{i,j=1}^{n} \frac{15}{n}\alpha(|i-j|)^{\frac{\delta}{\delta+2}}\left\|K(Y_i,\cdot)-\mu_P\right\|_{2+\delta} \left\|K(Y_j,\cdot)-\mu_P\right\|_{2+\delta} \\
        &\lesssim \left\|K(Y_1,\cdot)-\mu_P\right\|_{2+\delta}^2+ \frac{1}{n} \sum_{1\leq i\neq j \leq n}\alpha(|i-j|)^{\frac{\delta}{\delta+2}},
    \end{align*}
    where the penultimate inequality follows from Lemma~\ref{lemma:dehling_3.1}. Now, a simple calculation  further gives us
    \[\bE \left\|\sum_{i=1}^{n} W_{n,i}\right\|^2 \lesssim \left\|K(Y_1,\cdot)-\mu_P\right\|_{2+\delta}^2+2 \sum_{k=1}^{n-1}(1-k/n)\alpha(k)^{\frac{\delta}{\delta+2}}.\]
    Applying the summability of the series $\sum_{k\geq 1}\alpha(k)^{\frac{\delta}{\delta+2}}$ along with Dominated Convergence Theorem, we conclude that the RHS in the above inequality converges to 
    \[\left\|K(Y_1,\cdot)-\mu_P\right\|_{2+\delta}^2+2\,\sum_{k\geq 1}\alpha(k)^{\frac{\delta}{\delta+2}}<\infty.\]
    Therefore, \(\limsup_{n \to \infty} \bE \left\|\sum_{i=1}^{n} W_{n,i}\right\|^2<\infty.\)
    \item [(e)] Continuity of the kernel along with separability of the underlying space (e.g. the Euclidean space) implies separability of the RKHS $\cH(K)$. Thus, $\cH(K)$ indeed has a complete, countable orthonormal basis $\{e_l\}$. Moreover, it's not hard to see that for the NED sequence $\{\langle W_{n,i},e_l\rangle\}$ where $l$ is fixed, one can choose \[c^2_{n,i,l}=\max\{0,\left\|\langle W_{n,i},e_l\rangle\right\|_2^2\}=\frac{1}{n}\bE\left\|e_l(Y_1)-\bE e_l(Y_1)\right\|^2;\] 
    see \cite{wooldridge1988some}. Now, for fixed $n,i,k$, 
    \begin{align*}
        \sum_{l> k} c^2_{n,i,l}&=\frac{1}{n}\sum_{l> k}\bE\left\|e_l(Y_1)-\bE e_l(Y_1)\right\|^2\\
        &=\frac{1}{n}\,\sum_{l> k}\bE\langle K(Y_1,\cdot)-\mu_P, e_l\rangle^2.
    \end{align*}
    Therefore, by Tonelli's Theorem,
    \[\limsup_{n \to \infty}\sum_{l>k}\sum_{1\leq i \leq n}c^2_{n,i,l}=\limsup_{n \to \infty}\sum_{1\leq i \leq n}\sum_{l>k}c_{n,i,l}^2= \sum_{l> k}\bE\langle K(Y_1,\cdot)-\mu_P, e_l\rangle^2.\]
    Now,
    \[\sum_{l\geq 1}\bE\langle K(Y_1,\cdot)-\mu_P, e_l\rangle^2=\bE\sum_{l\geq 1}\langle K(Y_1,\cdot)-\mu_P, e_l\rangle^2=\bE \left\|K(Y_1,\cdot)-\mu_P\right\|^2<\infty,\]
    where the first equality follows from Monotone Convergence Theorem and the final inequality follows from Parseval's identity. Since tail of a summable series converges to $0$, it follows that
    \[\lim_{k \to \infty} \sum_{l> k}\bE\langle K(Y_1,\cdot)-\mu_P, e_l\rangle^2=0.\]
\end{enumerate}
Thus, all the conditions of Lemma~\ref{lemma:fclt_chen_white} are satisfied and thus $M_n^*\leadsto W$ in $C_{\mathcal{H(K)}}[0,1]$, where $W$ is a Brownian motion defined therein. It remains to prove that $M_n\leadsto W$ in $D_{\mathcal{H(K)}}[0,1]$. In the light of Lemma~\ref{lemma:asymp_negligble_process}, it suffices to prove that 
\[\left\|M_n^*-M_n\right\|_{\infty}\overset{\bP}{\to}0.\]
To this end, let us recall that
\[M_n^*(t)=M_n(t)+(nt-\lfloor nt\rfloor)W_{n,\lfloor nt\rfloor+1}.\]
Thus,
\begin{align*}
    \left\|M_n^*-M_n\right\|_{\infty}&=\sup_{t \in [0,1]}\left\|M^*_n(t)-M_n(t)\right\|\leq \max_{1\leq i \leq n}\left\|W_{n,i}\right\|.
\end{align*}
Thus, for any fixed $\epsilon>0$,
\begin{align*}
    \bP( \max_{1\leq i \leq n}\left\|W_{n,i}\right\|>\epsilon)&\leq \sum_{i=1}^n \bP(\left\|W_{n,i}\right\|>\epsilon)\\
    &=n\bP(\left\|W_{n,1}\right\|^{2+\delta}>\epsilon^{2+\delta})\\
    &\leq n\cdot \frac{\bE \left\|K(Y_1,\cdot)-\mu_P\right\|^{2+\delta}}{n^{1+\delta/2}\epsilon^{2+\delta}}\to 0,
\end{align*}
where the first inequality holds by union bound and the last one follows from Markov inequality. Thus, the proof is complete.

        \subsection{Proof of Proposition 2.}
        It suffices to verify the conditions of Lemma~\ref{lemma:fclt_chen_white}. The proof is almost similar to the one of Proposition 1. and we skip the details for conciseness.
        
        \subsection{Proof of Proposition 3.}
        Let us define
$$\Tilde{E}_k:=\begin{pmatrix}
    (K_1(X_{k+m},.)-\mu_1)\otimes (L_{2m+1}((Y_{k},Y_{k+1},\cdots,Y_{k+m},\cdots, Y_{k+2m}),.)-\nu_{2})\\
    (L_1(Y_{k+m},.)-\nu_1)\otimes (K_{2m+1}((X_{k},X_{k+1},\cdots,X_{k+m},\cdots, X_{k+2m}),.)-\mu_{2})
\end{pmatrix}.$$
 Lemma~\ref{lemma:HS_norm_finite} implies that under Assumption 4, $\bE \left\|\Tilde{E}_k\right\|^{2+\delta}<\infty$. Moreover, it is easy to see that $\{\Tilde{E}_i\}_{i \in \mathbb{Z}}$ is $L_2$-NED on the sequence $\{(X_i, Y_i)\}_{i \in \mathbb{Z}}$ with approximation constants $\mu_k=0$ for all $k\geq m$. Therefore, by similar arguments as in Proposition 1., we can conclude that the following FCLT holds:
$$\left\{\frac{1}{\sqrt{n'}}\sum_{i=1}^{\lfloor n'r\rfloor}\left(\Tilde{E}_i-\begin{pmatrix}
    C_{X_0,Y_{-m:m}}\\
    C_{Y_0,X_{-m:m}}
\end{pmatrix}\right)\right\}_{r \in [0,1]}\leadsto W \text{ in } D_{\mathcal{H}(K,L;m)}[0,1].$$
Therefore, it suffices to prove that $\sup_{r \in [0,1]}\left\| \Tilde{D}_{\lfloor n'r \rfloor, n}\right\|_{HS}\overset{\bP}{\to} 0$ as $n\to \infty,$ where
\begin{align*}
    \Tilde{D}_{j,n}:&=\frac{1}{\sqrt{n'}}\sum_{i=1}^{j}\left(\Tilde{E}_i-\begin{pmatrix}
    C_{X_0,Y_{-m:m}}\\
    C_{Y_0,X_{-m:m}}
\end{pmatrix}\right)-\frac{1}{\sqrt{n'}}\sum_{i=1}^{j}\left(E_i-\begin{pmatrix}
    C_{X_0,Y_{-m:m}}\\
    C_{Y_0,X_{-m:m}}
\end{pmatrix}\right)\\
    &=\frac{1}{\sqrt{n'}}\sum_{i=1}^{j}(\Tilde{E}_i-E^{(1)}_i),\;\;j=1,2,\cdots n'.
\end{align*}
Let us define $A_i:=K_1(X_{i+m},.)$, $B_i:=L_{2m+1}(Y_{i:i+2m},.)$, $C_i:=L_1(Y_{i+m},.)$ and \newline \noindent $D_i:=K_{2m+1}(X_{i:i+2m},.)$. Let us also define $\Bar{A}_j=\frac{1}{j}\sum_{i=1}^j A_i$, $\Bar{B}_j=\frac{1}{j}\sum_{i=1}^j B_i$, $\Bar{C}_j=\frac{1}{j}\sum_{i=1}^j C_i$ and $\Bar{D}_j=\frac{1}{j}\sum_{i=1}^j D_i$. Then some routine calculations give us
\begin{align*}
    \Tilde{D}_{j,n}&=\frac{j}{\sqrt{n'}}\left(\Bar{A}_j\otimes (\Bar{B}_{n'}-\bE B_1)+ (\Bar{A}_{n'}-\bE A_1)\otimes \Bar{B}_j+ \bE A_1 \otimes \bE B_1 -\Bar{A}_{n'}\otimes \Bar{B}_{n'}\right)\\
    &+\frac{j}{\sqrt{n'}}\left(\Bar{C}_j\otimes (\Bar{D}_{n'}-\bE D_1)+ (\Bar{C}_{n'}-\bE D_1)\otimes \Bar{C}_j+ \bE C_1 \otimes \bE D_1 -\Bar{C}_{n'}\otimes \Bar{D}_{n'}\right)\allowdisplaybreaks\\
    &=\frac{j}{\sqrt{n'}}\Bigg((\Bar{A}_j-\bE A_1)\otimes (\Bar{B}_{n'}-\bE B_1)+(\Bar{A}_{n'}-\bE A_1)\otimes (\Bar{B}_{j}-\bE B_1)\\
    &-(\Bar{A}_{n'}-\bE A_1)\otimes (\Bar{B}_{n'}-\bE B_1)\Bigg)\\
    &+\frac{j}{\sqrt{n'}}\Bigg((\Bar{C}_j-\bE A_1)\otimes (\Bar{D}_{n'}-\bE D_1)+(\Bar{C}_{n'}-\bE C_1)\otimes (\Bar{D}_{j}-\bE D_1)\\
    &-(\Bar{C}_{n'}-\bE C_1)\otimes (\Bar{D}_{n'}-\bE D_1)\Bigg).
\end{align*}
Now, by Theorem 1 of \cite{sharipov2016sequential}, 
$$\max_{j=1,2,\cdots,n'} \frac{j}{\sqrt{n'}}\left\|\Bar{A}_j-\bE A_1\right\|_{\mathcal{H}(K_1)},\;\;\max_{j=1,2,\cdots,n'} \frac{j}{\sqrt{n'}}\left\|\Bar{B}_j-\bE B_1\right\|_{\mathcal{H}(L_{2m+1})},$$
$$\max_{j=1,2,\cdots,n'} \frac{j}{\sqrt{n'}}\left\|\Bar{C}_j-\bE C_1\right\|_{\mathcal{H}(L_1)},\;\;\max_{j=1,2,\cdots,n'} \frac{j}{\sqrt{n'}}\left\|\Bar{D}_j-\bE D_1\right\|_{\mathcal{H}(K_{2m+1})}$$
converge weakly to the maximum norm of four Hilbert-valued Brownian motions, respectively, and thus they are stochastically bounded. Moreover, by the same theorem $\Bar{A}_{n'}-\bE A_1$, $\Bar{B}_{n'}-\bE B_1$, $\Bar{C}_{n'}-\bE C_1$ and $\Bar{D}_{n'}-\bE D_1$ converge in probability to the zero elements of their corresponding RKHS. Therefore,
\begin{align*}
    &\max_{j=1,2,\cdots,n'}\frac{j}{\sqrt{n'}}\left\|(\Bar{A}_j-\bE A_1)\otimes (\Bar{B}_{n'}-\bE B_1)\right\|_{HS}\\
    &=\max_{j=1,2,\cdots,n'}\frac{j}{\sqrt{n'}}\left\|(\Bar{A}_j-\bE A_1)\right\|_{\mathcal{H}(K_1)}\left\|\Bar{B}_{n'}-\bE B_1\right\|_{\mathcal{H}(L_{2m+1})}\\
    &\overset{\bP}{\to}0,
\end{align*}
where the last line follows from the fact that a norm is a continuous function and thus preserves convergence in probability. Similar argument can be applied to other summands of $\Tilde{D}_{m,n'}$ to complete the proof.
\subsection{Proof of Theorem 3.5.}
Since $\begin{pmatrix}
    C_{X_0,Y_{-m:m}}\\
    C_{Y_0,X_{-m:m}}
\end{pmatrix}=\mathbf{0}$ under $H_0$, the theorem can be proved in a similar fashion of the proofs of Theorem 3.2. and Theorem 3.3.. We omit the details for conciseness. 
\refstepcounter{section}
\section*{Appendix \thesection: Auxilliary Lemmas}\label{sec:appne}
\begin{lemma}\label{lemma:distn_inner.prod_BM}
     Suppose $W$ is an $H$-valued Brownian motion with associated covariance operator $S$. Then, for a fixed $u \in H$, 
     $$\frac{1}{\sqrt{\langle Su, u \rangle}}\times\left\{\langle u, W(t)\rangle\right\}_{t \in [0,1]}\overset{D}{=} \left\{B(t)\right\}_{t \in [0,1]},$$
     where $\left\{B(t)\right\}_{t \in [0,1]}$ is the standard Brownian motion on $[0,1].$
 \end{lemma}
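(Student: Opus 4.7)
The plan is to verify directly that the rescaled process
\[
\tilde B(t) \;:=\; \frac{1}{\sqrt{\langle Su, u\rangle}}\,\langle u,\,W(t)\rangle, \qquad t\in[0,1],
\]
satisfies the defining properties of a standard Brownian motion on $[0,1]$, and then invoke the fact that those properties characterize the law of $B$ on $C[0,1]$. The four properties to check are: (i) $\tilde B(0)=0$ a.s., (ii) sample paths lie in $C[0,1]$ a.s., (iii) $\tilde B$ has independent increments over disjoint intervals, and (iv) for $0\le s<t\le 1$, the increment $\tilde B(t)-\tilde B(s)$ is Gaussian with mean $0$ and variance $t-s$.

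Properties (i) and (ii) will be essentially immediate from Definition~\ref{defn:brownian_hilbert}: since $W(0)=0$ in $\mathcal{H}$ a.s., applying the bounded linear functional $\langle u,\cdot\rangle$ gives $\tilde B(0)=0$; and since $W\in C_{\mathcal{H}}[0,1]$ a.s.\ and $\langle u,\cdot\rangle$ is continuous on $\mathcal{H}$, the composition $t\mapsto \langle u, W(t)\rangle$ is continuous a.s. Property (iii) follows because for $0\le t_1<\cdots<t_k\le 1$, the $\mathcal{H}$-valued increments $W(t_{i+1})-W(t_i)$ are mutually independent by Definition~\ref{defn:brownian_hilbert}(iii), so their images under the deterministic functional $\langle u,\cdot\rangle$ are independent real random variables, and rescaling by the constant $1/\sqrt{\langle Su,u\rangle}$ preserves independence.

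The key step is property (iv). For $0\le s<t\le 1$, Definition~\ref{defn:brownian_hilbert}(iv) says $W(t)-W(s)$ is an $\mathcal{H}$-valued Gaussian element with mean zero and covariance operator $(t-s)\,S$. By Definition~\ref{defn:gaussian_hilbert}, this implies $\langle u, W(t)-W(s)\rangle$ is a real-valued Gaussian random variable (assuming $u\neq 0$; if $u=0$ the statement is vacuous, so we may assume $\langle Su,u\rangle>0$ for nondegeneracy). Its mean is zero by linearity, and its variance is computed via the defining relation of the covariance operator in Definition~\ref{defn:mean_cov_Hilbert_space}:
\[
\operatorname{Var}\!\bigl(\langle u, W(t)-W(s)\rangle\bigr) \;=\; \langle (t-s)\,S\,u,\,u\rangle \;=\; (t-s)\,\langle Su,u\rangle.
\]
Dividing by $\sqrt{\langle Su,u\rangle}$ rescales the variance precisely to $t-s$, yielding $\tilde B(t)-\tilde B(s)\sim \mathcal{N}(0,t-s)$.

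The main subtlety, and the only point requiring any care, is confirming the reduction from the $\mathcal{H}$-valued Gaussian structure to a one-dimensional Gaussian law with the correct variance; this is where Definition~\ref{defn:gaussian_hilbert} and the covariance-operator identity in Definition~\ref{defn:mean_cov_Hilbert_space} are used in tandem. Once (i)–(iv) are established, standard Kolmogorov extension (or Wiener's characterization) identifies the law of $\tilde B$ on $C[0,1]$ with that of a standard Brownian motion, finishing the proof. The case $\langle Su,u\rangle = 0$, if permitted, would correspond to a degenerate $u$ in the null space of $S$ and the statement would then require the convention $0/0 = 0$ or the implicit assumption $\langle Su,u\rangle>0$; this edge case can be flagged in a short remark.
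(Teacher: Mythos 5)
Your proof is correct and follows essentially the same route as the paper's: both verify directly that $\tilde B(t)=\langle u, W(t)\rangle/\sqrt{\langle Su,u\rangle}$ satisfies the four defining properties of a standard Brownian motion, with the variance computation $\operatorname{Var}(\langle u, W(t+s)-W(t)\rangle)=s\langle Su,u\rangle$ obtained from the covariance-operator identity exactly as in the paper. Your added remark on the degenerate case $\langle Su,u\rangle=0$ is a reasonable clarification but not a substantive departure.
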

  \begin{proof}[Proof of Lemma~\ref{lemma:distn_inner.prod_BM}]
     For notational simplicity, let us define $\tilde{B}(t):=\frac{1}{\sqrt{\langle Su, u \rangle}}\times \langle u, W(t)\rangle,$ $t \in [0,1].$ 
     \begin{itemize}
         \item [(i)] Clearly, $\tilde{B}(0)=0$ as $W(0)=0.$
         \item[(ii)] Applying the Cauchy-Schwarz inequality, we get  
         \begin{align*}
             \left|\tilde{B}(t_2)-\tilde{B}(t_1)\right|&=\frac{1}{\sqrt{\langle Su, u \rangle}} \left|\langle u, W(t_2)-W(t_1)\right|\\
             &\leq \frac{\left\|u\right\|}{\sqrt{\langle Su, u \rangle}} \left\| W(t_2)-W(t_1)\right\|.\;
         \end{align*}
         
         Since $W\in C_H[0,1]$ almost surely, the above inequality implies $\left\{\tilde{B}(t)\right\}_{t \in [0,1]}\in C[0,1]$ almost surely.
         \item [(iii)] Since $W$ has independent increment property, it's very clear that $\left\{\tilde{B}(t)\right\}_{t \in [0,1]}$ will have the same property as well.
         \item [(iv)] For all $0\leq t<t+s\leq 1$, $W(t+s)-W(t)$ is Gaussian with mean $0$ and covariance operator $sS$, where $S$ is the covariance operator of $W(1)$. Therefore, from Definition \ref{defn:gaussian_hilbert}, it follows that 
         $$\tilde{B}(t+s)-\tilde{B}(t)=\frac{1}{\sqrt{\langle Su, u \rangle}}\cdot \langle u, W(t+s)-W(t)\rangle$$
         has normal distribution on $\mathbb{R}$. Its mean will be
         $$\frac{1}{\sqrt{\langle Su, u \rangle}}\cdot \bE\langle u, W(t+s)-W(t)\rangle=\frac{1}{\sqrt{\langle Su, u \rangle}}\cdot \langle u, \bE(W(t+s)-W(t))\rangle=0.$$
         Thus, the variance of $\tilde{B}(t+s)-\tilde{B}(t)$ will be given by
         \begin{align*}
             \bE(\tilde{B}(t+s)-\tilde{B}(t))^2&=\frac{1}{\langle Su, u \rangle}\cdot \bE\langle u, W(t+s)-W(t)\rangle^2\\
             &=\frac{\langle sSu, u \rangle}{\langle Su, u \rangle}=s \frac{\langle Su, u \rangle}{\langle Su, u \rangle}=s,
         \end{align*}
         where the second and third equalities follow from the definition of covariance operator and homogeneity of inner product.
     \end{itemize}
     Combining $(i)-(iv)$, we conclude the proof.
 \end{proof}

 \begin{lemma}\label{lemma:CPD_Power}
     Under the alternative hypothesis $H'_{1n}$ defined in Section 3.2., it holds that for $k=1, \cdots, N-1$,
\begin{align*}
N^{1 / 2} T_n(k)= & \left\langle\hat{\nu}_1-\hat{\nu}_n-\delta_n, \sum_{j=1}^k Z'_{j+m_1}-\frac{k}{N} \sum_{j=1}^N Z'_{j+m_1}\right\rangle \\
& -\frac{\left(\left(k_0-m_1\right) \wedge k\right)\left(\left(N-k_0+m_1\right) \wedge(N-k)\right)}{N}\left\langle\hat{\nu}_1-\hat{\nu}_n-\delta_n, \delta_n\right\rangle \allowdisplaybreaks\\
N^2 V_n(k)= & \sum_{t=1}^k\left(\left\langle\hat{\nu}_1-\hat{\nu}_n-\delta_n, \sum_{j=1}^t Z'_{j+m_1}-\frac{t}{k} \sum_{j=1}^k Z'_{j+m_1}\right\rangle\right. \\
& \left.-\frac{\left[\left(\left(k_0-m_1\right) \wedge t\right)\left(\left(k-k_0+m_1\right) \wedge(k-t)\right)\right] \vee 0}{k}\left\langle\hat{\nu}_1-\hat{\nu}_n-\delta_n, \delta_n\right\rangle\right)^2 \\
+ & \sum_{t=k+1}^N\Bigg(\left\langle\hat{\nu}_1-\hat{\nu}_n-\delta_n, \sum_{j=t}^N Z'_{j+m_1}-\frac{N-t+1}{N-k} \sum_{j=k+1}^N Z'_{j+m_1}\right\rangle \\
& +\frac{\left[\left(\left(k_0-m_1-k\right) \wedge(t+1-k)\right)\left(\left(N-k_0+m_1\right) \wedge(N-t+1)\right)\right] \vee 0}{N-k}\\
&\times\left\langle\hat{\nu}_1-\hat{\nu}_n-\delta_n, \delta_n\right\rangle\Bigg)^2 .
\end{align*}

\end{lemma}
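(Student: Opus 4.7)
The plan is to rewrite every $Z_t$ in terms of the centered residuals $Z'_{m_1+t}$ and an explicit mean shift, and then collect the deterministic ``signal'' contributions into the closed-form bias coefficients stated in the lemma.

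The starting observation is that, for the change-point location under $H'_{1n}$ (where the change is assumed to lie in the test window, i.e.\ $m_1<k_0\le n-m_1$), all indices in $\{1,\dots,m_1\}$ are pre-change and all indices in $\{n-m_1+1,\dots,n\}$ are post-change. Combining this with the definitions of $\hat\nu_1,\hat\nu_n$ given in the proof of Theorem~3.4, one obtains the identity
\[
\hat\mu_1-\hat\mu_n \;=\; \hat\nu_1-\hat\nu_n-\delta_n.
\]
Using the decomposition $K(Y_{m_1+t},\cdot)=Z'_{m_1+t}+\mu_{P_{m_1+t}}$, each summand becomes
\[
Z_t \;=\; \bigl\langle\hat\nu_1-\hat\nu_n-\delta_n,\,Z'_{m_1+t}\bigr\rangle+\bigl\langle\hat\nu_1-\hat\nu_n-\delta_n,\,\mu_{P_{m_1+t}}\bigr\rangle,
\]
and the whole computation reduces to a purely deterministic bookkeeping of the second (mean) term.

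For $N^{1/2}T_n(k)=\sum_{t=1}^k Z_t-(k/N)\sum_{t=1}^N Z_t$, the random parts collect by linearity of the inner product into $\langle\hat\nu_1-\hat\nu_n-\delta_n,\sum_{j=1}^k Z'_{j+m_1}-(k/N)\sum_{j=1}^N Z'_{j+m_1}\rangle$. For the mean part I would compute
\[
\sum_{t=1}^k\mu_{P_{m_1+t}}-\tfrac{k}{N}\sum_{t=1}^N\mu_{P_{m_1+t}}
\]
by the obvious dichotomy $k\le k_0-m_1$ versus $k>k_0-m_1$. Writing the answer in terms of $\mu_{P_n}=\mu_{P_1}+\delta_n$, the $\mu_{P_1}$ piece cancels and the two cases collapse into the single coefficient $-\big((k_0-m_1)\wedge k\big)\big((N-k_0+m_1)\wedge(N-k)\big)/N$ multiplying $\delta_n$, which produces the stated form of $N^{1/2}T_n(k)$.

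For $N^2V_n(k)$ I would apply exactly the same manipulation separately to each inner expression $Z_{1:t}-(t/k)Z_{1:k}$ (for $t=1,\dots,k$) and $Z_{t:N}-((N-t+1)/(N-k))Z_{k+1:N}$ (for $t=k+1,\dots,N$). The random contributions reassemble into the inner products displayed in the lemma. For the bias contributions one now does a two-layer case split: whether the change point $k_0-m_1$ falls in the sub-window $[1,k]$ or not, and, within the relevant regime, whether $t$ itself lies on the pre- or the post-change side of $k_0-m_1$. In each of the at-most-four resulting regimes the expression simplifies to a scalar multiple of $\delta_n$, and the simultaneous use of $(k_0-m_1-k)\wedge\cdots$ with the outer truncation $[\,\cdot\,]\vee 0$ is precisely what unifies the regimes: the truncation forces the bias to vanish whenever the change point does not lie strictly inside the sub-window under consideration, which is exactly when no centering error from the $\mu_{P_i}$'s is generated.

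The main obstacle is not probabilistic but combinatorial: verifying that the piecewise-affine functions of $t$ produced by the case analysis really do agree with the single closed-form coefficients $\big((k_0-m_1)\wedge t\big)\big((k-k_0+m_1)\wedge(k-t)\big)/k$ and its reverse-CUSUM analogue after the $\vee 0$ truncation, across all regimes of $t$ relative to $k$ and of $k$ relative to $k_0-m_1$. Once this algebraic identity is confirmed by direct substitution in each case, the lemma follows by collecting terms; no limit theorem or weak-dependence machinery is required at this stage.
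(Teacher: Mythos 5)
Your strategy is in substance the same as the paper's: the paper's own proof of this lemma is a one-line deferral to Proposition 4.1 of \cite{gao2023dimension} (``replace canonical inner products by Hilbert space inner products''), and the cited proof is exactly the deterministic CUSUM bookkeeping you outline. Your key identity $\hat{\mu}_1-\hat{\mu}_n=\hat{\nu}_1-\hat{\nu}_n-\delta_n$ (valid since $m_1<k_0\le n-m_1$ under $H'_{1n}$), the decomposition $Z_t=\langle \hat{\nu}_1-\hat{\nu}_n-\delta_n, Z'_{m_1+t}\rangle+\langle \hat{\nu}_1-\hat{\nu}_n-\delta_n,\mu_{P_{m_1+t}}\rangle$, the cancellation of the common $\langle\cdot,\mu_{P_1}\rangle$ term in every centered CUSUM, and the forward-window coefficient $\bigl[((k_0-m_1)\wedge t)((k-k_0+m_1)\wedge(k-t))\bigr]\vee 0$ all check out; no probabilistic input is needed, as you say.

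One caution about the single step you defer (``once this algebraic identity is confirmed by direct substitution''): carrying it out for the backward window does \emph{not} reproduce the displayed statement verbatim, and this is precisely where the content of the lemma lies. Writing $a=k_0-m_1$, the number of post-change indices in $\{t,\dots,N\}$ is $(N-a)\wedge(N-t+1)$, so for $k+1\le t\le N$ the bias coefficient of $\langle\hat{\nu}_1-\hat{\nu}_n-\delta_n,\delta_n\rangle$ in $Z_{t:N}-\frac{N-t+1}{N-k}Z_{k+1:N}$ equals
\[
(N-a)\wedge(N-t+1)-\frac{N-t+1}{N-k}\bigl[(N-a)\wedge(N-k)\bigr]
=\frac{\bigl[\bigl((a-k)\wedge(t-1-k)\bigr)\bigl((N-a)\wedge(N-t+1)\bigr)\bigr]\vee 0}{N-k},
\]
i.e.\ with $t-1-k$ in place of the $t+1-k$ printed in the lemma; this is also what time-reversal of the forward coefficient gives, and it correctly vanishes at $t=k+1$, where the backward CUSUM increment is identically zero, whereas the printed formula does not. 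So your method is sound and would in fact surface this discrepancy (which appears to be a typo in the statement and is harmless for the subsequent $N\to\infty$ limits in Theorem~\ref{thm:CPD_asymptotic_power}); but as written, your proposal asserts agreement with the stated closed form without performing the check, and a complete proof must either carry out that case-by-case substitution or note the corrected coefficient.
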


\begin{proof}[Proof of Lemma~\ref{lemma:CPD_Power}]
    The proof is almost the same as the proof of Proposition 4.1. of \cite{gao2023dimension}, except that we replace all canonical inner products of the Euclidean space with Hilbert space inner products.
\end{proof}

\begin{lemma}\label{lemma:HS_norm_finite}
    Let $H_1$, $H_2$, $H_3$ and $H_4$ be separable Hilbert spaces and $X_i\in H_i$ for $i=1,\cdots,4$. Suppose $\bE \left\|X_i\right\|^{p}<\infty$ for $i=1,\cdots,4$ and some $p\geq 2$. Then, $\bE \left\|Y\right\|^{p/2}<\infty$, where
    $$Y=\begin{pmatrix}
       (X_1-\bE X_1)\otimes  (X_2-\bE X_2)\\
       (X_3-\bE X_3)\otimes  (X_4-\bE X_4)
    \end{pmatrix}.$$
\end{lemma}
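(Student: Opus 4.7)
\medskip
\noindent\textbf{Proof plan.} The key observation is that the norm $\|\cdot\|$ on the product Hilbert space $\mathcal{H}(K,L;m)$-type space used in the statement is induced by the block inner product defined in Section~\ref{sec:method_extensions}, so
\[
\|Y\|^2 \;=\; \bigl\|(X_1-\bE X_1)\otimes(X_2-\bE X_2)\bigr\|_{HS}^{2} \;+\; \bigl\|(X_3-\bE X_3)\otimes(X_4-\bE X_4)\bigr\|_{HS}^{2}.
\]
Combining this with the isometric identity $\|f\otimes g\|_{HS}=\|f\|\cdot\|g\|$ (which is immediate from Definition~\ref{defn:HS_norm}), I obtain
\[
\|Y\| \;=\; \Bigl(\|X_1-\bE X_1\|^{2}\|X_2-\bE X_2\|^{2} + \|X_3-\bE X_3\|^{2}\|X_4-\bE X_4\|^{2}\Bigr)^{1/2}.
\]

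\noindent The plan is then to raise to the $p/2$-th power and apply the elementary inequality $(a+b)^{q}\le 2^{(q-1)_+}(a^{q}+b^{q})$ with $q=p/4$, giving
\[
\|Y\|^{p/2} \;\le\; C_p\Bigl(\|X_1-\bE X_1\|^{p/2}\|X_2-\bE X_2\|^{p/2} + \|X_3-\bE X_3\|^{p/2}\|X_4-\bE X_4\|^{p/2}\Bigr)
\]
for some constant $C_p$ depending only on $p$. Taking expectations and applying the Cauchy--Schwarz inequality term by term reduces the problem to bounding $\bE\|X_i-\bE X_i\|^{p}$ for each $i$. By the triangle inequality in $H_i$ and Jensen's inequality, $\|X_i-\bE X_i\|\le\|X_i\|+\bE\|X_i\|$, hence $\bE\|X_i-\bE X_i\|^{p}\le 2^{p-1}(\bE\|X_i\|^{p}+(\bE\|X_i\|)^{p})\le 2^{p}\bE\|X_i\|^{p}<\infty$ by hypothesis. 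Assembling these pieces yields $\bE\|Y\|^{p/2}<\infty$.

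\noindent\textbf{Main obstacle.} There is no deep difficulty here; the argument is a chain of elementary inequalities. The one mild subtlety is ensuring that the exponents match so that only $p$-th moments of the $X_i$ are ever required: after using $\|f\otimes g\|_{HS}=\|f\|\|g\|$, each factor $\|X_i-\bE X_i\|$ appears with exponent $p/2$, and the Cauchy--Schwarz step exactly doubles this exponent to $p$, matching the assumed moment condition. Had the proof attempted to bound $\bE\|f\otimes g\|_{HS}^{p/2}$ directly without first splitting via Cauchy--Schwarz, one would need to control cross-moments; the two-step approach avoids this.
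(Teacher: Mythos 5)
Your argument is correct and follows essentially the same route as the paper's proof: the block-norm identity $\left\|Y\right\|^2=\left\|X_1-\bE X_1\right\|^2\left\|X_2-\bE X_2\right\|^2+\left\|X_3-\bE X_3\right\|^2\left\|X_4-\bE X_4\right\|^2$, a power ($C_r$-type) inequality, Cauchy--Schwarz, and then the triangle plus Jensen inequalities to reduce to $\bE\left\|X_i\right\|^p<\infty$. The only cosmetic difference is that you apply the power inequality directly to $\left\|Y\right\|^2$ with exponent $p/4$, whereas the paper first uses $\sqrt{a^2+b^2}\le a+b$ and then the $C_r$-inequality at exponent $p/2$; the two are interchangeable.
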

\begin{proof}[Proof of Lemma~\ref{lemma:HS_norm_finite}]
From the definition of the inner product of the Hilbert space  $\begin{pmatrix}
       H_1\otimes  H_2\\
       H_3\otimes  H_4
    \end{pmatrix}$ that we mentioned in Section 2.4., it follows that
\begin{align*}
    \left\|Y\right\|^2= \left\|X_1-\bE X_1\right\|^2\cdot \left\|X_2-\bE X_2\right\|^2+\left\|X_3-\bE X_3\right\|^2\cdot \left\|X_4-\bE X_4\right\|^2.
 \end{align*}
 Using the inequality $\sqrt{a^2+b^2}\leq a+b$ for $a,b\geq 0$, we have 
 \begin{align*}
     \left\|Y\right\|\leq \left\|X_1-\bE X_1\right\|\cdot \left\|X_2-\bE X_2\right\|+\left\|X_3-\bE X_3\right\|\cdot \left\|X_4-\bE X_4\right\|.
 \end{align*}
    Therefore,
    \begin{align*}
        \bE \left\|Y\right\|^{p/2}&=\bE\left(\left\|X_1-\bE X_1\right\|\cdot \left\|X_2-\bE X_2\right\|+\left\|X_3-\bE X_3\right\|\cdot \left\|X_4-\bE X_4\right\|\right)^{p/2}\\
        &\leq 2^{p/2-1}\left(\bE\left\|X_1-\bE X_1\right\|^{p/2}\cdot \left\|X_2-\bE X_2\right\|^{p/2}+\bE\left\|X_3-\bE X_3\right\|^{p/2}\cdot \left\|X_4-\bE X_4\right\|^{p/2}\right),
    \end{align*}
    where the inequality follows from $C_r$-inequality. It thus suffices to show that $\bE\left\|X_1-\bE X_1\right\|^{p/2}\cdot \left\|X_2-\bE X_2\right\|^{p/2}<\infty.$ Indeed,
    \begin{align*}
         &\bE (\left\|X_1-\bE X_1\right\|^{p/2}\cdot \left\|X_2-\bE X_2\right\|^{p/2})\\
         &\leq \left(\bE \left\|X_1-\bE X_1\right\|^{p}\cdot \left\|X_2-\bE X_2\right\|^{p}\right)^{1/2}\allowdisplaybreaks\\
         &\leq \left(2^{p-1}(\bE \left\|X_1\right\|^p+\left\|\bE X_1\right\|^p)\cdot 2^{p-1}(\bE \left\|X_2\right\|^p+\left\|\bE X_2\right\|^p)\right)^{1/2}\\
         &\leq \left(2^p \bE \left\|X_1\right\|^p \cdot 2^p \bE \left\|X_2\right\|^p\right)^{1/2}\\
         &=2^p \{\bE\left\|X_1\right\|^p \cdot \bE \left\|X_2\right\|^p\}^{1/2}<\infty,
    \end{align*}
    where the second line follows from Cauchy-Schwartz inequality, the third line follows from $C_r$ inequality and the penultimate inequality follows from Jensen inequality.
\end{proof}

\begin{lemma}\label{lemma:dehling_3.1}[Lemma 3.1. of \cite{dehling1983limit}]
  Let $\mathcal{F}$ and $\mathcal{G}$ be two $\sigma$-fields. Define
$$
\alpha(\mathcal{F},\mathcal{G})=\sup |P(A \cap B)-P(A) P(B)|
$$
the supremum being extended over all $A \in \mathcal{F}$ and $B \in \mathcal{G}$. Let $\xi$ and $\eta$ be random variables with values in a separable Hilbert space $H$ measurable  with respect to $\mathcal{F}$ and $\mathcal{G}$, respectively. If $\xi$ and $\eta$ are essentially bounded then
$$
|\bE\langle \xi, \eta\rangle-\langle \bE \xi,\bE \eta\rangle| \leqq 10 \alpha(\mathfrak{F}, \mathfrak{G})\|\xi\|_{\infty}\|\eta\|_{\infty} .
$$
Here $\|\cdot\|_{\infty}$ denotes the essential supremum with respect to $H$. Moreover, let $r, s, t>1$ with $r^{-1}+s^{-1}+t^{-1}=1$. If $\xi$ and $\eta$ have finite $r$-th and $s$-th moments respectively then
$$
|\bE\langle \xi, \eta\rangle-\langle \bE \xi,\bE \eta\rangle| \leqq 15 \alpha^{1 / t}(\mathcal{F}, \mathcal{G})\|\xi\|_r\|\eta\|_{s} .
$$  
\end{lemma}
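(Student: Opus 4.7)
This lemma is the Hilbert-space analog of the classical covariance inequality of Davydov (1968) / Ibragimov (1962) for strongly mixing scalar random variables, so the plan is to reduce to that scalar inequality. Note first the identity $\bE\langle\xi,\eta\rangle-\langle\bE\xi,\bE\eta\rangle=\bE\langle\xi-\bE\xi,\eta-\bE\eta\rangle$, so WLOG we may assume $\bE\xi=\bE\eta=0$. The strategy is to establish the bounded case (the ``$10\alpha$'' inequality) by a direct approximation argument, and then bootstrap to the general $L^r$--$L^s$ case (the ``$15\alpha^{1/t}$'' inequality) by truncation at appropriately chosen levels.

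For the bounded case, first approximate $\xi,\eta$ in $L^\infty(\Omega;H)$ by $H$-valued simple functions $\xi_N=\sum_{i=1}^N h_i \mathbbm 1_{A_i}$ with $A_i\in\mathcal F$ and $h_i\in H$, and similarly $\eta_M=\sum_{j=1}^M g_j \mathbbm 1_{B_j}$ with $B_j\in\mathcal G$; such approximations exist because $H$ is separable and $\xi,\eta$ are essentially bounded. For the simple functions one computes
\[
\bE\langle\xi_N,\eta_M\rangle-\langle\bE\xi_N,\bE\eta_M\rangle=\sum_{i,j}\langle h_i,g_j\rangle\bigl(\bP(A_i\cap B_j)-\bP(A_i)\bP(B_j)\bigr).
\]
Fix signs $\varepsilon_{ij}=\mathrm{sign}\bigl(\bP(A_i\cap B_j)-\bP(A_i)\bP(B_j)\bigr)\mathrm{sign}\langle h_i,g_j\rangle$ and view the right-hand side as the covariance of two scalar, bounded, $\mathcal F$- and $\mathcal G$-measurable random variables constructed from the partitions and the weights $|\langle h_i,g_j\rangle|\le\|\xi_N\|_\infty\|\eta_M\|_\infty$. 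Applying the standard scalar strong-mixing covariance bound (Davydov's inequality in the $L^\infty$ case, which delivers the constant $4$) and carefully accounting for the sign handling yields the constant $10$. Passing to the limit $N,M\to\infty$ by dominated convergence, since $\|\xi_N\|\le\|\xi\|_\infty$ and $\xi_N\to\xi$ a.s.\ in norm, gives the first inequality.

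For the moment case, pick thresholds $M_\xi,M_\eta>0$ and truncate: set $\xi^{(M)}=\xi\mathbbm1_{\|\xi\|\le M_\xi}$, $\eta^{(M)}=\eta\mathbbm1_{\|\eta\|\le M_\eta}$, and decompose
\[
\bE\langle\xi,\eta\rangle=\bE\langle\xi^{(M)},\eta^{(M)}\rangle+\bE\langle\xi-\xi^{(M)},\eta\rangle+\bE\langle\xi^{(M)},\eta-\eta^{(M)}\rangle.
\]
To the first term apply the bounded case, producing a bound of order $\alpha\cdot M_\xi M_\eta$. To the tail terms apply the Cauchy--Schwarz inequality in $H$ followed by H\"older's inequality on $\Omega$ with exponents $r$ and $r/(r-1)$ (respectively $s,s/(s-1)$), together with Markov's inequality to bound $\bP(\|\xi\|>M_\xi)\le\|\xi\|_r^r/M_\xi^r$, which yields tail bounds of order $\|\xi\|_r^r M_\xi^{1-r}\|\eta\|_s$ and the symmetric quantity. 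Choosing $M_\xi=\|\xi\|_r\alpha^{-1/(rt)}\cdot\text{const}$ and $M_\eta=\|\eta\|_s\alpha^{-1/(st)}\cdot\text{const}$, which equates the three error terms via the relation $r^{-1}+s^{-1}+t^{-1}=1$, delivers the desired $\alpha^{1/t}\|\xi\|_r\|\eta\|_s$ bound. Tracking the constants through this optimization produces the stated factor $15$.

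The main obstacle is purely bookkeeping: one has to manage the constants carefully in the simple-function step (the jump from Davydov's $4$ to the stated $10$ comes from the two-sided sign handling and from passing from indicators of sets to signed mean-zero weights), and then the truncation optimization must be executed so that the three error contributions balance exactly under $r^{-1}+s^{-1}+t^{-1}=1$. No deep new idea is required beyond that; since the result is quoted verbatim from Dehling (1983, Lemma 3.1), the detailed constant-tracking would in practice be cited rather than reproduced.
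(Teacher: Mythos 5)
The paper does not actually prove this statement: it is quoted verbatim from Dehling (1983, Lemma 3.1) and used as an external auxiliary result (Appendix E), so there is no in-paper argument to compare against, and your closing remark that one would cite rather than reproduce the proof is exactly what the paper does.

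That said, your sketch contains a genuine gap in the bounded case. After the simple-function reduction you have
\[
\bE\langle\xi_N,\eta_M\rangle-\langle\bE\xi_N,\bE\eta_M\rangle=\sum_{i,j}\langle h_i,g_j\rangle\,d_{ij},
\qquad d_{ij}=\bP(A_i\cap B_j)-\bP(A_i)\bP(B_j),
\]
and you propose to absorb signs $\varepsilon_{ij}$ and read the right-hand side as the covariance of two scalar, $\mathcal F$- and $\mathcal G$-measurable variables. But a covariance of scalar simple variables $X=\sum_i a_i 1_{A_i}$, $Y=\sum_j b_j 1_{B_j}$ has coefficients of the product form $a_i b_j$; here the coefficient $\langle h_i,g_j\rangle$ (and a fortiori $\varepsilon_{ij}$, which depends on both indices) does not factor into a function of $i$ times a function of $j$, so the scalar Ibragimov/Davydov bound cannot be invoked as written. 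If one instead bounds crudely by $\|\xi\|_\infty\|\eta\|_\infty\sum_{i,j}|d_{ij}|$, the quantity $\sup$ over partitions of $\tfrac12\sum_{i,j}|d_{ij}|$ is the absolute-regularity ($\beta$-mixing) coefficient, not $\alpha(\mathcal F,\mathcal G)$ — so this route proves the inequality only under $\beta$-mixing, which is strictly stronger and is precisely the distinction this paper is at pains to avoid (its Propositions are stated under strong mixing). Obtaining the $\alpha$-coefficient with a dimension-free constant is the real content of Dehling's lemma and requires an extra idea: for instance, a duality construction that replaces the bivariate signs by $H$-valued, norm-one, $\mathcal F$- (resp.\ $\mathcal G$-) measurable variables of the form $\zeta=\sum_i u_i 1_{A_i}$ with dual unit vectors $u_i$, or a Grothendieck-type factorization reducing $\sum_{i,j}\langle h_i,g_j\rangle d_{ij}$ to $\sup_{s_i,t_j\in\{\pm1\}}\sum_{i,j}s_i t_j d_{ij}\le 4\alpha$. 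Your claim that the constant 10 "comes from the two-sided sign handling" is asserted without such a mechanism. By contrast, the second half of your sketch (truncation at levels $M_\xi,M_\eta$, Cauchy--Schwarz plus H\"older/Markov on the tails, and balancing the three terms using $r^{-1}+s^{-1}+t^{-1}=1$) is the standard Davydov argument and would indeed deliver the $\alpha^{1/t}\|\xi\|_r\|\eta\|_s$ bound once the bounded-case inequality is in hand.
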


\begin{lemma}\label{lemma:dehling_3.2}[Lemma 3.2. of \cite{dehling1983limit}]
     Let $\left\{\xi_v, v \geqq 1\right\}$ be a weakly stationary sequence of $H$-valued random variables centered at expectations and with $(2+\delta)$-th moments uniformly bounded where $0<\delta \leq 1$. If \(\sum_{m\geq 1} \alpha(m)^{\delta/2+\delta}<\infty\), then
$$
\bE\left\|\sum_{v=a+1}^{a+n} \xi_v\right\|^2=\sigma^2 n+o(n),
$$
where
$$
\sigma^2=\bE\left\|\xi_1\right\|^2+2 \sum_{v \geqq 2} \bE\langle\xi_1, \xi_v\rangle.
$$
\end{lemma}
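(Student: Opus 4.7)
The plan is to expand the squared Hilbert norm into covariances, invoke weak stationarity to reduce to a one-parameter covariance sequence, control the tail of that sequence via the mixing inequality (Lemma~\ref{lemma:dehling_3.1}), and conclude with a dominated-convergence/Ces\`aro-type argument.

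First, letting $S_n = \sum_{v=a+1}^{a+n}\xi_v$, write
\begin{equation*}
\bE\|S_n\|^2 \;=\; \sum_{v,w=a+1}^{a+n}\bE\langle\xi_v,\xi_w\rangle.
\end{equation*}
Weak stationarity (together with $\bE\xi_v=0$) implies that $\gamma(k):=\bE\langle\xi_1,\xi_{1+k}\rangle$ is well defined and $\bE\langle\xi_v,\xi_w\rangle=\gamma(|v-w|)$. Grouping diagonal and off-diagonal terms,
\begin{equation*}
\bE\|S_n\|^2 \;=\; n\gamma(0) \;+\; 2\sum_{k=1}^{n-1}(n-k)\gamma(k).
\end{equation*}

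Next I would bound the covariances. Apply Lemma~\ref{lemma:dehling_3.1} with $\xi=\xi_1$, $\eta=\xi_{1+k}$, and exponents $r=s=2+\delta$, $t=(2+\delta)/\delta$ (which satisfy $r^{-1}+s^{-1}+t^{-1}=1$). Since $\bE\xi_v=0$, the centering term vanishes and, using the uniform moment bound $\|\xi_v\|_{2+\delta}\le M$,
\begin{equation*}
|\gamma(k)| \;\le\; 15\,\alpha(k)^{\delta/(2+\delta)}\,\|\xi_1\|_{2+\delta}\,\|\xi_{1+k}\|_{2+\delta} \;\le\; 15\,M^2\,\alpha(k)^{\delta/(2+\delta)}.
\end{equation*}
The mixing assumption $\sum_{k\ge 1}\alpha(k)^{\delta/(2+\delta)}<\infty$ then gives $\sum_{k\ge 1}|\gamma(k)|<\infty$, so $\sigma^2=\gamma(0)+2\sum_{k\ge 1}\gamma(k)$ is finite and well defined.

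To finish, divide by $n$:
\begin{equation*}
\frac{1}{n}\bE\|S_n\|^2 \;=\; \gamma(0) \;+\; 2\sum_{k=1}^{n-1}\Bigl(1-\tfrac{k}{n}\Bigr)\gamma(k).
\end{equation*}
Each term in the series converges to $\gamma(k)$ as $n\to\infty$, and the summands are dominated in absolute value by $|\gamma(k)|$, which is summable. Dominated convergence therefore yields $n^{-1}\bE\|S_n\|^2\to\gamma(0)+2\sum_{k\ge 1}\gamma(k)=\sigma^2$, equivalently $\bE\|S_n\|^2=\sigma^2 n+o(n)$, as claimed. (Note the bound does not depend on the starting index $a$, so the statement is uniform in $a$.)

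The only step that requires genuine care is the second one: the covariance bound must be the Hilbert-valued version, since the ordinary Ibragimov inequality is stated for scalar variables. Fortunately Lemma~\ref{lemma:dehling_3.1} is precisely this extension, so once its applicability is justified the rest is bookkeeping and a standard Ces\`aro argument. The assumption $\delta\le 1$ plays no role beyond ensuring that $(2+\delta)$-th moments exist; any $\delta>0$ such that $\sum\alpha(k)^{\delta/(2+\delta)}<\infty$ would suffice.
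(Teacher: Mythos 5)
Correct. The paper itself gives no proof of this lemma (it is quoted as Lemma 3.2 of Dehling, 1983), and your argument is exactly the standard one underlying that citation: expand $\bE\|S_n\|^2$ using weak stationarity so only $\gamma(k)=\bE\langle\xi_1,\xi_{1+k}\rangle$ enters (which also gives the uniformity in $a$), bound $|\gamma(k)|\le 15 M^2\alpha(k)^{\delta/(2+\delta)}$ via the Hilbert-valued covariance inequality of Lemma~\ref{lemma:dehling_3.1} with $r=s=2+\delta$, $t=(2+\delta)/\delta$, and finish with summability plus the Ces\`aro/dominated-convergence step, so nothing is missing.
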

\begin{lemma}\label{lemma:asymp_negligble_process}
    Let $M_n^*$ and $M_n$ be random elements in $C_H[0,1]$ and $D_H[0,1]$ , respectively, defined on a common probability space. Suppose $M_n^*$ converges weakly to $M$ in $C_H[0,1]$. Moreover, suppose the following condition holds:
    \[\left\|M^*_n-M_n\right\|_{\infty}\overset{\bP}{\to}0.\]
    Then, $M_n$ converges weakly to $M$ in $D_H[0,1]$.
\end{lemma}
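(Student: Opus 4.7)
The plan is to reduce this to the classical converging-together (Slutsky) lemma in the metric space $(D_H[0,1], \rho)$, where $\rho$ is the Skorokhod metric. The ingredients are simple once one observes that the uniform metric dominates the Skorokhod metric, so convergence in the former transfers to convergence in the latter both for the weak limit of $M_n^*$ and for the small perturbation $M_n^* - M_n$.

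First I would note that for any $f, g \in D_H[0,1]$, taking $\lambda = \mathrm{id}$ in the infimum defining $\rho$ gives the bound $\rho(f,g) \leq \|f - g\|_{\infty}$. Combined with the hypothesis $\|M_n^* - M_n\|_{\infty} \overset{\bP}{\to} 0$, this immediately yields $\rho(M_n^*, M_n) \overset{\bP}{\to} 0$. So the two sequences are asymptotically equivalent in the Skorokhod metric.

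Next I would upgrade the weak convergence $M_n^* \leadsto M$ from $(C_H[0,1], \|\cdot\|_{\infty})$ to $(D_H[0,1], \rho)$. Take any bounded $\rho$-continuous $\varphi: D_H[0,1] \to \mathbb{R}$. Because $\rho \leq \|\cdot\|_{\infty}$, every $\rho$-open set restricted to $C_H[0,1]$ is open in the uniform topology; hence the restriction $\varphi|_{C_H[0,1]}$ is bounded and $\|\cdot\|_{\infty}$-continuous. Since the limit $M$ takes values a.s. in $C_H[0,1]$ and the prelimit $M_n^*$ also lies in $C_H[0,1]$, the assumed weak convergence yields $\mathbb{E}\varphi(M_n^*) \to \mathbb{E}\varphi(M)$. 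As $\varphi$ was arbitrary, this shows $M_n^* \leadsto M$ in $D_H[0,1]$.

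Finally, I would invoke the converging-together lemma in the metric space $(D_H[0,1], \rho)$: if $M_n^* \leadsto M$ and $\rho(M_n^*, M_n) \overset{\bP}{\to} 0$, then $M_n \leadsto M$. The only subtlety is that the standard statement requires the limit $M$ to be a tight Borel random element concentrated on a separable subspace; this is satisfied since $M \in C_H[0,1]$ almost surely and $C_H[0,1]$ is separable whenever $H$ is (which is the case here, as $H = \mathcal{H}(K)$ is separable by the arguments in Section~\ref{subsection:review_rkhs}). The main obstacle is thus this measurability/separability bookkeeping rather than any substantive probabilistic content; the rest of the argument is purely topological and follows the same pattern as the classical Slutsky-type result of Billingsley in $D[0,1]$, transplanted to the Hilbert-valued setting.
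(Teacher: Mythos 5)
Your proposal is correct and follows essentially the same route as the paper: bound the Skorokhod metric by the uniform metric (taking $\lambda=\mathrm{id}$) to get $\rho(M_n^*,M_n)\overset{\bP}{\to}0$, then apply the converging-together (Slutsky-type) lemma in $(D_H[0,1],\rho)$, which the paper invokes as Theorem 18.10 of \cite{van2000asymptotic}. The only difference is that you spell out explicitly, via bounded $\rho$-continuous functionals, why $M_n^*\leadsto M$ in $C_H[0,1]$ transfers to $D_H[0,1]$ — a step the paper leaves implicit in the citation — so your write-up is a slightly more detailed version of the same argument.
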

\begin{proof}
    Since $C_H[0,1]\subset D_H[0,1]$, it follows that $M\in D_H[0,1]$. Note that by the property of the Skorokhod metric,
    \[\rho(M^*_n,M_n)\leq \left\|M^*_n-M_n\right\|_{\infty}\overset{\bP}{\to}0.\]
    Thus, the proof follows from Theorem 18.10 of \cite{van2000asymptotic}.
\end{proof}
\end{appendix}

\end{document}